\documentclass[a4paper,onecolumn,11pt,accepted=2020-05-13]{quantumarticle}
\pdfoutput=1
\usepackage[utf8]{inputenc}
\usepackage[english]{babel}
\usepackage[T1]{fontenc}
\usepackage{amsmath}
\usepackage{hyperref}

\usepackage{cite}
\usepackage{multirow}
\usepackage{physics}
\usepackage{color}
\usepackage{ulem}
\usepackage{enumerate}
\usepackage{bm, bbold}

\usepackage[caption=false]{subfig}

\usepackage[numbers]{natbib}
\usepackage{graphicx, xcolor}
\usepackage{braket}
\DeclareMathOperator{\lcm}{lcm}
\usepackage{amsthm}
\usepackage{amsfonts, amssymb}
\usepackage{bbm}
\newtheorem{prep}{Proposition}
\newtheorem{corol}{Corollary}

\begin{document}

\title{Energy upper bound for structurally-stable $N$-passive states}

\author{Raffaele Salvia}
\affiliation{Scuola Normale Superiore and University of Pisa, I-56127 Pisa, Italy}
\orcid{0000-0002-0006-7630}
\email{raffaele.salvia@sns.it}
\author{Vittorio Giovannetti}
\affiliation{NEST, Scuola Normale Superiore and Istituto Nanoscienze-CNR, I-56126 Pisa, Italy}

\maketitle

\begin{abstract}
		Passive states are special configurations of a quantum system which exhibit no energy decrement
at the end of an arbitrary cyclic driving of the  model Hamiltonian. 
When applied to  an increasing number of copies of the initial density matrix,
the requirement of passivity  induces a hierarchical ordering 
which, in the asymptotic  limit of infinitely many elements,  pinpoints 
ground states and thermal Gibbs states. In particular,  for large values of $N$ the energy content of a $N$-passive state which is also structurally stable (i.e. capable to maintain its passivity status under small perturbations of the model Hamiltonian), 
is expected to be close to the corresponding value of the thermal Gibbs state
which 
has the same entropy. In the present paper we provide a quantitative 
assessment of this fact, by producing an upper bound for  
the energy of an arbitrary $N$-passive, structurally stable state  which only depends on the spectral properties of the Hamiltonian of the system. 
We also show the condition under which our inequality can be saturated. A generalization of the bound is finally 
presented that, for sufficiently large $N$, applies to states which  are $N$-passive, but not necessarily structurally stable. 
\end{abstract}

	\section{Introduction}
One of the most striking differences between classical and quantum thermodynamics is that, while the properties of macroscopic system in equilibrium can be described with a small number of degrees of freedom, the unitary evolution prescribed by the laws of quantum mechanics has as many conserved quantities as the dimension of the Hilbert space \cite{ALICKIREV}. A closed quantum system can not ``lose its memory'' and relax to a thermal state. One of the consequences of this fact is that the amount of work that we can extract coupling a quantum system to a thermal bath (the free energy of the system) is, in general, larger than the work that we can extract from the system alone, called the \emph{ergotropy} of the system~\cite{PASSIVE3}. There are states of a quantum system which are not in thermal equilibrium, but that are nonetheless \emph{passive states}, in the sense that their energy can not decrease under unitary evolution.

The passive states of a quantum system $A$ 
consist of all density matrices that commute with the
system Hamiltonian $H$ and have no population inversions~\cite{PASSIVE1,PASSIVE2,ALICKIREV}. 
Originally  introduced in Ref.~\cite{PASSIVE1} by linking them to the 
Kubo-Martin-Schwinger   thermal stability condition~\cite{KUBO,MARSCH,BATTY},
passive states exhibit zero ergotropy, i.e. 
zero maximum mean energy decrement 
when forcing the system 
to undergo an unitary evolution induced by cyclic external modulations of $H$.
In the Kelvin-Planck formulation of the second law of thermodynamics, 
ergotropy
can be interpreted as  the maximum work that  can be extracted from a system~\cite{PASSIVE3,PASSIVE4},  suggesting the identification of passive states 
as a primitive form of thermal equilibrium. In view of this property, ergotropy and passive states play  a key role 
in quantum thermodynamics~\cite{REVQTHERMO,ALICKIREV}, where they help in clarifying several aspects of the theory,
spanning from foundational issues at the interplay between physics and information~\cite{LEWENSTEINrew,JANET,SPARC,BRANDAO,SKRZ,ACIN1,ACIN2,ERGO1,ERGO_BIP,BERA1}, to more practical issues, such as 
the characterisation of optimal thermodynamical cycles~\cite{ALICKIREV,KURI1,KURI2,KURI3,FRIIS,BINDER,ACIN} 
and the charging efficiency of quantum batteries models~\cite{BATTERYREV,PASSIVE4,BATTERY1,BATTERY2,BATTERY3,BATTERY4}. 
Passive states have been also identified as  
optimisers 
for several entropic functionals which are relevant in the theory of quantum communication~\cite{GIACOMO1,GIACOMO2,GIACOMO3},
and as suitable generalizations of the
vacuum state for  quantum field theory in curved space-time models~\cite{CURVED}.

A natural generalization of passivity can be obtained by considering  multiple copies of the original system~\cite{PASSIVE1,PASSIVE2}.
In particular, a density matrix  $\rho$  of $A$ is said to be $N$-passive with respect to the
local Hamiltonian $H$ when, 
given $N$ identical copies of it, one has that $\rho^{\otimes N}$ is passive when considering 
as joint Hamiltonian of the compound the sum of $N$ copies of $H$.
It turns out that $N$-passive
states are also $N'$-passive for all $N'\leq N$, the opposite inclusion not be granted in general, inducing a strict hierarchical ordering 
on the associated sets.
In this framework thermal Gibbs states 
share the exclusive property of being the only density matrices of the system 
which are {\it completely passive},  i.e.  passive at all order $N$, and
also being structurally stable~\cite{PASSIVE1,PASSIVE2,PASSIVE4,SKRZ}. Structural stability ensures  that the state under consideration will
remain passive even when the system Hamiltonian undergoes small perturbations. This condition is naturally granted to all passive configurations when $H$ has a non-degenerate
spectrum, but becomes a non trivial requirement in the presence of degeneracies. 
A direct consequence of the above mentioned property of Gibbs states is that, for $N$ large enough, the mean energy  $E(\rho;H)$ of a structurally-stable, $N$-passive density matrix $\rho$ must approach the mean energy  $E_{\beta(\rho)}(H)$ of the Gibbs configuration $\omega_{\beta(\rho)}$ that has the same entropy of $\rho$ -- the latter being always
a lower bound for $E(\rho;H)$, i.e. 
$E_{\beta}(H)\leq E(\rho,H)$.
Aim of the present work is to investigate how the gap between $E(\rho;H)$  and $E_{\beta}(H)$ reduces as
$N$ increases. For this purpose we prove an inequality which provides an upper bound for  $E(\rho;H)$ in term of $E_{\beta}(H)$, via a multiplicative factor which only depends upon the spectral properties of the Hamiltonian,
and which converges asymptotically to 1 as $N$ increases. 
This allows us to provide a quantitative estimation of the way in which a quantum effect (the gap between ergotropy and free energy) decreases when the size of the system (quantified by the number $N$ of copies) increases, and disappears in the macroscopic limit $N \to \infty$.

Our work can be seen as a generalization of Ref.~\cite{ACIN}, which charachterises the most energetic 1-passive states, to the case of $N$-passive states.
Incidentally, following the same argument presented in Ref.~\cite{ACIN}, our findings can also be used to 
give a lower bound for the work that can be extracted from a system, hence providing a practical tool to estimate the usefulness of a given state from the perspective of average work extraction.
Furthermore, our results could be used to derive lower bounds for the Carnot efficiency of fully quantized heat engines \cite{KURI2}, in the cases where the quantized piston is modelled as a product of $N$ identical systems. 

We stress that the derivation presented here relies heavily on the structural stability property of the input states; if we lift
such condition, the bounds do not apply in general. However, for sufficiently large values of $N$, we also give  
a variant of the inequality which remains true for all $N$-passive states (not
necessarily structurally stable) -- 
see Table~\ref{TABLE1} for a summary of the results of this paper.

	\begin{table*}[t!]
		\centering
		\resizebox{1.0\textwidth}{!}{
		\begin{tabular}{|c|c|c|c|}
	\hline
	Dimension & Spectrum of $H$ & Set of states  &   Energy bound   \\ \hline \hline
	$d=2$ & two-level  & $\rho\in {\mathfrak P}_H^{(1)}$     &    $E(\rho; H) = E_{\beta(\rho)}(H)$ \\ \hline 
	$d\geq 3$ & two-level &  $\rho\in 
	\bar{\mathfrak P}_H^{(1,1)}$  &   $E(\rho; H) = E_{\beta(\rho)}(H)$  \\ 
	&&  $\rho\in {{\mathfrak P}_H^{(N)}}^*$  &   $E(\rho;H)<  \frac{N-1}{N-2} E_{\beta(\rho)}(H)  +  \frac{(d_0 - 1)Z_{\beta(\rho)}^{-1}\epsilon_{\max}}{N-2}  +  \mathcal{O} \left( \frac{1}{N^2} \right)$  \\\hline  
	$d\geq 3$  & beyond two-level
	& \multirow{ 2}{*}{$\rho\in {\mathfrak P}_H^{(N)}$} &  \multirow{ 2}{*}{$E(\rho;H) 
		\leq E_{{\beta}({\rho})}(H) \min^{\star} \left\{ \left( 1 - \tfrac{R(H)}{N} \right)^{-1},e^{\beta({\rho}) \epsilon_{\max} \frac{R(H)}{N}} \right\}$} 
	\\ 
	& non-degerate & & \\
	\hline 
	$d\geq 3$ &beyond two-level   &  $\rho\in \bar{\mathfrak P}_H^{(N,1)}$   & $E(\rho;H) 
	\leq E_{{\beta}({\rho})}(H) \min^{\star}\left\{ \left( 1 - \tfrac{R(H)}{N} \right)^{-1},e^{\beta({\rho}) \epsilon_{\max} \frac{R(H)}{N}} \right\}$ \\ 
	& degenerate & $\rho\in {\mathfrak P}_H^{(N)}$, $\lambda^{\min}(0)  \geq Z_{\beta(\rho)}^{-1}$  &  $E(\rho;H) 
	\leq E_{{\beta}({\rho})}(H) e^{\beta({\rho}) \epsilon_{\max} \frac{R(H)}{N}} $ 
	\\ 
	&  &  $\rho\in {{\mathfrak P}_H^{(N)}}^*$, $\lambda^{\min}(0)  < Z_{\beta(\rho)}^{-1}$ &  $E(\rho;H)<  \frac{N}{N-2}\left[ 1 + \frac{u(\rho)R(H)}{N} \right] E_{\beta(\rho)}(H) +  \frac{ \beta^{-1}(\rho) +(d_0 - 1)Z_{\beta(\rho)}^{-1}\epsilon_{\max}}{N-2}  +  \mathcal{O} \left( \frac{1}{N^2} \right)$ 
	\\  \hline 
	\multirow{ 2}{*}{$d\geq 3$} &beyond two-level   &  $\rho\in {\mathfrak P}_H^{(N)}$   & \multirow{ 2}{*}{ $E(\rho; H) \leq \epsilon_{\max} (d-d_0) \exp\left[ -N \ln d_0 + (N-1) S(\rho)) \right]$ } \\
	& degenerate & $S(\rho) < \ln d_0$ & 
	\\  \hline \hline 
\end{tabular}
}
		\caption{Brief recapitulation of the relations between the mean energies $E(\rho;H)$ and $ E_{{\beta}({\rho})}(H)$ for $N$-passive  (possibly
	$1$-structurally stable) states $\rho$, and their  corresponding Gibbs isoentropic counterparts. 
	Following the notation introduced in Sec.~\ref{subsection1}, ${\mathfrak P}_H^{(N)}$ denotes the space of $N$-passive states while
	$\bar{\mathfrak P}_H^{(N,1)}$ the space of $N$-passive, $1$-structurally stable states; ${{\mathfrak P}_H^{(N)}}^*$ instead
	stands for the set of passive states with entropy larger than or equal to $\ln d_0$, where $d_0$ is the degeneracy of the
	zero-energy ground state of $H$. Two-level Hamiltonian 
	$H$ are those which, besides the zero-energy ground state level, have only another energy eigenvalue which is strictly positive.
	Finally  $\epsilon_{\max}$ is the maximum eigenvalue of $H$;  $R(H)$ is the spectral quantity defined in Eq.~(\ref{DEFR});
	$u(\rho)=\min\{1, \beta({\rho})\epsilon_{\max}\}$; 
	$\lambda^{\min}(0)$ is the minimum population value of $\rho$ corresponding to the zero-ground state energy (see Eq.~(\ref{DEFLAMBDAMIN})), while
	$Z_{\beta(\rho)}^{-1}$ is the associated population of the Gibbs counterpart. The superscript $\star$ on the $\min$ symbol
	means that the term $( 1 - \tfrac{R(H)}{N} )^{-1}$ only contribute for $R(H)<N$.
	We remind that for all $\rho$, irrespectively from 
	their passivity or non-passivity status, one always has  
	$E(\rho;H)\geq E_{{\beta}({\rho})}(H)$, whenever the isoentropic Gibbs counterpart of $\rho$ is definable -- a condition that applies for the cases treated in the table, except for the one in the last row, where we instead give an inequality as a function of the entropy $S(\rho)$.\label{TABLE1}} 
	\end{table*}

The manuscript is organized as follows: 
in Sec.~\ref{subsection1} we introduce the notation, set the theoretical framework
that will be used in the remaining part of the paper, and present some preliminary
observations.  Sec.~\ref{sec:inequality} contains the main result of the work: here we
derive our upper bound for the energy of $N$-passive, structurally stable states and 
discuss its achievability. Sec.~\ref{sec:struct} presents instead a generalization of the
bound for $N$-passive states which are not necessarily structurally stable, which applies
in the asymptotic limit of sufficiently large $N$.  
In Sec.~\ref{APPEC} we present finally some considerations on 
the case of Hamiltonian characterised by energy levels gaps which are commensurable. 
Conclusions are drawn 
in Sec.~\ref{sec:con}. The manuscript contains also few appendixes which provide technical support 
for the derivation of the main results (in particular in Appendix~\ref{PROFGI} we give a new proof of the fact that Gibbs states and
ground states are the only density matrices which are completely passive).

	\section{Definitions and preliminary observations}\label{subsection1} 

Let $A$ be a  quantum system described by a  Hilbert space ${\cal H}$ of finite dimension $d$ and 
characterised by an assigned Hamiltonian 
\begin{eqnarray}\label{HAM} 
H:=\sum_{j=0}^{d-1} \epsilon_j |\epsilon_j\rangle\langle \epsilon_j|\;,\end{eqnarray} 
with 
eigenvectors $\{|\epsilon_j\rangle\}_j$ and associated eigenvalues $\{ \epsilon_j\}_j$ which we assume to be organized in non-decreasing order, i.e. 
\begin{eqnarray}  \label{ORD1} 
\epsilon_{j+1}& \geq& \epsilon_j \;,\qquad  \forall j\in \{ 0,\cdots, d-2\}  \;.
\end{eqnarray}  
Notice that in the writing of (\ref{ORD1}) we are explicitly allowing for possible degeneracies in the spectrum of $H$. In particular for future reference 
we indicate with ${d_0}$ the degeneracy of its ground level (meaning that $\epsilon_i = \epsilon_0$ for all $ i \in\{ 0,1 ,\cdots,  {d_0}-1\}$), and use the symbol 
${\cal H}_G$ to represent the associated eigenspace, i.e.
\begin{eqnarray} 
{\cal H}_G: = \mbox{Span} \{  |\epsilon_0\rangle, |\epsilon_1\rangle, \cdots ,
|\epsilon_{{d_0}-1}\rangle\}\;.\label{SPAN} \end{eqnarray} 
Given then an element $\rho$ of the  density matrix set $\mathfrak{S}$ of $A$
we now define its ergotropy  as the functional   
\begin{eqnarray}  
{\cal E}^{(1)}(\rho;H) &:=& \max_U  \{ E(\rho; H) - E(U\rho U^\dag; H)\} \;, \label{ERGO}  
\label{EED1} 
\end{eqnarray}
where $E(\rho; H) : =  \mbox{Tr}[ \rho H]$ represents the mean energy of the state and 
where the maximization is performed over all possible unitary transformations $U$ acting on $A$~\cite{PASSIVE1,PASSIVE2,PASSIVE3,PASSIVE4}.  
The definition~(\ref{ERGO})   is explicitly invariant under rigid shifts  of the energy spectrum: accordingly without loss of generality, hereafter we shall restrict ourselves to the case of positive
semidefinite Hamiltonian $H$, with zero ground state energy value, i.e. 
\begin{eqnarray}
H \geq 0 \;, \qquad \epsilon_i =0 \quad \forall  i \in\{ 0, 1, \cdots {d_0}-1\} \;. \label{dde} 
\end{eqnarray} 
By construction ${\cal E}^{(1)}(\rho;H)$  is a non-negative quantity which can be explicitly 
computed by  solving the 
maximization with respect to $U$ 
(see Appendix~\ref{APPA} for details on this). 
In the above theoretical framework the set ${\mathfrak P}_H^{(1)}$ of passive states can now  be identified 
as the  subset  of  ${\mathfrak S}$  
characterised by the   
property 
of 
having  
zero ergotropy value, 
\begin{eqnarray} 
{\mathfrak P}_H^{(1)}  &: =& \Big\{ \rho\in {\mathfrak S} : {\cal E}^{(1)}(\rho; H) = 0 \Big\}\;. 
\end{eqnarray} 
It turns out that
every passive state $\rho$ must coincide with one of its  associated passive counterparts~\cite{PASSIVE1,PASSIVE2}
implying that 
${\mathfrak P}_H^{(1)}$ is exclusively made of density matrices which verify the following constraints: 
\begin{itemize}
	\item[i)] $\rho$ is diagonal in the energy eigenbasis $\{ |\epsilon_j\rangle\}_j$, i.e. it  can be expressed as 
	\begin{eqnarray} \label{RHOpas} \rho=\sum_{j=0}^{d-1} \lambda_j |\epsilon_j\rangle\langle \epsilon_j|\;;\end{eqnarray}
	\item[ii)] $\rho$ has no population inversion, i.e. its eigenvalues $\{ \lambda_j\}_j$ fulfil the following ordering \begin{eqnarray} 
	\epsilon_i > \epsilon_j \qquad \Longrightarrow \qquad \lambda_{i} \leq \lambda_{j} \;. \label{ORD2} 
	\end{eqnarray} 
\end{itemize} 
A special subset $\bar{\mathfrak P}_H^{(1,1)}$ of ${\mathfrak P}_H^{(1)}$ is provided by the passive states which are structurally stable, i.e. passive density matrices which besides obeying i) and ii) also satisfy the 
extra property 
\begin{eqnarray} 
\mbox{iii)} \qquad   \epsilon_i  =   \epsilon_{j} \qquad \Longrightarrow \qquad \lambda_{i} = \lambda_{j} \;, \qquad \qquad 
\qquad \label{EQUAL} 
\end{eqnarray} 
which assigns identical population values to energy eigenvalues which are degenerate~\cite{PASSIVE1,PASSIVE2}. One can easily verify that  when $H$ has a not-degenerate spectrum (i.e. when (\ref{ORD1}) holds true with strict inequalities
for all $j$), 
$\bar{\mathfrak P}_H^{(1,1)}$ 
coincides with ${\mathfrak P}_H^{(1)}$, while otherwise it constitutes a proper subset of the latter. 
Finally both ${\mathfrak P}_H^{(1)}$ and   $\bar{\mathfrak P}_H^{(1,1)}$ can be shown to be closed under convex convolution.

In a similar fashion, given $N\geq 1$ integer, we can now introduce the set ${\mathfrak P}_H^{(N)}$ of $N$-ordered passive (or simply $N$-passive)
states. Specifically,   given 
\begin{eqnarray} H^{(N)} : =\sum_{\ell=1}^N  H^{(\ell)}, \end{eqnarray} 
the total Hamiltonian of the joint system,
$H^{(\ell)}$ being the single system Hamiltonian acting on the $\ell$-th copy 
we define 
\begin{eqnarray} 
{\mathfrak P}_H^{(N)}  &: =& \Big\{ \rho\in {\mathfrak S} : {\cal E}^{(N)}(\rho; H) = 0 \Big\}\;,
\end{eqnarray} 
with ${\cal E}^{(N)}(\rho; H)$ the $N$-order ergotropy  functional 
\begin{equation} \label{NNN} 
{\cal E}^{(N)}(\rho; H):= 
\max_{U}  \{ E(\rho^{\otimes N}; H^{(N)}) - E(U\rho^{\otimes N} U^\dag; H^{(N)})\}
\;, \end{equation}  
the maximum being now evaluated with respect to  all the (possibly non-local) unitaries of the joint system.
Using the same argument that
led to the necessary and sufficient conditions i) and ii) given above, 
it has been shown that  $\rho\in {\mathfrak P}_H^{(N)}$ 
if and only if the following conditions hold true:  
\begin{itemize}
	\item[\it i)]  $\rho$ is diagonal in the energy eigenbasis, i.e. must be of the form~(\ref{RHOpas});
	\item[ \it ii)] its eigenvalues fulfil  the requirement
	\begin{equation}
	\label{eq1}
	\sum_{i=0}^{d-1}  n_i \epsilon_i >   \sum_{j=0}^{d-1}  m_j  \epsilon_j \quad \Longrightarrow \quad  
	\prod_{i=0}^{d-1}   \lambda_i^{n_i}  \leq    \prod_{j=0}^{d-1}  \lambda_j^{m_j} , 
	\end{equation}
	for all couples of the population sets $I_N:= \{ n_0, n_1, \cdots, n_{d-1}\}$, $J_N:=  \{ m_0,m_1, \cdots, m_{d-1}\}$  formed by $d$ non-negative integers
	that sum up to $N$. 
\end{itemize} 
Notice that for $N=1$ Eq.~(\ref{eq1}) reduces to
the absence of population inversion (\ref{ORD2}), and that for consistency 
in the above expression the indeterminate form $0^{0}$ has to be interpreted equal to~$1$, i.e.
explicitly 
\begin{eqnarray} 
(\lambda=0)^{(n=0)} = 1 \;. \label{RULE1} 
\end{eqnarray} 
By close inspection of the above definitions, it follows that   $N$-passive
states are also $N'$-passive for all $N'\leq N$. The opposite inclusion however is not necessarily
granted  implying a specific  ordering 
on the associated sets, i.e. 
\begin{eqnarray} {\mathfrak P}_H^{(N)}  \subseteq {\mathfrak P}_H^{(N')}
\subseteq {\mathfrak P}_H^{(1)} \;, \quad \forall N' \leq N. \label{ORDE1} 
\end{eqnarray} 

In a similar fashion of what done in the case of ${\mathfrak P}_H^{(1)}$, also for ${\mathfrak P}_H^{(N)}$
we can then introduce the notion structural stability. Specifically, for any given integer $k\leq N$ we define 
the subset $\bar{\mathfrak P}_H^{(N,k)}$ of the $N$-passive, $k$-structurally stable density matrices 
as the one formed by the special $N$-passive elements  
which, besides the condition {\it i)} and {\it ii)} detailed above  also obey to the extra requirement
\begin{equation}
\label{eq11}
\mbox{\it iii)} \quad \sum_{i=0}^{d-1}  n_i \epsilon_i =   \sum_{j=0}^{d-1}  m_j  \epsilon_j \quad \Longrightarrow \quad  
\prod_{i=0}^{d-1}   \lambda_i^{n_i}  =    \prod_{j=0}^{d-1}  \lambda_j^{m_j}\;,
\end{equation}
\begin{itemize}
	\item[] for all the couples of the population sets $I_k:= \{ n_0, n_1, \cdots, n_{d-1}\}$, $J_k:=  \{ m_0,m_1, \cdots, m_{d-1}\}$  formed by $d$ non-negative integers
	that sum up to $k$;
\end{itemize}  
(again in writing~(\ref{eq11})  
we assume the convention (\ref{RULE1}); notice also that for $k=N=1$ 
this expression reduces to (\ref{EQUAL})). 

Hierarchical rules analogous 
to (\ref{ORDE1}) 
hold true also for the sets $\bar{\mathfrak  P}^{(N,k)}$. In this case,  one
can easily show that for every $N\geq k\geq k'\geq 1$
\begin{equation} \label{ORDE20} \bar{\mathfrak P}_H^{(N,k)}\subseteq \bar{\mathfrak P}_H^{(N,k')}\subseteq \bar{\mathfrak P}_H^{(N,1)}  \subseteq {\mathfrak P}_H^{(N)}
\;,\end{equation} 
and also that 
\begin{eqnarray} \label{ORDE21} \bar{\mathfrak P}_H^{(N,k')}\subseteq \bar{\mathfrak P}_H^{(N',k')}\;,  \quad \forall N\geq N'\geq k'\geq 1\;,\end{eqnarray} 
while in general for  $N\geq N'\geq 1$ it is not true that    
${\mathfrak P}_H^{(N)}$ is contained in $\bar{\mathfrak P}_H^{(N',1)}$.
In the special case  of $1$-structurally stable configurations (i.e. $k'=1$), Eq.~(\ref{ORDE21}) implies 
\begin{eqnarray} \label{ORDE2} \bar{\mathfrak P}_H^{(N,1)}\subseteq \bar{\mathfrak P}_H^{(N',1)}
\subseteq \bar{\mathfrak P}_H^{(1,1)}\;,  \quad \forall N\geq N'\geq 1\;,\end{eqnarray} 
which, for all $N\geq 1$, allows us to express $\bar{\mathfrak P}_H^{(N,1)}$ as the inclusion between 
${\mathfrak P}_H^{(N)}$ and $\bar{\mathfrak P}_H^{(1,1)}$, i.e. 
\begin{eqnarray} \label{EXPINCLU} \bar{\mathfrak P}_H^{(N,1)}= {\mathfrak P}_H^{(N)}
\bigcap \bar{\mathfrak P}_H^{(1,1)}\;, \end{eqnarray} 
(to verify this simply observe that on one hand, $\bar{\mathfrak P}_H^{(N,1)}$ is certainly contained in
${\mathfrak P}_H^{(N)}
\bigcap \bar{\mathfrak P}_H^{(1,1)}$ because it is a subset of both ${\mathfrak P}_H^{(N)}$
and $\bar{\mathfrak P}_H^{(1,1)}$. On the other hand  by definition all the elements of ${\mathfrak P}_H^{(N)}
\bigcap \bar{\mathfrak P}_H^{(1,1)}$ are $N$-passive and $1$-structurally stable, hence
elements of $\bar{\mathfrak P}_H^{(N,1)}$). 
In the special case of $H$ which is non-degenerate (i.e. when (\ref{ORD1}) holds true with strict inequalities
for all $j$) we have already noticed that ${\mathfrak P}_H^{(1)}=  \bar{\mathfrak P}_H^{(1,1)}$, accordingly exploiting (\ref{ORDE1}) it follows that 
Eq.~(\ref{EXPINCLU}) leads to the conclusion that
\begin{equation} \label{EXPINCLUHnondeg} 
\mbox{($H$ = non-deg)} \quad \Longrightarrow \quad \bar{\mathfrak P}_H^{(N,1)}= {\mathfrak P}_H^{(N)}\;. 
\end{equation}

We are finally in the position to give the  definition of {\it completely passive} (CP) states: 
these are the density matrices of $A$  which are passive at all order $N$, i.e. which are diagonal 
in the energy eigen-basis and fulfil the constraint~(\ref{eq1}) at all order. The set of completely passive can be 
hence identified with the intersection of all the ${\mathfrak{P}_H^{(N)}}$, i.e. 
\begin{eqnarray} 
\bigcap_{N \ge 1} \label{CPdef} 
{\mathfrak{P}_H^{(N)}} = {\mathfrak{P}_H^{(\infty)}} \;,\end{eqnarray} 
the last identity being a trivial consequence of the ordering 
(\ref{ORDE1}).
In a similar fashion we can also introduce the definition of  CP states which are also
1-structurally stable (CP1SS), as the intersection of all the sets ${\mathfrak{P}_H^{(N,1)}}$, i.e. 
\begin{eqnarray} \label{CP1SSdef}
\bigcap_{N \ge 1} 
\bar{\mathfrak{P}}_H^{(N,1)} = \bar{\mathfrak{P}}_H^{(\infty,1)} \;,\end{eqnarray} 
as well as the set of CP states which are
structurally stable at all orders (CPCSS) which, thanks to  (\ref{ORDE20}) corresponds
to the
inclusion of all the sets $\bar{\mathfrak P}_H^{(N,N)}$, i.e. 
\begin{eqnarray} 
\bigcap_{N \ge 1} 
\bar{\mathfrak{P}}_H^{(N,N)} = \bar{\mathfrak{P}}_H^{(\infty,\infty)} \;.\end{eqnarray} 
By construction it is clear that
$\bar{\mathfrak{P}}_H^{(\infty,\infty)} $ is included in $\bar{\mathfrak{P}}_H^{(\infty,1)}$
which in turns is a subset of ${\mathfrak{P}}_H^{(\infty)}$. 	Most notably however it turn out that irrespectively from $H$, 
$\bar{\mathfrak{P}}_H^{(\infty,\infty)}$ coincides with $\bar{\mathfrak{P}}_H^{(\infty,1)}$, 
\begin{eqnarray} \label{NOTABLE} 
\bar{\mathfrak{P}}_H^{(\infty,\infty)} = \bar{\mathfrak{P}}_H^{(\infty,1)} \;,\end{eqnarray} 
both being
identified  with the set  of Gibbs states of the system. This important  result will be reviewed in the next
subsection, with a complete characterisation of ${\mathfrak{P}}_H^{(\infty)}$ in 
terms of  ground and Gibbs states. 

\subsection{Ground states and Gibbs states} \label{SUB1} 

Explicit examples of CP states are provided by the ground states 
density matrix $\rho$ which have their support inside the ground subspace ${\cal H}_G$~\cite{PASSIVE2}.
We shall use the symbol
${\mathfrak S}_H^{(G)}$ to indicate the associated subset, i.e.  
\begin{eqnarray}  
{\mathfrak S}_H^{(G)} := \{ \rho\in{\mathfrak S}: \rho \Pi_G = \Pi_G \rho=\rho\}\;,
\end{eqnarray} 
with $\Pi_G= \sum_{j=0}^{{d_0}-1} |e_i\rangle\langle e_j|$ being the projector on  ${\cal H}_G$. 
Notice also that, while ${\mathfrak S}_H^{(G)}$ is included into ${\mathfrak{P}}_H^{(\infty)}$, for ${d_0}>1$
its only element that is structurally stable is the uniform ground state mixture $\Pi_G/{d_0}$.
As a matter fact, in Ref.~\cite{PASSIVE2} it has been shown that
these  elements of ${\mathfrak S}_H^{(G)}$  are the only examples of CP states which are not CP1SS, i.e. 
\begin{eqnarray} 
\rho \in 
{{\mathfrak{P}}_H^{(\infty)}}  / {\bar{\mathfrak{P}}_H^{(\infty,1)}}  \quad \Longrightarrow \quad
\rho \in {\mathfrak S}_H^{(G)}  \;,  \label{GROUND} 
\end{eqnarray} 
or more explicitly 
\begin{eqnarray} 
{{\mathfrak{P}}_H^{(\infty)}}  / {\bar{\mathfrak{P}}_H^{(\infty,1)}}  = 
{\mathfrak S}_H^{(G)} / \left\{ \tfrac{\Pi_G}{{d_0}}\right\}   \;.  \label{GROUND0} 
\end{eqnarray} 

Closely related to ${\mathfrak S}_H^{(G)}$ is the 
set of Gibbs thermal states ${\mathfrak G}_H$.
They are  identified with the collection of density matrices
of the form 
\begin{eqnarray}\label{GIBBS} 
\omega_\beta := e^{-\beta H}/Z_\beta \;, \qquad Z_\beta: = \mbox{Tr}[ e^{-\beta H}] \;, 
\end{eqnarray}  with the parameter $\beta\geq 0$ playing the role of an effective inverse temperature, and the normalization term $Z_\beta$ being the associated partition function.  
In the infinite temperature regime, $\omega_\beta(H)$ converges to the completely mixed state of $A$, that is $\omega_0 : = \lim_{\beta \rightarrow 0} \omega_\beta = \mathbb{1} /d$. On the contrary, 
in the zero temperature limit, 
$\omega_\beta(H)$  converges to the uniform mixture supported on the ground energy eigenspace (\ref{SPAN}) of the system, namely
\begin{eqnarray} 
\omega_\infty &: =& \lim_{\beta \rightarrow \infty} \omega_\beta = {\Pi}_G/{{d_0}}  \;,  \label{LIMinf} 
\end{eqnarray} 
which, as we already mentioned, is the special element of  ${\mathfrak S}_H^{(G)}$.
Gibbs states   play a fundamental role in the study of the thermodynamic properties of $A$  since they embed the very notion
of thermal equilibrium~\cite{ALICKIREV,REVQTHERMO}.  
In particular they can be identified   by the property of
granting the minimal  value of the mean energy~$E(\rho; H)$  attainable for density matrices $\rho$ with fixed von Neumann entropy $S(\rho) := - \mbox{Tr} [ \rho \ln \rho]$, i.e. \begin{eqnarray} \label{MINEN} 
\min_{\{ \rho \in  \mathfrak{S}: S(\rho) = S_\beta\}}   E(\rho; H)= E_\beta(H) \;, 
\end{eqnarray}
with 
\begin{eqnarray}  \label{DEFNENBETA0} 
E_\beta(H) &:=& E(\omega_\beta; H) = 
-  \frac{\partial}{\partial \beta} \ln Z_\beta \;, \\  
S_\beta&: =& S(\omega_\beta) = \beta E(\omega_\beta; H) + \ln Z_\beta \;. \label{DEFENTROBETA} 
\end{eqnarray} 
It is worth remembering that both $E_\beta(H)$ and $S_\beta$ are strictly monotonically decreasing functions  of $\beta$. Thanks to this fact, there is
a one-to-one correspondence between these functionals, that ultimately leads to the following relevant expression 
\begin{eqnarray} \label{RELEVANT} 
\frac{\partial E_\beta(H)}{\partial S_\beta} = \frac{1}{\beta}\;. 
\end{eqnarray} 
We also notice  that, while $S_\beta$ can saturate the maximum entropy value available for the system $A$ (i.e. $S_{\beta=0}=\ln d$),
due to Eq.~(\ref{LIMinf}), the Gibbs state entropy functional is bound to be always larger than or equal to $\ln {d_0}$
which,
unless the Hamiltonian has a non-degenerate ground level (i.e. ${d_0}=1$),
is strictly larger than zero. Accordingly, (\ref{MINEN}) identifies the Gibbs states as minimal energy states only 
for the subclass of density matrices $\rho$ which have entropy above the $\ln {d_0}$. For those instead
which have $S(\rho) < \ln {d_0}$, the only possible lower bound on $E(\rho;H)$ is simply provided by elements of 
${\mathfrak S}_H^{(G)}$, leading to the following trivial inequality 
\begin{eqnarray} \label{MINENnew} 
\min_{\{ \rho \in  \mathfrak{S}: S(\rho) = S\}}   E(\rho; H) =0 \;,  \qquad \forall S < \ln d_0\;. 
\end{eqnarray}

As mentioned in the introductory section, an alternative way of characterising  
the Gibbs subset ${\mathfrak G}_H$ is by the observation that
the density matrices of the form~(\ref{GIBBS}) 
share the exclusive  property of being the only density matrices of $A$ which are CP and 1-structurally stable~\cite{PASSIVE1,PASSIVE2}. Furthermore, since the elements of ${\mathfrak G}_H$ 
are also structurally stable at all order, we can write 
\begin{eqnarray} \label{IDE1} 
{\mathfrak G}_H={\bar{\mathfrak{P}}_H^{(\infty,1)}}={\bar{\mathfrak{P}}_H^{(\infty,
		\infty)}} \;,  
\end{eqnarray} 
which explicitly proves the identity (\ref{NOTABLE})
anticipated at the end of the previous section. Together with~(\ref{GROUND0}), the above expression finally allows us
to conclude that 	\begin{eqnarray}  \label{GREATRES} 
{{\mathfrak{P}}_H^{(\infty)}}  = {\mathfrak G}_H \bigcup {\mathfrak S}_H^{(G)}\;.
\end{eqnarray} 
In Appendix~\ref{PROFGI} we present an alternative proof of these important identities based on a simple geometrical argument. 
Here instead we comment on the fact that the they can be simplified in two limiting cases.
The first one is  
for two dimensional systems ($d=2$). In this situation all passive states  are also structurally stable and completely passive, hence thermal, implying that the hierarchies (\ref{ORDE20}) collapse, 
so that 
Eqs.~(\ref{IDE1}) and (\ref{GREATRES}) can be replaced by 
\begin{eqnarray} (d=2)\quad \Longrightarrow  \quad     {\mathfrak{P}}_H^{(1)} =  \bar{\mathfrak{P}}_H^{(1,1)} = {\mathfrak G}_H \;, \label{easy1} 
\end{eqnarray} 
the ground state set being trivially contained in the Gibbs set, i.e. ${\mathfrak S}_H^{(G)} \subset {\mathfrak G}_H$.
A similar statement can be extended also for  $d\geq 3$ for Hamiltonians  that have a two-level spectrum, i.e. $H$ which beside the (possibly degenerate) 
zero, ground energy level are
characterised by a unique (possibly degenerate) non-zero eigenvalue. 
Indeed, in this case one can easily show that  $\bar{\mathfrak P}_H^{(1,1)}$ coincides with the Gibbs set, allowing us
to replace (\ref{IDE1}) with 
\begin{eqnarray} \left( d \geq 3, H=\mbox{two-level} \right) \quad \Longrightarrow  \quad 
\bar{\mathfrak{P}}_H^{(1,1)} = {\mathfrak G}_H \;, \label{easy2} 
\end{eqnarray} 
while (\ref{GREATRES}) remains the same.

\section{Upper bounds for the mean energy of $N$-passive, $1$-structurally stable states}
\label{sec:inequality}

Equation~(\ref{MINEN}) establishes that Gibbs states provide a natural lower bound for the mean energy value of the density matrices which have the same entropy. In particular, given a $N$-passive state $\rho \in {\mathfrak P}_H^{(N)}$
of entropy larger than or equal to $\ln {d_0}$, identifying 
the inverse temperature $\beta(\rho)\in [0, \infty]$  such that  $\omega_{\beta(\rho)}\in{\mathfrak G}_H$
has entropy $S_{\beta(\rho)}$ equal to $S(\rho)$, i.e. 
\begin{eqnarray}
S_{\beta(\rho)} &=& S(\rho) \;, \label{sameE}  
\end{eqnarray}
we can write
\begin{eqnarray} \label{EmaggiorediE0} 
E(\rho; H) &\geq& E_{\beta(\rho)}(H)\;.
\end{eqnarray}
Notice that, thanks to the uniform population constraint (\ref{EQUAL}), 
the requirement of having entropy larger than or equal to $\ln {d_0}$ is naturally fulfilled
by the elements of ${\mathfrak P}_H^{(N)}$ which are structurally stable  (i.e.
$S(\rho) \geq  \ln {d_0}\;,  \forall \rho\in \bar{\mathfrak P}_H^{(1,1)}$), accordingly for such 
states we need not to worry about such condition. 

For two dimensional systems ($d=2$), thanks to (\ref{easy1}) 
the inequality in~(\ref{EmaggiorediE0}) is replaced by an identity. 
By exploiting (\ref{easy2})  a similar statement can be extended  also for  $d\geq 3$ for those $H$ which have a  simple 
spectrum: 
in this case however the equality in (\ref{EmaggiorediE0}) 
holds true only for the passive states which are also explicitly structurally stable, i.e.
$\rho\in \bar{\mathfrak P}_H^{(1,1)}$, i.e. 
\begin{eqnarray}
E(\rho; H) = E_{\beta(\rho)}(H) \quad   \begin{cases}
\forall \rho\in {\mathfrak P}_H^{(1)}, &
(d=2);\\  \\ \forall \rho\in \bar{\mathfrak P}_H^{(1,1)}, 
& (d\geq 3, H=\mbox{two-level});   \end{cases}  .  \label{easy3} 
\end{eqnarray} 
On the contrary, for  $d\geq 3$ and Hamiltonian  $H$  that possesses at least two distinct non-zero energy eigenvalues,  
the gap between the right-hand-side and left-hand-side of Eq.~(\ref{EmaggiorediE0}) is typically finite. Yet, as a consequence
of  Eq.~(\ref{IDE1}), we expect that such gap 
should reduce as  $N$ increases, irrespectively from the spectral properties of $H$. Aim of the present section is to provide a quantitative estimation of this fact at least for the elements of ${\mathfrak P}_H^{(N)}$ which are at least $1$-structurally stable. In particular we shall see that  the following upper bound hold true, 
\begin{equation} \label{RESULTexp} 
E(\rho;H)  \leq  E_{\beta(\rho)}(H) e^{\beta(\rho) \epsilon_{\max} \frac{R(H)}{N}} \;,   \quad \forall \rho\in \bar{\mathfrak P}_H^{(N,1)}\;,
\end{equation} 
with $\epsilon_{\max} (=\epsilon_d)$ being the maximum energy eigenvalue of $H$ and with 
$R(H)$ being a non-negative constant  that only depends upon the spectral properties of the system Hamiltonian. Explicitly
for $H$ not having a simple (two-level) spectrum one has
\begin{eqnarray}
R(H) := \underset{\epsilon_c > \epsilon_b > \epsilon_a}\max \; \frac{\epsilon_c- \epsilon_a}{\epsilon_b - \epsilon_a}, \label{DEFR} \end{eqnarray}
the maximum being computed among all possible triples of ordered energy levels $\epsilon_c > \epsilon_b > \epsilon_a$ -- for $H$ being two-level we can just put $R(H)=0$ and recover (\ref{easy3}) via (\ref{EmaggiorediE0}).
Notice that 
for $H$ with non-simple spectrum $R(H)$ is always greater than or equal to one and that
the optimization over $\epsilon_c$ can be explicitly carried out leading to 
\begin{equation}
R(H) = \underset{\epsilon_b > \epsilon_a}\max \; \frac{\epsilon_{\max}- \epsilon_a}{\epsilon_b - \epsilon_a} =
1 + \underset{\epsilon_b > \epsilon_a}\max \; \frac{\epsilon_{\max}- \epsilon_b}{\epsilon_b - \epsilon_a}\label{DEFR1} \geq 1\;.\end{equation}
Interestingly enough, when $\beta(\rho) \epsilon_{\max} > 1$ (low temperature regime), 
the upper bound of Eq.~(\ref{RESULTexp}) can be improved by means of the following 
inequality 
\begin{equation} \label{RESULT} 
E(\rho;H)  \left( 1 - \tfrac{R(H)}{N} \right) \leq  E_{\beta(\rho)}(H)  \;,   \quad \forall \rho\in \bar{\mathfrak P}_H^{(N,1)}
\end{equation} 
which, while being trivial for $R(H)/N\geq 1$,  for 
$R(H)/N < 1$ leads to 
\begin{equation} \label{RESULTinv} 
E(\rho;H)  \leq  E_{\beta(\rho)}(H) \left( 1 - \tfrac{R(H)}{N} \right)^{-1} \;,  \quad \forall \rho\in \bar{\mathfrak P}_H^{(N,1)}\;.
\end{equation} 
A direct comparison between (\ref{RESULTexp}) and (\ref{RESULTinv}) reveals that when $\beta(\rho) \epsilon_{\max} >1$ and $N$ is sufficiently large, 
the former is tighter than the latter (otherwise  (\ref{RESULTexp}) always wins). 

\subsection{Derivation of the bounds} \label{sec:der} 
In order to derive Eqs.~(\ref{RESULTexp}) and (\ref{RESULT}) 
first of all, we need to understand in which way the eigenvalues of the state $\rho$ are constrained by each other by the requirement of being an element
of the $N$-passive set ${\mathfrak P}_H^{(N)}$. 
Since the problem is non trivial only for the case where the spectrum of  $H$ is not two-level, in what follows we shall  
focus on these cases  where $H$ admits at least 3 different energy levels. 
\begin{prep}\label{PREP1} 
	Consider an Hamiltonian $H$ admitting three non-degenerate energy levels 
	$\epsilon_a<\epsilon_b<\epsilon_c$, and
	$\rho \in {\mathfrak P}_H^{(N)}$  a $N$-passive quantum state with associated populations 
	$\lambda_a\geq \lambda_b\geq \lambda_c$. 
	Then for  $m\in [0 ,N-1]$  integer such that  
	\begin{eqnarray} \label{CONDN} \frac{m}{N} < \frac{\epsilon_b - \epsilon_a}{\epsilon_c - \epsilon_a}  \leq  \frac{m+1}{N} 
	\;, \end{eqnarray} 
	the following inequality  holds
	\begin{eqnarray}  \lambda_b \leq \lambda_c^{\frac{m}{N}}\lambda_a^{1-\frac{m}{N}} \label{primaprova1}  
	\;. \end{eqnarray} 
	Viceversa 
	for  $m\in [0 ,N-1]$  integer such that  
	\begin{eqnarray} \label{CONDN1}
	\frac{m}{N} \leq  
	\frac{\epsilon_b - \epsilon_a}{\epsilon_c- \epsilon_a}  < \frac{m+1}{N} 
	\;, \end{eqnarray} 
	then  we must have 
	\begin{equation} \label{primaprova}  
	\lambda_b \geq \lambda_c^{\frac{m+1}{N}}\lambda_a^{1-\frac{m+1}{N}}  \;.\end{equation} 
\end{prep}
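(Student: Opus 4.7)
The plan is to apply the defining inequality (\ref{eq1}) of $N$-passivity directly, with carefully chosen population tuples $I_N$ and $J_N$ that compare the energy of a configuration entirely supported on the middle level $\epsilon_b$ with configurations that split their weight between the extremal levels $\epsilon_a$ and $\epsilon_c$. The key idea is that the hypothesis (\ref{CONDN}) (resp.\ (\ref{CONDN1})) is nothing but the algebraic expression of a strict energy inequality between two such tuples, written in the suggestive form of a ratio.

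For the first claim, I would take $I_N$ supported entirely on level $b$, namely $n_b=N$ and all other $n_i=0$, so that $\sum_i n_i\epsilon_i=N\epsilon_b$ and $\prod_i\lambda_i^{n_i}=\lambda_b^N$. For $J_N$ I would split the weight between the extremal levels, $m_a=N-m$, $m_c=m$, and all other $m_j=0$, giving $\sum_j m_j\epsilon_j=(N-m)\epsilon_a+m\epsilon_c$ and $\prod_j\lambda_j^{m_j}=\lambda_a^{N-m}\lambda_c^{m}$. A one-line rearrangement shows that the left inequality in (\ref{CONDN}), i.e.\ $m/N<(\epsilon_b-\epsilon_a)/(\epsilon_c-\epsilon_a)$, is equivalent to $N\epsilon_b>(N-m)\epsilon_a+m\epsilon_c$. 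Hence (\ref{eq1}) yields $\lambda_b^N\leq\lambda_a^{N-m}\lambda_c^{m}$, and taking the $N$-th root gives (\ref{primaprova1}).

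For the second claim I would simply swap the roles of the two tuples. Take $I_N$ with $n_a=N-m-1$, $n_c=m+1$ (and $n_b=0$), so that $\sum_i n_i\epsilon_i=(N-m-1)\epsilon_a+(m+1)\epsilon_c$ and $\prod_i\lambda_i^{n_i}=\lambda_a^{N-m-1}\lambda_c^{m+1}$, and take $J_N$ with $m_b=N$. The strict inequality in (\ref{CONDN1}), i.e.\ $(m+1)/N>(\epsilon_b-\epsilon_a)/(\epsilon_c-\epsilon_a)$, rearranges to $(N-m-1)\epsilon_a+(m+1)\epsilon_c>N\epsilon_b$, so again (\ref{eq1}) gives $\lambda_a^{N-m-1}\lambda_c^{m+1}\leq\lambda_b^N$, which upon taking the $N$-th root is exactly (\ref{primaprova}).

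The only care required is to ensure all exponents lie in the admissible range, i.e.\ $m\in[0,N]$ and $m+1\in[0,N]$, both of which follow from the hypothesis $m\in[0,N-1]$, and to rely on the convention (\ref{RULE1}) to handle the boundary cases in which some $\lambda$ vanishes or some exponent is zero. There is no real obstacle to overcome: once one spots that the ratio $(\epsilon_b-\epsilon_a)/(\epsilon_c-\epsilon_a)$ in (\ref{CONDN}) and (\ref{CONDN1}) is precisely the threshold at which $N\epsilon_b$ crosses $(N-m)\epsilon_a+m\epsilon_c$, the proof is a direct application of the definition. The main subtlety is cosmetic, namely distinguishing the strict versus non-strict inequalities in the two cases so as to match the direction required by the premise of (\ref{eq1}).
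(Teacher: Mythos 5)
Your proposal is correct and coincides with the paper's own proof: both claims are obtained by applying the $N$-passivity condition~(\ref{eq1}) to exactly the same population tuples ($n_b=N$ versus $n_a=N-m$, $n_c=m$ for the first claim, and $n_a=N-m-1$, $n_c=m+1$ versus $n_b=N$ for the second), with the hypotheses~(\ref{CONDN}) and~(\ref{CONDN1}) supplying the required strict energy inequalities. No gaps.
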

\begin{proof} 
	The result follows directly from the $N$-passivity condition Eq.~(\ref{eq1}) and results
	in the geometrical constraint depicted in Fig.~\ref{fig:Prep1}.
	In particular  Eq.~(\ref{primaprova1})
	corresponds to imposing~(\ref{eq1})
	for the special case in which the population sets $I_N$ contains as only non-zero term $n_b=N$, while $J_N$ contains as only non-zero terms 
	$n_c=m$ and $n_a=N-m$. 
	The  inequality~(\ref{primaprova}) instead follows by applying  Eq.~(\ref{eq1}) 
	to the special case in which $I_N$ contains as only non-zero terms $n_c=m+1$, $n_a=N-m-1$, while $J_N$ contains as only non-zero term $n_b=N$. 
\end{proof}

\begin{figure}
	\centering
	\includegraphics[width=250px]{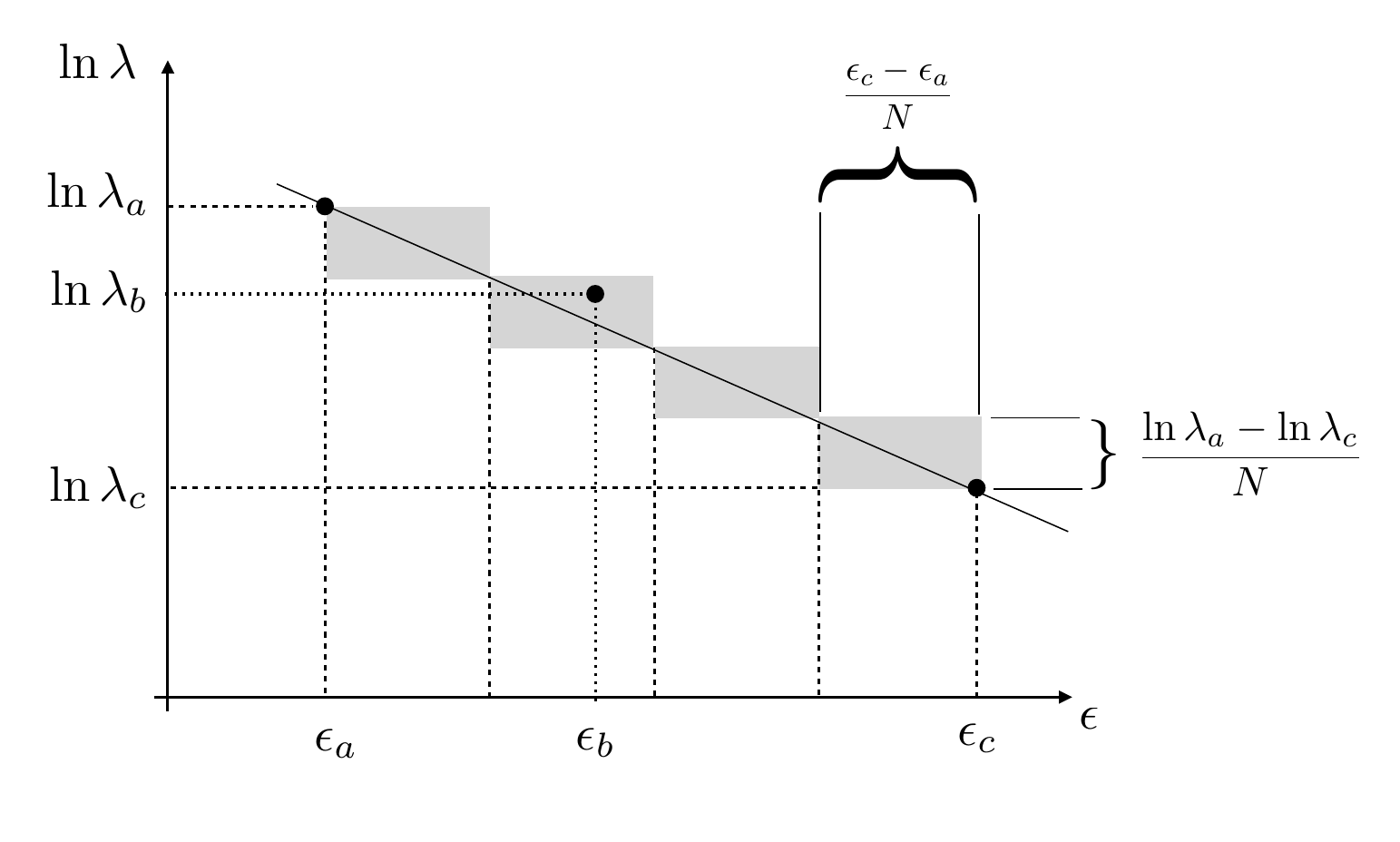}
	\caption{Graphical visualisation of {\bf Proposition \ref{PREP1} } in the $(\epsilon, \ln \lambda)$ plane for a $N$-passive state with $N=4$. The shaded area represents the allowed region for the value of $\ln \lambda_b$ associated with an intermediate energy level  $\epsilon_b$ which follows from Eqs.~(\ref{primaprova1}) and~(\ref{primaprova}). The continuous line in the plot corresponds is the linear interpolation between the extremal points $(\epsilon_a, \ln \lambda_a)$, $(\epsilon_c, \ln \lambda_c)$.
		Identifying $i=b$, $j=c$, $k=a$, condition~(\ref{CONDN1}) implies that the continous line lie below, and has a smaller (more negative) angular coefficient, than the line defined by the equation $\ln \lambda = -\ln Z - \beta\epsilon$ (not depicted).	The inequality~(\ref{implic1}) then follows by $\frac{\ln \lambda_a - \ln \lambda_c}{N} < \beta \frac{\epsilon_j - \epsilon_k}{N}$.  }
	\label{fig:Prep1}
\end{figure}
Notice that, in \textbf{Proposition~\ref{PREP1}}, the upper bound of condition~(\ref{CONDN}) is not necessary in order to prove~(\ref{primaprova1}), and likewise the lower bound of~(\ref{CONDN1}) is not necessary to prove~(\ref{primaprova}). However, they will be used in the next proposition, in which we will proceed to estimate how much the populations of a $N$-passive, $1$-structurally stable state $\rho$ can differ from those of a thermal equilibrium matrix. 
\begin{prep}
	\label{prep2} Let $\epsilon_k < \epsilon_j$ be two distinct  eigenvalues of $H$ and 
	$\rho \in {\mathfrak P}_H^{(N)}$   a $N$-passive state.
	Then if  there exist
	real numbers $\beta, \, {Z} \in [0, \infty)$ fulfilling the inequalities 
	\begin{eqnarray} 
	\lambda_k \leq Z^{-1}\;, \qquad \lambda_j \leq {Z}^{-1}
	e^{-\beta (\epsilon_j -\epsilon_k)} \;, \label{CONDPREP2} 
	\end{eqnarray}  the following implication must hold 
	\begin{equation} \label{implic1_inst} 
	\epsilon_k\ <  \epsilon_i <  \epsilon_j \implies \ln \lambda_i \leq  -\beta (\epsilon_i -\epsilon_k)- \ln Z +  \tfrac{\beta ( \epsilon_j -\epsilon_k)}{N}
	.\end{equation}
	Instead if 
	there exist
	real numbers $\beta, \, {Z} \in [0, \infty)$ fulfilling at least one of the two following inequalities:
	\begin{eqnarray} 
	\lambda_k \leq Z^{-1}\;, \qquad \lambda_j \leq \lambda_k 
	e^{-\beta (\epsilon_j -\epsilon_k)} \;, \label{CONDPREP21} 
	\end{eqnarray}
	or
	\begin{eqnarray} 
	\lambda_k \geq Z^{-1}\;, \qquad \lambda_j \leq Z^{-1}
	e^{-\beta (\epsilon_j -\epsilon_k)} \;, \label{CONDPREP21_magmin} 
	\end{eqnarray}
	then we must have 
	\begin{equation}   \label{implic2_inst} 
	\epsilon_k < \epsilon_j <  \epsilon_i \Longrightarrow 
	\ln \lambda_i \leq  -\beta(\epsilon_i -\epsilon_k)- \ln {Z} +  \beta\tfrac{(\epsilon_i - \epsilon_k)^2}{N(\epsilon_j - \epsilon_k)} \;. 
	\end{equation} 
\end{prep}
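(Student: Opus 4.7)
The plan is to reduce both implications to Proposition~\ref{PREP1} applied to the relevant triple of energy levels, and then to combine the resulting multiplicative bounds on the populations with the logarithmic form of the hypotheses on $\lambda_k$ and $\lambda_j$.

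For (\ref{implic1_inst}), with the ordering $\epsilon_k<\epsilon_i<\epsilon_j$, I would identify $(\epsilon_a,\epsilon_b,\epsilon_c)=(\epsilon_k,\epsilon_i,\epsilon_j)$ and pick the integer $m\in[0,N-1]$ prescribed by (\ref{CONDN}); inequality (\ref{primaprova1}) then yields $\lambda_i\leq\lambda_j^{m/N}\lambda_k^{1-m/N}$. Taking logarithms and substituting the two estimates $\ln\lambda_j\leq-\ln Z-\beta(\epsilon_j-\epsilon_k)$ and $\ln\lambda_k\leq-\ln Z$ from (\ref{CONDPREP2}) produces $\ln\lambda_i\leq-\ln Z-(m/N)\,\beta(\epsilon_j-\epsilon_k)$; the target bound then follows from $m/N\geq(\epsilon_i-\epsilon_k)/(\epsilon_j-\epsilon_k)-1/N$, an immediate consequence of (\ref{CONDN}).

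For (\ref{implic2_inst}), with $\epsilon_k<\epsilon_j<\epsilon_i$, the roles of the middle and the maximum energy are swapped: I would apply Proposition~\ref{PREP1} to $(\epsilon_a,\epsilon_b,\epsilon_c)=(\epsilon_k,\epsilon_j,\epsilon_i)$ under condition (\ref{CONDN1}) and invoke (\ref{primaprova}), obtaining $\lambda_j\geq\lambda_i^{(m+1)/N}\lambda_k^{1-(m+1)/N}$. Solving for $\lambda_i$ and passing to logarithms gives $\ln\lambda_i\leq(N/(m+1))\ln\lambda_j+(1-N/(m+1))\ln\lambda_k$, where the coefficient $1-N/(m+1)$ is non-positive since $m+1\leq N$. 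I would then check that each of the alternative hypothesis sets (\ref{CONDPREP21}) and (\ref{CONDPREP21_magmin}) yields, after a sign-sensitive insertion of the bounds on $\ln\lambda_j$ and $\ln\lambda_k$, the same intermediate estimate $\ln\lambda_i\leq-\ln Z-(N/(m+1))\beta(\epsilon_j-\epsilon_k)$.

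The last step is to turn this intermediate estimate into the quadratic correction appearing in (\ref{implic2_inst}). From (\ref{CONDN1}) one has $(m+1)/N\leq(\epsilon_j-\epsilon_k)/(\epsilon_i-\epsilon_k)+1/N$, so $N/(m+1)\geq (\epsilon_i-\epsilon_k)/[(\epsilon_j-\epsilon_k)+(\epsilon_i-\epsilon_k)/N]=1/(1+x)$ with $x=(\epsilon_i-\epsilon_k)/[N(\epsilon_j-\epsilon_k)]$, and the elementary inequality $1/(1+x)\geq 1-x$ for $x\geq 0$ generates exactly the correction $\beta(\epsilon_i-\epsilon_k)^2/[N(\epsilon_j-\epsilon_k)]$ stated in (\ref{implic2_inst}). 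The subtlest point, in my view, is the unified handling of the two alternative hypotheses (\ref{CONDPREP21}) and (\ref{CONDPREP21_magmin}) in the second implication: $\lambda_k$ is bounded above by $Z^{-1}$ in one case and below by $Z^{-1}$ in the other, and one has to exploit the sign of the coefficient $1-N/(m+1)$ in order to make the two branches collapse onto the common intermediate inequality from which the quadratic bound is then extracted.
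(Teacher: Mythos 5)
Your proposal is correct and follows essentially the same route as the paper's own proof: both implications are reduced to \textbf{Proposition~\ref{PREP1}} with the identifications $(\epsilon_a,\epsilon_b,\epsilon_c)=(\epsilon_k,\epsilon_i,\epsilon_j)$ and $(\epsilon_k,\epsilon_j,\epsilon_i)$ respectively, the two alternative hypotheses are collapsed onto the common intermediate bound $\ln\lambda_i\leq-\ln Z-\tfrac{N}{m+1}\beta(\epsilon_j-\epsilon_k)$ by exploiting the sign of $1-N/(m+1)$, and the quadratic correction is extracted via $1/(1+x)\geq 1-x$ exactly as in the text. The only blemish is a harmless notational slip where you write ``$=1/(1+x)$'' for a quantity that actually equals $\tfrac{\epsilon_i-\epsilon_k}{\epsilon_j-\epsilon_k}\cdot\tfrac{1}{1+x}$, but the correction term you derive from it is the right one.
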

\begin{proof}
	To show Eq.~(\ref{implic1_inst}) 
	let us set  $\epsilon_a=\epsilon_k$, $\epsilon_b=\epsilon_i$, and $\epsilon_c=\epsilon_j$ and select $m\in[0,N - 1]$ such that
	Eq.~(\ref{CONDN}) of  {\bf Proposition \ref{PREP1}} holds true. Then from (\ref{primaprova1}) we get 
	\begin{eqnarray}
	\lambda_i &\leq& \lambda_j^{\frac{m}{N}}\lambda_k^{1 - \frac{m}{N}} \leq  Z^{-1}\; e^{-\beta (\epsilon_j -\epsilon_k) \frac{m}{N}}  \nonumber \\   
	&\leq& Z^{-1}e^{-\beta  (\epsilon_j -\epsilon_k)   \frac{N{ (\epsilon_i-\epsilon_k) } -  (\epsilon_j -\epsilon_k) }{N  (\epsilon_j -\epsilon_k) } } 
	\nonumber \\
	&=& Z^{-1} e^{-\beta  (\epsilon_i -\epsilon_k) }  e^{\frac{\beta (\epsilon_j -\epsilon_k)  }{N}}\;.
	\end{eqnarray}\
	To prove instead Eq.~(\ref{implic2_inst}) set $\epsilon_a=\epsilon_k$, $\epsilon_b=\epsilon_j$, and $\epsilon_c=\epsilon_i$ and select $m\in[0,N-1]$ such that 
	Eq.~(\ref{CONDN1}) of {\bf Proposition  \ref{PREP1}} applies. Then we have 
	\begin{equation}
	\label{firststep}
	\lambda_i \leq (\lambda_j)^{\frac{N}{m+1}}(\lambda_k)^{1 - \frac{N}{m+1}}\;. 
	\end{equation}
	In the case in which inequalities (\ref{CONDPREP21_magmin}) hold, exploiting the fact that
	$1-N/(m+1) \leq 0$ 
	we can bound the RHS of (\ref{firststep}) by
	\begin{multline}
	\lambda_i \leq (\lambda_j)^{\frac{N}{m+1}}(\lambda_k)^{1 - \frac{N}{m+1}} \leq  \left(Z^{-1}
	e^{-\beta (\epsilon_j -\epsilon_k)} \right)^{\frac{N}{m+1}}(Z^{-1})^{1 - \frac{N}{m+1}} = 
	{Z}^{-1} (e^{-\beta(\epsilon_j - \epsilon_k)})^{\frac{N}{m+1}}\;. \\   
	\end{multline}
	Also, if the inequalities (\ref{CONDPREP21}) are true, we can write the same bound for $\lambda_i$:
	\begin{equation}
	\lambda_i \leq (\lambda_j)^{\frac{N}{m+1}}(\lambda_k)^{1 - \frac{N}{m+1}} \leq 
	\lambda_k (e^{-\beta(\epsilon_j - \epsilon_k)})^{\frac{N}{m+1}}  
	\leq {Z}^{-1} (e^{-\beta(\epsilon_j - \epsilon_k)})^{\frac{N}{m+1}}\;. 
	\end{equation}
	Therefore, in either one of the two cases (\ref{CONDPREP21}) and (\ref{CONDPREP21_magmin})
	we have
	\begin{multline}
	\lambda_i \leq {Z}^{-1} (e^{-\beta(\epsilon_j - \epsilon_k)})^{\frac{N}{m+1}} 
	\leq 
	{Z}^{-1}  \exp{-\beta(\epsilon_i - \epsilon_k)\left(1 + \tfrac{(\epsilon_i - \epsilon_k)}{N(\epsilon_j - \epsilon_k)} \right)^{-1} } \\
	\leq  {Z}^{-1}  \exp{ -\beta(\epsilon_i - \epsilon_k)\left( 1 - \tfrac{(\epsilon_i - \epsilon_k)}{N(\epsilon_j - \epsilon_k)} \right) } 
	= {Z}^{-1} e^{-\beta(\epsilon_i-\epsilon_k)} \exp{\beta\tfrac{(\epsilon_i - \epsilon_k)^2}{N(\epsilon_j - \epsilon_k)} }\;, 
	\end{multline}
	where  in the last inequality we used  $1-x\leq {1}/({1+x})$.
\end{proof}


The results of {\bf Proposition \ref{prep2}} apply to 
density matrices $\rho$ which are 
just $N$-passive, but not necessarily $1$-structurally stable. In order to treat the 
degenerate cases $\epsilon_i= \epsilon_j$ and $\epsilon_i=\epsilon_k$, henceforth in this section we will assume as hypothesis the 1-structural stability of $\rho$.

\begin{corol}
	\label{prep2_stab} Let $\epsilon_k < \epsilon_j$ be two distinct  eigenvalues of $H$ and 
	$\rho \in \bar{\mathfrak P}_H^{(N,1)}$   a $N$-passive, 1-structurally stable state.
	Then under the condition~(\ref{CONDPREP2}) it holds the implication
	\begin{equation} \label{implic1} 
	\epsilon_k \leq  \epsilon_i \leq \epsilon_j \implies \ln \lambda_i \leq  -\beta (\epsilon_i -\epsilon_k)- \ln Z +  \tfrac{\beta ( \epsilon_j -\epsilon_k)}{N} \; ,
	\end{equation}
	and under either one of the conditions~(\ref{CONDPREP21}) or~(\ref{CONDPREP21_magmin}) it is valid the implication
	\begin{equation}   \label{implic2} 
	\epsilon_k < \epsilon_j \leq  \epsilon_i \Longrightarrow 
	\ln \lambda_i \leq  -\beta(\epsilon_i -\epsilon_k)- \ln {Z} +  \beta\tfrac{(\epsilon_i - \epsilon_k)^2}{N(\epsilon_j - \epsilon_k)} \;. 
	\end{equation} 
\end{corol}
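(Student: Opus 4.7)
The plan is to leverage Proposition 2 as a black box for the strict interior cases, and treat the (few) newly allowed boundary configurations by invoking the $1$-structural stability hypothesis, which via condition (iii) of Eq.~(\ref{eq11}) taken at order $k=1$ forces $\lambda_i = \lambda_j$ whenever $\epsilon_i = \epsilon_j$. This is the only extra ingredient needed, since $\bar{\mathfrak P}_H^{(N,1)}\subseteq {\mathfrak P}_H^{(N)}$ means the $N$-passivity consequences of Proposition~\ref{prep2} remain available.

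For implication~(\ref{implic1}) I would first invoke Proposition~\ref{prep2} to dispose of the strict case $\epsilon_k < \epsilon_i < \epsilon_j$, then separately verify the two boundary subcases. If $\epsilon_i = \epsilon_k$, $1$-structural stability gives $\lambda_i = \lambda_k \leq Z^{-1}$ from the first inequality of~(\ref{CONDPREP2}), hence $\ln\lambda_i \leq -\ln Z$; the claimed RHS reduces to $-\ln Z + \beta(\epsilon_j - \epsilon_k)/N$, and the slack term is nonnegative, so the bound holds. If $\epsilon_i = \epsilon_j$, then $\lambda_i = \lambda_j \leq Z^{-1}e^{-\beta(\epsilon_j - \epsilon_k)}$ from the second inequality of~(\ref{CONDPREP2}), giving $\ln\lambda_i \leq -\ln Z - \beta(\epsilon_i - \epsilon_k)$, which again lies below the RHS by the same nonnegative correction.

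For implication~(\ref{implic2}) the only new subcase is $\epsilon_i = \epsilon_j$, since the hypothesis keeps $\epsilon_k < \epsilon_j$ strict. By $1$-structural stability $\lambda_i = \lambda_j$. Under~(\ref{CONDPREP21_magmin}) one directly has $\lambda_j \leq Z^{-1}e^{-\beta(\epsilon_j-\epsilon_k)}$; under~(\ref{CONDPREP21}) one combines $\lambda_j \leq \lambda_k e^{-\beta(\epsilon_j-\epsilon_k)}$ with $\lambda_k \leq Z^{-1}$ to obtain the same upper bound. In both cases $\ln\lambda_i \leq -\ln Z - \beta(\epsilon_i-\epsilon_k)$, and the claimed RHS differs by the nonnegative term $\beta(\epsilon_i-\epsilon_k)^2/(N(\epsilon_j-\epsilon_k))$, so the inequality holds. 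The main difficulty, if any, is bookkeeping: making sure every boundary configuration consistent with~(\ref{implic1}) and~(\ref{implic2}) is covered, and verifying that in each such configuration the slack term in the claimed bound is the right sign to absorb the loss relative to the cleaner inequality produced by stability. No nontrivial analytic step is required beyond the application of condition~(\ref{eq11}).
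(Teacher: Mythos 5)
Your proposal is correct and follows essentially the same route as the paper's own proof: the strict-inequality cases are delegated to Proposition~\ref{prep2}, and the boundary cases $\epsilon_i=\epsilon_k$ and $\epsilon_i=\epsilon_j$ are settled by invoking the $1$-structural-stability condition to equate populations and then reading the bound directly off (\ref{CONDPREP2}), (\ref{CONDPREP21}) or (\ref{CONDPREP21_magmin}). Your explicit check that the $O(1/N)$ slack terms are nonnegative is a slightly more careful bookkeeping than the paper's, but the argument is the same.
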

\begin{proof}
	In the cases in which $\epsilon_i$ is neither equal to $\epsilon_j$ or to $\epsilon_k$, (\ref{implic1}) reduces to~(\ref{implic1_inst}), which is implied by condition~(\ref{CONDPREP2}). Likewise, when $\epsilon_i \neq \epsilon_j$, (\ref{implic2}) is equivalent to~(\ref{implic2_inst}), which is implied by~(\ref{CONDPREP21}) or by~(\ref{CONDPREP21_magmin}).
	Consider now the case where $\epsilon_i = \epsilon_j$: under this circumstance
	Eqs.~(\ref{implic1}) 
	trivially follow from (\ref{CONDPREP2}) and by the fact that 
	since $\rho$ is $1$-structurally stable, 
	we must have $\lambda_i= \lambda_j$, see e.g. Eq.~(\ref{EQUAL}).  
	The same reasoning can be applied also for (\ref{implic2}) by exploiting either (\ref{CONDPREP21}) or
	(\ref{CONDPREP21_magmin}). 
	Finally if $\epsilon_i=\epsilon_k$, we must have $\lambda_i=\lambda_k$, and Eq.~(\ref{implic1}) follows again from (\ref{CONDPREP2}). 
\end{proof}

\begin{corol}
	\label{cor3}
	Let $\epsilon_k < \epsilon_j$ be two distinct  eigenvalues of $H$ and
	$\rho \in \bar{\mathfrak P}_H^{(N,1)}$ be a $N$-passive, $1$-structurally stable state.
	Then under the condition~(\ref{CONDPREP21}) of {\bf Proposition \ref{prep2}} 
	we must have 
	\begin{equation}  
	\epsilon_k \leq  \epsilon_i \Longrightarrow \ln \lambda_i \leq -\beta (\epsilon_i-\epsilon_k)  \left( 1 - \tfrac{R(H)}{N} \right) - \ln {Z}\;, \label{IMPO1new} 
	\end{equation} 
	where $R(H)$ is the quantity defined in Eq.~(\ref{DEFR}). 
	Furthermore, if  $\epsilon_k$ is the ground state energy of the system (i.e. $\epsilon_k=\epsilon_0=0$), then the following inequalities hold
	\begin{eqnarray} \label{ENTROINEQ} 
	S(\rho)  &\geq& 
	{\beta} E(\rho;H)  \left( 1 - \tfrac{R(H)}{N} \right) + \ln Z \;, \\ 
	\label{resultprimo} 
	E(\rho;H)   &\leq& \frac{Z_\beta}{Z} \; E_{\beta}(H) e^{\beta \epsilon_{\max} R(H)/N}  \;.
	\end{eqnarray} 
	
\end{corol}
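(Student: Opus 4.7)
The plan is to derive (\ref{IMPO1new}) by combining the two implications of Corollary~\ref{prep2_stab} and replacing their state-specific corrections by the uniform spectral quantity $R(H)$; the entropy and energy inequalities (\ref{ENTROINEQ}) and (\ref{resultprimo}) will then follow by direct substitution using $\epsilon_k=0$.

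First I would note that the hypothesis (\ref{CONDPREP21}), namely $\lambda_k\leq Z^{-1}$ together with $\lambda_j\leq\lambda_k\,e^{-\beta(\epsilon_j-\epsilon_k)}$, automatically entails the condition (\ref{CONDPREP2}) required to invoke (\ref{implic1}), because chaining the two estimates gives $\lambda_j\leq Z^{-1}e^{-\beta(\epsilon_j-\epsilon_k)}$. Hence both implications of Corollary~\ref{prep2_stab} apply to $\rho$.

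Next I would verify (\ref{IMPO1new}) by splitting on the position of $\epsilon_i$ relative to $\epsilon_j$. The degenerate case $\epsilon_i=\epsilon_k$ is immediate from $\lambda_k\leq Z^{-1}$. For $\epsilon_k<\epsilon_i\leq\epsilon_j$, I would apply (\ref{implic1}) and absorb the correction $\beta(\epsilon_j-\epsilon_k)/N$ into $\beta(\epsilon_i-\epsilon_k)R(H)/N$ via the ratio bound $(\epsilon_j-\epsilon_k)/(\epsilon_i-\epsilon_k)\leq R(H)$, which holds by the very definition (\ref{DEFR}) of $R(H)$ applied to the triple $(\epsilon_k,\epsilon_i,\epsilon_j)$ whenever $\epsilon_i<\epsilon_j$, and trivially at $\epsilon_i=\epsilon_j$ since $R(H)\geq 1$. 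For $\epsilon_i>\epsilon_j$, I would instead apply (\ref{implic2}); the correction $\beta(\epsilon_i-\epsilon_k)^2/[N(\epsilon_j-\epsilon_k)]$ is absorbed into $\beta(\epsilon_i-\epsilon_k)R(H)/N$ via the reformulation (\ref{DEFR1}), taking $\epsilon_c=\epsilon_{\max}\geq\epsilon_i$.

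With (\ref{IMPO1new}) in hand and $\epsilon_k=0$, every eigenvalue of $\rho$ satisfies $\ln\lambda_i\leq -\beta\epsilon_i(1-R(H)/N)-\ln Z$. Multiplying by $-\lambda_i$ and summing in $i$ yields the entropy bound (\ref{ENTROINEQ}) directly. For (\ref{resultprimo}) I would exponentiate, pull out the uniform factor $e^{\beta\epsilon_i R(H)/N}\leq e^{\beta\epsilon_{\max}R(H)/N}$ (using $\epsilon_i\leq\epsilon_{\max}$), multiply by $\epsilon_i$ and sum, recognising $\sum_i\epsilon_i e^{-\beta\epsilon_i}=Z_\beta E_\beta(H)$ via (\ref{DEFNENBETA0}). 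The main obstacle I anticipate is the careful bookkeeping of the boundary cases $\epsilon_i\in\{\epsilon_k,\epsilon_j\}$: the first forces a direct appeal to the hypothesis rather than to (\ref{implic1}), and the second requires the non-trivial fact $R(H)\geq 1$ to reconcile the two implications of Corollary~\ref{prep2_stab}. Everything else is routine algebra leveraging the monotonicity built into the definition of $R(H)$.
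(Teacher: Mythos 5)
Your proposal is correct and follows essentially the same route as the paper: the same case split ($\epsilon_i=\epsilon_k$ handled directly from the hypothesis, $\epsilon_k<\epsilon_i\leq\epsilon_j$ via the implication~(\ref{implic1}), $\epsilon_i>\epsilon_j$ via~(\ref{implic2})), with each state-dependent correction absorbed into $\beta(\epsilon_i-\epsilon_k)R(H)/N$ by the defining monotonicity of $R(H)$, followed by the same summations to obtain (\ref{ENTROINEQ}) and (\ref{resultprimo}). The only cosmetic difference is that for $\epsilon_i>\epsilon_j$ you invoke the reformulation (\ref{DEFR1}) with $\epsilon_{\max}$, while the paper applies (\ref{DEFR}) directly to the triple $(\epsilon_k,\epsilon_j,\epsilon_i)$; both are equivalent.
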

\begin{proof} 
	For $\epsilon_i= \epsilon_k$, Eq.~(\ref{IMPO1new}) is a trivial consequence of the first inequality of assumed
	hypothesis~(\ref{CONDPREP21}).
	For $\epsilon_k <  \epsilon_i \leq \epsilon_j$ instead we notice that
	~(\ref{CONDPREP21}) implies (\ref{CONDPREP2}) and hence 
	~(\ref{implic1}), i.e.
	\begin{equation} \label{implic1here} 
	\ln \lambda_i \leq  -\beta (\epsilon_i -\epsilon_k)- \ln Z +  \tfrac{\beta ( \epsilon_j -\epsilon_k)}{N}
	.\end{equation}
	The thesis then follows from the observation that 
	\begin{eqnarray} 
	\epsilon_j- \epsilon_k = (\epsilon_i-\epsilon_k)  \frac{\epsilon_j- \epsilon_k}{\epsilon_i- \epsilon_k} \leq (\epsilon_i-\epsilon_k)  R(H) \;. 
	\end{eqnarray} 
	For $\epsilon_i >  \epsilon_j$ instead we use the fact that~(\ref{CONDPREP21}) implies 
	~(\ref{implic2}), i.e. 
	\begin{equation}   \label{implic2new} 
	\ln \lambda_i \leq  -\beta(\epsilon_i -\epsilon_k)- \ln {Z} +  \beta\tfrac{(\epsilon_i - \epsilon_k)^2}{N(\epsilon_j - \epsilon_k)} \;, 
	\end{equation} 
	and the inequality  
	\begin{eqnarray} 
	\frac{(\epsilon_i - \epsilon_k)^2}{\epsilon_j - \epsilon_k} = (\epsilon_i-\epsilon_k) \frac{\epsilon_i- \epsilon_k}{\epsilon_j- \epsilon_k} \leq 
	(\epsilon_i-\epsilon_k)  R(H)   \;.
	\label{tuttominR}
	\end{eqnarray} 
	
	Suppose next that $\epsilon_k$ is the ground energy level of the system: in this case the bound
	(\ref{IMPO1new}) on $\lambda_i$ applies for all energy levels, i.e. 
	\begin{equation} 
	- \ln \lambda_i \geq {\beta} \epsilon_i  \left( 1 - \tfrac{R(H)}{N} \right) + \ln Z\;,  \qquad \forall \epsilon_i\;,  \label{IMPO1new2} 
	\end{equation} 
	therefore we can write 
	\begin{eqnarray} 
	S(\rho)  &=& -\sum_i \lambda_i \ln \lambda_i  
	\geq \sum_i \lambda_i {\beta} \epsilon_i \left( 1 - \tfrac{R(H)}{N} \right) +  \ln Z \nonumber  \\
	&=& {\beta} E(\rho;H)  \left( 1 - \tfrac{R(H)}{N} \right) + \ln Z \;.
	\end{eqnarray}
	On the contrary replacing $\epsilon_i \tfrac{R(H)}{N}$ with $\epsilon_{\max} \tfrac{R(H)}{N}$
	in (\ref{IMPO1new2}) we get 
	\begin{equation} 
	\lambda_i \leq Z_{\beta}^{-1} e^{-{\beta} \epsilon_i } e^{\beta \epsilon_{\max} \tfrac{R(H)}{N}}   \frac{Z_\beta}{Z} \;,  \qquad \forall \epsilon_i\;,  \label{IMPO1new222} 
	\end{equation} 
	and hence 
	\begin{eqnarray}
	E(\rho;H) =  \sum_i \lambda_i \epsilon_i\leq  E_{\beta}(H)  e^{\beta \epsilon_{\max} \tfrac{R(H)}{N}} \frac{Z_\beta}{Z} \;. 
	\end{eqnarray} 
\end{proof} 

\begin{corol}
	\label{cor3bis}
	Let $\epsilon_k < \epsilon_j$ be two distinct  eigenvalues of $H$ and
	$\rho \in {\mathfrak P}_H^{(N)}$ be a $N$-passive state.
	Then under the condition~(\ref{CONDPREP21_magmin}) of {\bf Proposition \ref{prep2}} 
	we must have 
	\begin{equation}  
	\epsilon_j <  \epsilon_i \Longrightarrow \ln \lambda_i \leq -\beta (\epsilon_i-\epsilon_k)  \left( 1 - \tfrac{R(H)}{N} \right) - \ln {Z}\;, \label{IMPO1new_casomagZ} 
	\end{equation} 
	where $R(H)$ is the quantity defined in Eq.~(\ref{DEFR}). 
\end{corol}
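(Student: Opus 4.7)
The plan is to reuse almost verbatim the second branch of the argument that proved Corollary \ref{cor3}, but restricted to the strict regime $\epsilon_i > \epsilon_j$. The crucial observation is that, because the conclusion is only claimed when $\epsilon_j < \epsilon_i$ strictly, no degenerate case of the form $\epsilon_i = \epsilon_j$ or $\epsilon_i = \epsilon_k$ ever arises; consequently, the corollary dispenses with the $1$-structural stability hypothesis without paying any price, since that assumption was introduced in Corollary \ref{prep2_stab} only to handle those degenerate coincidences.

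First I would observe that the hypothesis (\ref{CONDPREP21_magmin}) is precisely one of the two alternative premises admitted in the second part of \textbf{Proposition \ref{prep2}}. For any eigenvalue $\epsilon_i$ strictly exceeding $\epsilon_j$, the triple $\epsilon_k < \epsilon_j < \epsilon_i$ satisfies the hypothesis of implication (\ref{implic2_inst}), which yields
\begin{equation*}
\ln \lambda_i \;\leq\; -\beta(\epsilon_i - \epsilon_k) - \ln Z + \beta\,\frac{(\epsilon_i-\epsilon_k)^2}{N(\epsilon_j-\epsilon_k)}.
\end{equation*}
Crucially, \textbf{Proposition \ref{prep2}} is stated for generic $N$-passive states (not only $1$-structurally stable ones), so this step is legitimate in our weaker setting.

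Second, I would absorb the last term into a factor of $R(H)$ by the same geometric estimate used in (\ref{tuttominR}): writing $(\epsilon_i-\epsilon_k)^2/(\epsilon_j-\epsilon_k) = (\epsilon_i-\epsilon_k)\cdot(\epsilon_i-\epsilon_k)/(\epsilon_j-\epsilon_k)$, and applying the definition (\ref{DEFR}) of $R(H)$ to the ordered triple $(\epsilon_a,\epsilon_b,\epsilon_c)=(\epsilon_k,\epsilon_j,\epsilon_i)$, one has $(\epsilon_i-\epsilon_k)/(\epsilon_j-\epsilon_k)\leq R(H)$. Substituting this bound into the previous inequality produces exactly (\ref{IMPO1new_casomagZ}) and finishes the proof.

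There is no real obstacle here: the content of the corollary is essentially a bookkeeping remark, noting that the upper-branch ($\epsilon_i>\epsilon_j$) portion of Corollary \ref{cor3} is already available for merely $N$-passive states, because Proposition \ref{prep2} itself did not require structural stability. The only thing to check — and this is the subtle point — is that (\ref{CONDPREP21_magmin}) does trigger implication (\ref{implic2_inst}) of Proposition \ref{prep2}, which it does by construction. No analogue in the $\epsilon_k \le \epsilon_i \le \epsilon_j$ range can be asserted without structural stability, which is why the corollary wisely restricts its statement to $\epsilon_i>\epsilon_j$.
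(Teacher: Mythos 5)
Your proof is correct and follows essentially the same route as the paper: invoke the second branch of \textbf{Proposition \ref{prep2}} under hypothesis~(\ref{CONDPREP21_magmin}) to obtain~(\ref{implic2_inst}), then absorb the correction term via the estimate~(\ref{tuttominR}) from the definition~(\ref{DEFR}) of $R(H)$. Your added remark explaining why structural stability is dispensable here (no degenerate coincidences arise in the strict regime $\epsilon_i>\epsilon_j$) is accurate and consistent with the paper's treatment.
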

\begin{proof} 
	The assumed hypothesis~(\ref{CONDPREP21_magmin}) implies~(\ref{implic2_inst}). The thesis follows from~(\ref{implic2_inst}), with the observation~(\ref{tuttominR}).
\end{proof}

The inequalities (\ref{implic1}) and (\ref{implic2}) are the key ingredient to 
arrive at our main result. 
To complete our derivation of equation~(\ref{RESULT}), we find it useful to 
first state a set of hypotheses under which, exploiting {\bf Proposition \ref{prep2}}, one can 
guarantee that Eq.~(\ref{RESULT}) is true
(see {\bf Proposition \ref{ipotesiminimali}} below). Then we show that these conditions are always satisfied by all the elements of $\bar{\mathfrak P}_H^{(N,1)}$ (see 
{\bf Proposition \ref{prepFIN}}). 
\begin{prep}
	\label{ipotesiminimali}
	Let $\rho\in \bar{\mathfrak P}_H^{(N,1)}$ be a $N$-passive, $1$-structurally stable state of a  Hamiltonian $H$
	characterised  by at least three distinct eigenvalues. Then its mean energy energy $E(\rho; H)$  is bounded by the inequalitity (\ref{RESULT}) if
	at least one of the two following conditions holds true for
	the spectrum of $\rho$:
	
	{\bf Condition 1:} there exist energy levels  $\epsilon_{c}> \epsilon_b > 0$ such that 
	\begin{eqnarray}
	\lambda_0 \leq Z_{\beta(\rho)}^{-1}\;, \; \lambda_b    
	\geq  Z_{\beta(\rho)}^{-1} e^{ - \beta(\rho) \epsilon_b} \;, \;
	\lambda_{c}  \leq  Z_{\beta(\rho)}^{-1} e^{ - \beta(\rho) \epsilon_{c}}\;, \nonumber \\
	\label{COND1nuova} 
	\end{eqnarray} 
	with $Z_{\beta(\rho)}$ is the partition function of the isoentropic Gibbs state $\omega_{\beta(\rho)}$, or
	
	{\bf Condition 2:} the population of the ground state is lower bounded by $Z_{\beta(\rho)}^{-1}$, i.e. 
	\begin{eqnarray} 
	\lambda_0 \geq Z_{\beta(\rho)}^{-1}\;. 
	\end{eqnarray} 
\end{prep}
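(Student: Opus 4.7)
The plan is to reduce (\ref{RESULT}) to the entropy inequality (\ref{ENTROINEQ}) of Corollary \ref{cor3} with the canonical choices $\beta = \beta(\rho)$ and $Z = Z_{\beta(\rho)}$. Indeed, once
$S(\rho) \geq \beta(\rho) E(\rho;H)\bigl(1 - R(H)/N\bigr) + \ln Z_{\beta(\rho)}$
is established, substituting the isoentropic identity $S(\rho) = S_{\beta(\rho)} = \beta(\rho) E_{\beta(\rho)}(H) + \ln Z_{\beta(\rho)}$ cancels the $\ln Z_{\beta(\rho)}$ terms and dividing by $\beta(\rho)>0$ collapses the inequality into exactly (\ref{RESULT}). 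So the real task is to verify, under each of the two Conditions, the hypotheses of one of Corollary \ref{cor3}, Corollary \ref{cor3bis}, or Corollary \ref{prep2_stab} with those parameter choices, and to promote the resulting pointwise control of $\lambda_i$ into the weighted entropy bound.

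Under Condition 2, $\lambda_0 \geq Z_{\beta(\rho)}^{-1}$ directly supplies the first half of (\ref{CONDPREP21_magmin}) at $\epsilon_k=0$. For the second half I would appeal to a normalisation argument: since $\sum_i \lambda_i = \sum_i Z_{\beta(\rho)}^{-1} e^{-\beta(\rho)\epsilon_i} = 1$, some $\epsilon_j > 0$ must satisfy $\lambda_j \leq Z_{\beta(\rho)}^{-1} e^{-\beta(\rho)\epsilon_j}$ (otherwise $\rho=\omega_{\beta(\rho)}$ and (\ref{RESULT}) is a trivial equality). Corollary \ref{cor3bis} then controls $\lambda_i$ for all $\epsilon_i > \epsilon_j$, and 1-structural stability combined with passivity handles the remainder $0 \leq \epsilon_i \leq \epsilon_j$. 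Under Condition 1, multiplying the second and third inequalities of Condition 1 yields $\lambda_c \leq \lambda_b e^{-\beta(\rho)(\epsilon_c-\epsilon_b)}$, which together with the passivity consequence $\lambda_b \leq \lambda_0 \leq Z_{\beta(\rho)}^{-1}$ verifies (\ref{CONDPREP21}) for the pair $(\epsilon_k, \epsilon_j) = (\epsilon_b, \epsilon_c)$, $\beta = \beta(\rho)$, $Z = Z_{\beta(\rho)}$; Corollary \ref{cor3} then bounds $\lambda_i$ above for $\epsilon_i \geq \epsilon_b$. For $0 \leq \epsilon_i \leq \epsilon_c$ I would instead apply Corollary \ref{prep2_stab} under the weaker (\ref{CONDPREP2}) at $(\epsilon_k,\epsilon_j)=(0,\epsilon_c)$ (whose hypothesis is directly given by Condition 1) and absorb the offset $\beta(\rho)\epsilon_c/N$ into the desired $\beta(\rho)\epsilon_i R(H)/N$ via the elementary consequence $R(H) \geq \epsilon_c/\epsilon_i$ of its definition.

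The main obstacle I anticipate is splicing the two regional pointwise estimates back into a single inequality of ground-referenced form. The Corollary \ref{cor3} bound with $\epsilon_k=\epsilon_b>0$ carries an extra additive term $\beta(\rho)\epsilon_b\bigl(1-R(H)/N\bigr)$ that cannot be discarded pointwise. I expect this shift to be re-absorbed only after weighting by the tail population $\sum_{\epsilon_i>\epsilon_c}\lambda_i$ and balancing against the slack accumulated in the low-energy range via the triple inequality defining $R(H)$; this accounting — which genuinely uses the maximisation structure of $R(H)$ together with the passive ordering $\lambda_i$ — is the nontrivial step beyond a routine invocation of the preceding corollaries, and ultimately what makes (\ref{ENTROINEQ}) hold despite the absence of a uniform pointwise bound.
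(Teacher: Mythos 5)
Your overall strategy (derive pointwise upper bounds on the $\lambda_i$ from \textbf{Proposition~\ref{prep2}} and its corollaries, sum them into the entropy inequality, then use $S(\rho)=S_{\beta(\rho)}$ to collapse to (\ref{RESULT})) is the right one and matches the paper for \textbf{Condition 1} up to a point, but there are two genuine gaps where you defer to an ``accounting'' that you do not supply and that is not actually needed in the paper's argument. For \textbf{Condition 1}, the obstacle you correctly identify --- the unabsorbed shift $\beta(\rho)\epsilon_b\left(1-\tfrac{R(H)}{N}\right)$ coming from applying \textbf{Corollary~\ref{cor3}} with $\epsilon_k=\epsilon_b>0$ and $Z=Z_{\beta(\rho)}$ --- is resolved in the paper not by any weighted balancing over the tail population, but by a better choice of the normalization parameter: one first applies (\ref{implic1}) at $(\epsilon_k,\epsilon_j)=(0,\epsilon_c)$ to get $\lambda_b\leq Z'^{-1}:=Z_{\beta(\rho)}^{-1}e^{-\beta(\rho)\epsilon_b+\beta(\rho)\epsilon_c/N}$, and then verifies (\ref{CONDPREP21}) at $(\epsilon_k,\epsilon_j)=(\epsilon_b,\epsilon_c)$ with $Z=Z'$ rather than $Z_{\beta(\rho)}$. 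Substituting $Z'$ back makes the resulting bound ground-referenced, and the residual terms $\beta(\rho)\epsilon_c/N+\beta(\rho)(\epsilon_i-\epsilon_b)^2/\bigl(N(\epsilon_c-\epsilon_b)\bigr)$ are absorbed into $\beta(\rho)\epsilon_i R(H)/N$ using the two elementary inequalities $R(H)\geq\epsilon_c/\epsilon_b$ and $R(H)\geq(\epsilon_i-\epsilon_b)/(\epsilon_c-\epsilon_b)$. The outcome is a genuinely uniform pointwise bound (\ref{implic1nuovasum}), so the entropy step is routine; no population-weighted compensation between regions is involved.

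The more serious gap is \textbf{Condition 2}. You insist on $\beta=\beta(\rho)$ and $Z=Z_{\beta(\rho)}$ throughout and claim that ``$1$-structural stability combined with passivity handles the remainder $0\leq\epsilon_i\leq\epsilon_j$,'' but when $\lambda_0>Z_{\beta(\rho)}^{-1}$ strictly, the target pointwise bound already fails at $\epsilon_i=0$ (it would require $\lambda_0\leq Z_{\beta(\rho)}^{-1}$), and the hypothesis $\lambda_k\leq Z^{-1}$ of (\ref{CONDPREP2}) is false at $\epsilon_k=0$, so (\ref{implic1}) cannot be invoked on the intermediate range at all. Passivity and structural stability give no bound referenced to $Z_{\beta(\rho)}$ there. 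The paper's mechanism, which is absent from your proposal, is to re-anchor the reference Gibbs distribution at the actual ground population: define $\beta'\geq\beta(\rho)$ by $\lambda_0=Z_{\beta'}^{-1}$, find by normalization a level with $\lambda_c\leq Z_{\beta'}^{-1}e^{-\beta'\epsilon_c}$, apply \textbf{Corollary~\ref{cor3}} at inverse temperature $\beta'$ to obtain (\ref{ENTROINEQ}) with $(\beta',Z_{\beta'})$, and then convert back to $\beta(\rho)$ using the convexity of $\ln Z_\beta$, namely $\ln Z_{\beta'}\geq\ln Z_{\beta(\rho)}-E_{\beta(\rho)}(H)\,\Delta\beta$. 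This last convexity step is what makes the final inequality emerge at the correct temperature, and it cannot be replaced by the direct substitution you propose.
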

\begin{proof}
	When  {\bf Condition 1} applies we
	notice that the hypotheses (\ref{CONDPREP2}) of {\bf Proposition \ref{prep2}} are fulfilled by 
	identifying $\epsilon_k$ and $\epsilon_j$   
	with $\epsilon_0=0$ and $\epsilon_{c}$ respectively, and taking
	$Z=Z_{\beta(\rho)}$, $\beta=\beta(\rho)$. 
	Accordingly we can then invoke (\ref{implic1}) to establish that for $\epsilon_i$ such that 
	$0\leq  \epsilon_i \leq  \epsilon_{c}$ the following inequality must hold
	\begin{eqnarray} \label{implic1nuova} 
	\ln \lambda_i &\leq&  -\beta(\rho) \epsilon_i - \ln Z_{\beta(\rho)} +  \tfrac{\beta(\rho) \epsilon_{c} }{N} \;,
	\end{eqnarray} 
	which in particular implies 
	\begin{eqnarray} \label{import0} 
	\ln \lambda_i &\leq&  -\beta(\rho) \epsilon_i(1-\tfrac{R(H)}{N})  - \ln Z_{\beta(\rho)} \quad \left( \forall \epsilon_i \leq \epsilon_c \right) \;, \\
	\end{eqnarray}
	(for $0<\epsilon_i\leq \epsilon_{c}$ this follows by the fact  that $R(H) \geq \epsilon_{c}/\epsilon_i$, 
	for 
	$\epsilon_i=0$ instead (\ref{import0}) is a trivial consequence of the first inequality of (\ref{COND1nuova})).
	Now we can apply (\ref{implic1nuova}) for $\epsilon_i=\epsilon_b$ to establish that
	\begin{eqnarray} \label{implic1nuova1b} 
	\lambda_b &\leq& Z'^{-1}:= Z_{\beta(\rho)}^{-1} 
	e^{ -\beta(\rho) \epsilon_b +\beta(\rho) \epsilon_{c}/N } \;.
	\end{eqnarray} 
	Notice also that from (\ref{COND1nuova}) we have 
	\begin{eqnarray} \label{implic1nuova1} 
	\lambda_{c} &\leq& Z_{\beta(\rho)}^{-1} e^{ - \beta(\rho) \epsilon_{c}} \leq \lambda_b
	e^{ -\beta(\rho) (\epsilon_{c}-\epsilon_b)}\;.
	\end{eqnarray} 
	Therefore we notice that the conditions~(\ref{CONDPREP21}) of {\bf Proposition \ref{prep2}} 
	are fulfilled by taking  $\epsilon_k$ and $\epsilon_j$  
	with $\epsilon_b$ and $\epsilon_{c}$ respectively, and taking
	$Z=Z'$ and $\beta=\beta(\rho)$.
	Hence  invoking (\ref{implic2}) we can claim that 
	for all $\epsilon_i\geq \epsilon_{c}$, we must have 
	\begin{eqnarray}  
	\ln \lambda_i &\leq&  -\beta(\rho)(\epsilon_i -\epsilon_b)- \ln {Z'} +  \beta(\rho)\tfrac{(\epsilon_i - \epsilon_b)^2}{N(\epsilon_{c} - \epsilon_b)} \nonumber  \\
	&=&  -\beta(\rho) \epsilon_i - \ln {Z_{\beta(\rho)}} + \beta(\rho) \tfrac{\epsilon_{c}}{N} + \beta(\rho)\tfrac{(\epsilon_i - \epsilon_b)^2}{N(\epsilon_{c} - \epsilon_b)}  \nonumber  \\
	&\leq&  -\beta(\rho) \epsilon_i - \ln {Z_{\beta(\rho)}} + \beta(\rho) \epsilon_b \tfrac{R(H)}{N} + \beta(\rho)\tfrac{(\epsilon_i - \epsilon_b)R(H)}{N} \nonumber   \\
	&\leq&  -\beta(\rho) \epsilon_i (1-\tfrac{R(H)}{N}) - \ln {Z_{\beta(\rho)}} \quad \left( \forall \epsilon_i \geq \epsilon_c \right)  \;, \label{import1} 
	\end{eqnarray} 
	where in the second line we used the definition of $Z'$ given in (\ref{implic1nuova1b}),
	while in the third we used the fact that $R(H)\geq \epsilon_{c}/\epsilon_b$ and 
	$R(H)\geq (\epsilon_i - \epsilon_b)/(\epsilon_{c} - \epsilon_b)$.
	To summarize, under {\bf Condition 1} equations ~(\ref{import0}) (which is valid for every $\epsilon_i \leq \epsilon_c$) and (\ref{import1}) (which is valid for every $\epsilon_i \geq \epsilon_c$) establish that
	\begin{eqnarray} \label{implic1nuovasum} 
	- \ln \lambda_i &\geq& \beta(\rho) \epsilon_i (1-\tfrac{R(H)}{N}) + \ln {Z_{\beta(\rho)}} \;, \quad \forall \epsilon_i\;.
	\end{eqnarray} 
	Hence we can write 
	\begin{eqnarray} 
	S(\rho)  &= &  -\sum_i \lambda_i \ln \lambda_i \\ 
	& \geq & \sum_i \lambda_i {\beta(\rho)} \epsilon_i \left( 1 - \tfrac{R(H)}{N} \right) +  \ln Z_{\beta(\rho)} \nonumber   \\
	&=& {\beta}(\rho) E(\rho;H)  \left( 1 - \tfrac{R(H)}{N} \right) + \ln Z_{\beta(\rho)} \;.
	\end{eqnarray} 
	which finally leads to (\ref{RESULT}) by using~(\ref{sameE}) and (\ref{DEFENTROBETA}) to enforce the identity
	\begin{eqnarray} 
	E_{\beta(\rho)}(H) = \frac{S(\rho)-\ln Z_{\beta(\rho)
	}}{\beta(\rho)}\;. \label{ide11} 
	\end{eqnarray} 
	Consider next the case where  {\bf Condition 2} holds. Under this circumstance let us introduce
	the Gibbs state $\omega_{\beta'}$ with inverse temperature $\beta' \geq \beta(\rho)$, such that 
	\begin{eqnarray}\lambda_0=Z_{\beta'}^{-1} \label{defbeta'}\;. \end{eqnarray} Because $\Tr \rho = \Tr \omega_{\beta'} = 1$, there exists at least one eigenvalue $\epsilon_{c}>0$  such that 
	\begin{eqnarray}
	\lambda_{c} \leq \lambda_{c}' = Z_{\beta'}^{-1}e^{-\beta' \epsilon_{c}}=\lambda_0e^{-\beta' \epsilon_{c}}\;,  \label{deflamdaa} 
	\end{eqnarray}
	$\lambda_{c}'$ being the associated
	population of the Gibbs state $\omega_{\beta'}$. 
	Accordingly  the hypothesis of {\bf Corollary \ref{cor3}} are
	fulfilled with $\epsilon_k=\epsilon_0=0$, $\epsilon_j=\epsilon_{c}$, $Z=Z_{\beta'}$, and $\beta=\beta'$. 
	Hence invoking (\ref{ENTROINEQ}) we can write 
	\begin{equation} 
	\beta(\rho) E_{\beta(\rho)}(H)+\ln Z_{\beta(\rho)}   \geq 
	{\beta}'  E(\rho;H)  \left( 1 - \tfrac{R(H)}{N} \right) + \ln Z_{\beta'}\label{lesserbeta}
	\end{equation} 
	where in the left-hand-side we used (\ref{ide11}) to express $S(\rho)$ in terms of $E_{\beta(\rho)}(H)$.
	In case $\beta'= \beta(\rho)$ Eq.~(\ref{lesserbeta}) is just (\ref{RESULT}) and the proof ends. On the contrary 
	if $\beta' > \beta(\rho)$, exploiting  the fact that function $\ln Z_\beta$ is convex we can write
	\begin{equation}
	\ln Z_{\beta'} \geq \ln Z_{\beta(\rho)} + \Delta\beta \frac{\partial}{\partial \beta} \ln Z_\beta\Bigr|_{\beta ={\beta(\rho)}}
	 = \ln Z_{{\beta(\rho)}} - E_{{\beta(\rho)}}\Delta\beta\;,
	\label{lnZconcava}
	\end{equation}
	with $\Delta \beta:= \beta' - {\beta(\rho)}> 0$. 
	Using the inequality ~(\ref{lnZconcava}) into ~(\ref{lesserbeta}), and rearranging the terms, we hence arrive to
	\begin{equation}
	\beta'  E_{\beta(\rho)}(H) \geq \beta' \left( 1 - \tfrac{R(H)}{N} \right)  E(\rho;H) \label{consequentiamirabilis}\;,
	\end{equation}
	which leads to (\ref{RESULT}) by the strict positivity of $\beta'$. 
\end{proof}
\begin{prep} \label{prepFIN}
	The mean energy energy $E(\rho; H)$ of a $N$-passive, $1$-structurally stable state 
	$\rho\in \bar{\mathfrak P}_H^{(N,1)}$ of a Hamiltonian $H$
	characterised  by at least three distinct eigenvalues, 
	is bounded by the inequality~(\ref{RESULT}). 
\end{prep}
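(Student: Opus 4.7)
The plan is to split on the value of the ground-state population $\lambda_0$ versus $Z_{\beta(\rho)}^{-1}$. When $\lambda_0\geq Z_{\beta(\rho)}^{-1}$, Condition 2 of Proposition \ref{ipotesiminimali} applies and (\ref{RESULT}) is immediate. The bulk of the work is the complementary regime $\lambda_0<Z_{\beta(\rho)}^{-1}$. Here one cannot simply fall back on Condition 1: there exist three-level $N$-passive, $1$-structurally stable states in which \emph{every} excited population $\lambda_i$ strictly exceeds its $\beta(\rho)$-Gibbs counterpart $Z_{\beta(\rho)}^{-1}e^{-\beta(\rho)\epsilon_i}$, so no admissible pair $(\epsilon_b,\epsilon_c)$ exists and Proposition \ref{ipotesiminimali} must be bypassed in this regime.

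To handle it I would replay the Condition 2 argument with the auxiliary inverse temperature on the opposite side of $\beta(\rho)$. Introduce $\beta'\in[0,\infty)$ via $Z_{\beta'}^{-1}=\lambda_0$: this is well-posed because $Z_\beta^{-1}$ is continuous and strictly increasing from $1/d$ at $\beta=0$ to $1/d_0$ at $\beta=\infty$, while passivity combined with $1$-structural stability traps $\lambda_0$ inside $[1/d,1/d_0]$. Setting aside the boundary case $\lambda_0=1/d$ --- in which $\rho=\mathbb{1}/d$, $\beta'=\beta(\rho)=0$, and (\ref{RESULT}) degenerates to an equality --- one obtains $0<\beta'<\beta(\rho)$. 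Since $\rho$ and $\omega_{\beta'}$ share the same total ground-state mass $d_0\lambda_0$ while both distributions sum to one, the pigeonhole principle hands over at least one excited eigenvalue $\epsilon_c>0$ with $\lambda_c\leq Z_{\beta'}^{-1}e^{-\beta'\epsilon_c}$. This is precisely the hypothesis (\ref{CONDPREP21}) of Corollary \ref{cor3} at $(\epsilon_k,\epsilon_j,Z,\beta)=(0,\epsilon_c,Z_{\beta'},\beta')$, whose output is the entropy inequality $S(\rho)\geq\beta'E(\rho)(1-R(H)/N)+\ln Z_{\beta'}$.

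To finish, I would equate $S(\rho)$ with $\beta(\rho)E_{\beta(\rho)}(H)+\ln Z_{\beta(\rho)}$ via the isoentropic condition and invoke the convexity of $\ln Z_\beta$; the key point is that the first-order Taylor bound $\ln Z_{\beta'}\geq\ln Z_{\beta(\rho)}-(\beta'-\beta(\rho))E_{\beta(\rho)}(H)$ is insensitive to the sign of $\beta'-\beta(\rho)$, so a short algebraic cancellation mirroring the Condition 2 proof produces $\beta'E_{\beta(\rho)}(H)\geq\beta'E(\rho)(1-R(H)/N)$; dividing by $\beta'>0$ then yields (\ref{RESULT}). The main obstacle I anticipate is conceptual rather than computational: recognising that Condition 1 can genuinely fail when $\lambda_0<Z_{\beta(\rho)}^{-1}$, and that the remedy is simply to re-deploy the Condition 2 argument with $\beta'<\beta(\rho)$, relying on the two-sided nature of the convexity inequality.
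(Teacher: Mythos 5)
Your argument is correct, and its first half (the case $\lambda_0\geq Z_{\beta(\rho)}^{-1}$, dispatched via \textbf{Condition 2} of \textbf{Proposition~\ref{ipotesiminimali}}) coincides with the paper's. In the complementary regime $\lambda_0<Z_{\beta(\rho)}^{-1}$ you take a genuinely different route: the paper invokes the majorization argument of Appendix~\ref{maj} to show that the hypotheses~(\ref{COND1nuova}) of \textbf{Condition 1} are always met there, whereas you re-run the \textbf{Condition 2} machinery with an auxiliary inverse temperature $\beta'<\beta(\rho)$ fixed by $Z_{\beta'}^{-1}=\lambda_0$, use the equality of the excited-sector masses of $\rho$ and $\omega_{\beta'}$ (which relies on $1$-structural stability of the ground block) to produce an $\epsilon_c>0$ satisfying~(\ref{CONDPREP21}), feed this into \textbf{Corollary~\ref{cor3}} to get~(\ref{ENTROINEQ}), and close with the tangent-line bound~(\ref{lnZconcava}), correctly observing that convexity of $\ln Z_\beta$ makes that bound indifferent to the sign of $\beta'-\beta(\rho)$. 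This is sound and arguably leaner: it bypasses Appendix~\ref{maj} and the two-stage application of \textbf{Proposition~\ref{prep2}} that \textbf{Condition 1} requires, at the modest price of separately disposing of the boundary case $\rho=\mathbb{1}/d$ where $\beta'=0$ (which you do).

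One assertion in your write-up is, however, false and should be deleted: there are \emph{no} $N$-passive, $1$-structurally stable states with $\lambda_0<Z_{\beta(\rho)}^{-1}$ in which every excited population strictly exceeds its Gibbs counterpart $\hat{\lambda}_i=Z_{\beta(\rho)}^{-1}e^{-\beta(\rho)\epsilon_i}$. If such a state existed then, ordering populations non-increasingly, every partial sum $\sum_{j\leq k}\lambda_j$ with $k\leq d-2$ would be strictly below $\sum_{j\leq k}\hat{\lambda}_j$ (for $k$ inside the ground block this is immediate from $\lambda_0<\hat{\lambda}_0$; for larger $k$ the complementary tail contains only excited levels and is therefore strictly larger for $\rho$), so $\omega_{\beta(\rho)}$ would strictly majorize $\rho$ and would have strictly smaller entropy, contradicting isoentropy. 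This is exactly the content of Appendix~\ref{maj}: \textbf{Condition 1} does hold throughout the regime $\lambda_0<Z_{\beta(\rho)}^{-1}$, so your detour, while valid, is not forced by any failure of the paper's route.
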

\begin{proof}  
	If $\rho$ is such that $\lambda_0 \geq Z_{\beta(\rho)}^{-1}$ then  the thesis follows by application of 
	{\bf Condition 2} of {\bf Proposition \ref{ipotesiminimali}}.
	Consider next the case $\lambda_0 < Z_{\beta(\rho)}^{-1}$. Here we observe that 
	since the von Neumann entropies of $\rho$ and $\omega_{\beta(\rho)}$ coincide, i.e. $S(\rho) = S(\omega_{\beta(\rho)})$, the spectrum of  $\rho$ does not strictly majorize, nor is strictly majorized by the 
	spectrum  of $\omega_{\beta(\rho)}$~\cite{MAJ0,MAJ}. As shown in   Appendix~\ref{maj}
	we can then claim that there must exist $\epsilon_{c}> \epsilon_b > 0$ such that 
	$\lambda_b \geq  \hat{\lambda}_b$ and
	$\lambda_{c} \leq  \hat{\lambda}_{c}$, where
	\begin{eqnarray} \label{DEFLAMBDAHAT} 
	\hat{\lambda}_j:= Z_{\beta(\rho)}^{-1} e^{ - \beta(\rho) \epsilon_j}\;,\end{eqnarray} are the eigenvalues of 
	the Gibbs state $\omega_{\beta(\rho)}$.
	Therefore this time the thesis derives
	as a consequence of {\bf Condition 1} of {\bf Proposition \ref{ipotesiminimali}}.
\end{proof}

\begin{prep} \label{prepFINexp}
	The mean energy energy $E(\rho; H)$ of a $N$-passive, $1$-structurally stable state 
	$\rho\in \bar{\mathfrak P}_H^{(N,1)}$ of a Hamiltonian $H$
	characterised  by at least three distinct eigenvalues, 
	is bounded by the inequality~(\ref{RESULTexp}).
	
	Furthermore, the inequality~(\ref{RESULTexp}) also holds if $\rho\in {\mathfrak P}_H^{(N)}$ is a generic $N$-passive state, provided that the populations of $\rho$ satisfy the condition 
	\begin{equation}
	\label{COND00}
	\epsilon_i = 0 \implies \lambda_i \geq Z_{\beta(\rho)}^{-1}\;. 
	\end{equation}
\end{prep}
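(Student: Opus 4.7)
My plan is to mirror the case analysis of Prop.~\ref{prepFIN}, tracking the exponential bound (\ref{RESULTexp}) at every step instead of collapsing into the multiplicative bound (\ref{RESULT}). For the first part ($1$-structurally stable), I split on whether $\lambda_0 < Z_{\beta(\rho)}^{-1}$ or $\lambda_0 \geq Z_{\beta(\rho)}^{-1}$. In the first subcase, the majorization argument used to prove Prop.~\ref{prepFIN} (via Appendix~\ref{maj}) supplies energies $\epsilon_c > \epsilon_b > 0$ satisfying Condition~1 of Prop.~\ref{ipotesiminimali}; the chain (\ref{import0})--(\ref{import1}) already written in that proof yields the pointwise inequality $\ln \lambda_i \leq -\beta(\rho)\epsilon_i(1 - R(H)/N) - \ln Z_{\beta(\rho)}$ for every $i$, and bounding $\beta(\rho)\epsilon_i R(H)/N$ by $\beta(\rho)\epsilon_{\max} R(H)/N$ turns this into $\lambda_i \leq Z_{\beta(\rho)}^{-1} e^{-\beta(\rho)\epsilon_i} e^{\beta(\rho)\epsilon_{\max} R(H)/N}$. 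Summing $\sum_i \lambda_i \epsilon_i$ against this pointwise bound produces (\ref{RESULTexp}). In the second subcase, $1$-structural stability forces every ground-energy population to equal $\lambda_0 \geq Z_{\beta(\rho)}^{-1}$, so condition (\ref{COND00}) is automatically satisfied and the argument for the second part applies.

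For the second part, condition (\ref{COND00}) together with trace normalization yields $\sum_{i:\epsilon_i=0} \lambda_i \geq d_0 Z_{\beta(\rho)}^{-1} = \sum_{i:\epsilon_i=0} \hat{\lambda}_i^{(\beta(\rho))}$, forcing some $j$ with $\epsilon_j > 0$ to satisfy $\lambda_j \leq \hat{\lambda}_j^{(\beta(\rho))} = Z_{\beta(\rho)}^{-1} e^{-\beta(\rho)\epsilon_j}$. Picking any ground-energy index $k$, hypothesis (\ref{CONDPREP21_magmin}) of Prop.~\ref{prep2} then holds with $\beta = \beta(\rho)$ and $Z = Z_{\beta(\rho)}$, so Cor.~\ref{cor3bis} delivers $\ln \lambda_i \leq -\beta(\rho)\epsilon_i(1 - R(H)/N) - \ln Z_{\beta(\rho)}$ for every $\epsilon_i > \epsilon_j$, and the same rearrangement as above gives the desired pointwise bound on those indices. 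The $\epsilon_i = 0$ contributions to $E(\rho; H)$ vanish and need no control.

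The main obstacle is bounding populations in the band $0 < \epsilon_i \leq \epsilon_j$, where Cor.~\ref{cor3bis} is silent. I would choose $j$ so that $\epsilon_j$ is the minimal positive eigenvalue admitting a population at or below $\hat{\lambda}_j^{(\beta(\rho))}$; under $1$-structural stability this eliminates strict interior indices, since populations at a common energy level then coincide. In the non-stable setting of the second part the remaining problematic indices have $\epsilon_i = \epsilon_j$ but $\lambda_i > \hat{\lambda}_j^{(\beta(\rho))}$, and I would control them through a supplementary use of Prop.~\ref{PREP1} with an auxiliary $\epsilon_c > \epsilon_j$, plugging the exponential bound already established on $\lambda_c$ into the passivity relation $\lambda_i \leq \lambda_c^{m/N}\lambda_0^{1-m/N}$ with $m/N \in [\epsilon_i/\epsilon_c - 1/N,\, \epsilon_i/\epsilon_c)$ and verifying that the bootstrap stays within the factor $e^{\beta(\rho)\epsilon_{\max} R(H)/N}$. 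This verification is what genuinely distinguishes the exponential bound (\ref{RESULTexp}) from (\ref{RESULT}) here, since the convexity-of-$\ln Z_\beta$ trick that carried Condition~2 of Prop.~\ref{ipotesiminimali} through to the multiplicative bound does not transfer directly to the exponential form.
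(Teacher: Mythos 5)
Your first part is sound and coincides with the paper's own argument: for $\lambda_0<Z_{\beta(\rho)}^{-1}$ the majorization argument gives {\bf Condition 1}, the chain leading to~(\ref{implic1nuovasum}) gives the pointwise bound, and replacing $\epsilon_i$ by $\epsilon_{\max}$ in the correction term and summing yields~(\ref{RESULTexp}); for $\lambda_0\geq Z_{\beta(\rho)}^{-1}$ structural stability reduces the claim to the second part via~(\ref{COND00}). The problem is the second part. You have correctly located the obstruction --- {\bf Corollary~\ref{cor3bis}} controls only the populations with $\epsilon_i>\epsilon_j$, leaving the band $0<\epsilon_i\leq\epsilon_j$ uncontrolled --- but your proposed resolution does not close it. First, choosing $\epsilon_j$ minimal does not ``eliminate'' the interior indices: it guarantees that every population at an energy strictly between $0$ and $\epsilon_j$ exceeds its Gibbs counterpart, which is precisely the dangerous situation, not its absence. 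Second, the bootstrap you sketch via {\bf Proposition~\ref{PREP1}} gives $\lambda_i\leq\lambda_c^{m/N}\lambda_0^{1-m/N}$ with $m/N\approx\epsilon_i/\epsilon_c<1$; since in this regime you only know $\lambda_0\geq Z_{\beta(\rho)}^{-1}$ (so $\lambda_0$ may be close to $1$), the best you extract is $\lambda_i\lesssim Z_{\beta(\rho)}^{-\epsilon_i/\epsilon_c}e^{-\beta(\rho)\epsilon_i(1-R(H)/N)}$, whose prefactor exceeds $Z_{\beta(\rho)}^{-1}$ by the uncontrolled factor $Z_{\beta(\rho)}^{1-\epsilon_i/\epsilon_c}$. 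The pointwise estimate $\lambda_i\leq Z_{\beta(\rho)}^{-1}e^{-\beta(\rho)\epsilon_i}e^{\beta(\rho)\epsilon_{\max}R(H)/N}$ that your plan requires simply need not hold for a merely $N$-passive state on these low-lying levels, so the ``verification that the bootstrap stays within the factor'' that you defer is not a formality --- it fails.

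The paper avoids pointwise control of the band altogether. Taking $b$ to be the \emph{first} index with $\lambda_b\leq\hat\lambda_b$ (which has $\epsilon_b>0$ by~(\ref{COND00})) gives the partial-sum inequality~(\ref{iniziosopra}), and the non-majorization of the two isoentropic spectra gives a later index $c$ with the reversed partial-sum inequality~(\ref{sorpasso_sum}), which can be pushed forward so that every $i>c$ has $\epsilon_i>\epsilon_b$. Weighting the difference of these two partial-sum inequalities by the monotone energies then yields the \emph{aggregate} statement $\sum_{i\leq c}(\lambda_i-\hat\lambda_i)\epsilon_i<0$, i.e.~(\ref{raddrizza_inizio}): the total energy carried by the levels up to $c$ is already below the corresponding Gibbs contribution, with no individual bound on any $\lambda_i$ in the band. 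The pointwise bound~(\ref{controlloquellidopo}) from {\bf Corollary~\ref{cor3bis}} is then applied only to the tail $i>c$, where it is available. If you want to complete your proof you should replace your bootstrap by this partial-sum argument (or find an independent way to control the aggregate energy of the band); as written, the second half of your proposal is a plan whose critical step is both unexecuted and, in the form stated, false.
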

\begin{proof}
	Consider first the case in which the population of the ground state of $\rho\in \bar{\mathfrak P}_H^{(N,1)}$ is $\lambda_0 \leq Z_{\beta(\rho)}^{-1}$. In this case, the majorization argument of Appendix~\ref{maj} implies the validity of~(\ref{COND1nuova}), that is, that there exist  $\epsilon_{c}> \epsilon_b > 0$ such that 
	$\lambda_b \geq  \hat{\lambda}_b$ and
	$\lambda_{c} \leq  \hat{\lambda}_{c}$ (again we use the convention~(\ref{DEFLAMBDAHAT}) 
	to indicate the eigenvalues of 
	the Gibbs state $\omega_{\beta(\rho)}$). 
	As seen during the proof of \textbf{Proposition \ref{ipotesiminimali}},~(\ref{COND1nuova}) implies the inequality~(\ref{implic1nuovasum}).
	We now observe that replacing $\epsilon_i\tfrac{R(H)}{N}$ in the right-hand-side of
	(\ref{implic1nuovasum}) with $\epsilon_{\max} \tfrac{R(H)}{N}$, we get 
	\begin{eqnarray} \label{implic1nuovasumnuova} 
	\lambda_i &\leq&Z_{\beta(\rho)}^{-1} e^{- \beta(\rho) \epsilon_i} e^{ \beta(\rho) \epsilon_{\max} \tfrac{R(H)}{N}} \;, \quad \forall \epsilon_i\;,
	\end{eqnarray} and hence 
	\begin{eqnarray}
	E(\rho;H) =  \sum_i \lambda_i \epsilon_i\leq  E_{\beta(\rho)}(H)  e^{\beta(\rho) \epsilon_{\max} \tfrac{R(H)}{N}}  \;. 
	\end{eqnarray} 
	
	Next we consider the case in which $\rho\in {\mathfrak P}_H^{(N)}$ is a $N$-passive state, not necessarily included in $\bar{\mathfrak P}_H^{(N,1)}$, and~(\ref{COND00}) holds. When also $\rho \in \bar{\mathfrak P}_H^{(N,1)}$, by virtue of equation~(\ref{EQUAL}) the condition~(\ref{COND00}) is implied by $\lambda_0 \geq Z_{\beta(\rho)}^{-1}$; therefore, this case will also complete the proof of~(\ref{RESULTexp}) for all the $N$-passive, 1-structurally stable states.
	Since $\rho$ and $\omega_{\beta(\rho)}$ have both trace one, there must exist some other eigenvalue $\lambda_{b}$ of $\rho$ such that $\lambda_{b} \leq \hat{\lambda}_{b} = Z_{\beta(\rho)}^{-1}e^{-\beta \epsilon_{b}}$. The assumption~(\ref{COND00}) ensures that $b>d_0$, i.e., that $\epsilon_b > 0$. Choose $\lambda_{b}$ to be the first eigenvalue of $\rho$ which is smaller than the corresponding eigenvalue of $\omega_{\beta(\rho)}$, i.e. $\lambda_b \leq \hat{\lambda}_b$ and $\lambda_i \geq \hat{\lambda}_i$ for every $i < b$ (it is worth remarking that, in presence of an Hamiltonian with a degenerate spectrum, the set of indices such that $\epsilon_i < \epsilon_{b}$ does not coincide with the set of indices such that $i < b$. In what follows we will make use of both kinds of conditions). This implies 
	\begin{equation} \label{iniziosopra}
	\sum_{i < b} \lambda_i \geq \sum_{i < b} \hat{\lambda}_i.
	\end{equation}
	The hypotheses (\ref{CONDPREP21_magmin}) of {\bf Proposition \ref{prep2}} are fulfilled by  identifying $\epsilon_k$ and $\epsilon_j$   
	respectively with $\epsilon_0=0$ and $\epsilon_{b}$, and taking
	$Z=Z_{\beta(\rho)}$, $\beta=\beta(\rho)$. 
	Therefore, for any $\epsilon_i > \epsilon_{b}$ the inequality~(\ref{IMPO1new_casomagZ}) holds:
	\begin{equation}  
	\epsilon_i > \epsilon_{b} \Longrightarrow \ln \lambda_i \leq -\beta(\rho) \epsilon_i  \left( 1 - \tfrac{R(H)}{N} \right) - \ln {Z_{\beta{\rho}}}\;. 
	\label{controlloquellidopo}
	\end{equation} 
	Given the non-majorization condition, we can infer from~(\ref{iniziosopra}) that there must exist some other eigenvalue $\lambda_c$, with $c>b$, such that 
	\begin{equation} \label{sorpasso_sum}
	\sum_{i < c} \lambda_i \leq \sum_{i < c} \hat{\lambda}_i
	\end{equation}
	Now we claim that we can choose an index $c$ such that $\epsilon_{c+1} > \epsilon_{b}$. Indeed, recalling that the eigenvalues $\{ \lambda_i \}_i$ are arranged in decreasing order, if $\epsilon_{c+1} = \epsilon_b$ then $\lambda_{c+1} \leq \lambda_{b} \leq \hat\lambda_b = \hat\lambda_{c+1}$, and equation~(\ref{sorpasso_sum}) continues to be valid if we extend the range of the sum~(\ref{sorpasso_sum}) to $i < c+1$. Therefore, we can always choose an index $c$ in~(\ref{sorpasso_sum}) such that $\epsilon_{c+1} > \epsilon_c$. So the inequality~(\ref{controlloquellidopo}) is valid for every $i > c$, since every $i > c$ also satisfies $\epsilon_i \geq \epsilon_{c+1} > \epsilon_c$.
	From the equations~(\ref{iniziosopra}) and~(\ref{sorpasso_sum}) follows that
	\begin{equation}
	\sum_{i < b} \left( \lambda_i - \hat{\lambda}_i \right) \leq - \sum_{b \leq i < c} \left( \lambda_i - \hat{\lambda}_i \right)\;,
	\label{sorpasso}
	\end{equation}
	which in turns implies that
	\begin{equation}
	\sum_{i < b} \epsilon_i \left( \lambda_i - \hat{\lambda}_i \right) < \sum_{i < b} \epsilon_{b} \left( \lambda_i - \hat{\lambda}_i \right)  
	 \leq - \sum_{b \leq i < c} \epsilon_{b} \left( \lambda_i - \hat{\lambda}_i \right) <
	- \sum_{b \leq i < c} \epsilon_i \left( \lambda_i - \hat{\lambda}_i \right) 
	\end{equation}
	or equivalently that
	\begin{equation}\label{raddrizza_inizio}
	\sum_{i \leq c} \left( \lambda_i - \hat{\lambda}_i \right) \epsilon_i < 0 \; .
	\end{equation}
	Using~(\ref{raddrizza_inizio}) between the second and the third line and then~(\ref{controlloquellidopo}) in the third line, we finally conclude that 
	\begin{multline}
	E(\rho;H) = \sum_i \lambda_i \epsilon_i = \sum_{i \leq c} \lambda_i \epsilon_i + \sum_{i > c} \lambda_i \epsilon_i = \\
	\sum_{i \leq c} \hat{\lambda}_i \epsilon_i + \sum_{i \leq c} \left( \lambda_i - \hat{\lambda}_i \right) \epsilon_i  + \sum_{i > c} \lambda_i \epsilon_i \\
	<  \sum_{i \leq c} \hat{\lambda}_i \epsilon_i + \sum_{i > c} \lambda_i \epsilon_i \leq 
	\sum_i \hat{\lambda}_i e^{ \beta(\rho) \epsilon_{i} \tfrac{R(H)}{N}}  \epsilon_i \\
	< e^{ \beta(\rho) \epsilon_{\max} \tfrac{R(H)}{N}} \sum_i \hat{\lambda}_i \epsilon_i = e^{ \beta(\rho) \epsilon_{\max} \tfrac{R(H)}{N}} E_{\beta(\rho)} \;,
	\end{multline}
	which proves the thesis. 	
\end{proof}

\subsection{Saturation of the inequality~(\ref{RESULT})
}\label{sec:sat} 

\begin{figure}
	\centering
	\includegraphics[width=\columnwidth]{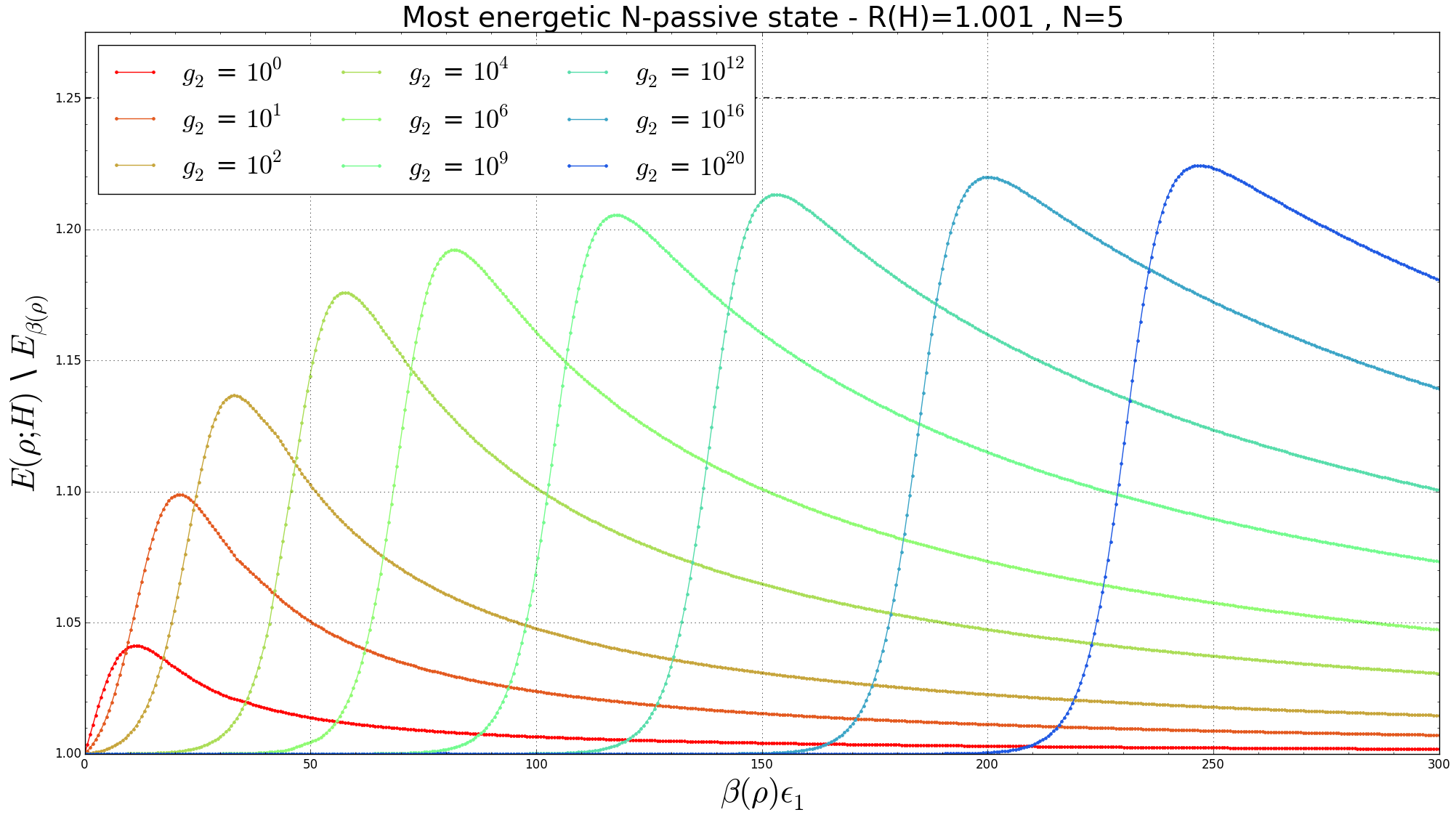}
	\caption{Energy ratio $\alpha := E(\rho; H) / E_{\beta(\rho)}$ of the most energetic 5-passive, \ 1-structurally stable state as a function inverse temperature $\beta(\rho)$, for an Hamiltonian with three energy levels $\epsilon_0 = 0$, $\epsilon_1$, and $\epsilon_2 = 1.001 \epsilon_1$, ($r=1.001$)  and for different values of the degeneracy $g_2$. The  dashed line at the top of the plot represents the upper bound (\ref{alphamax}).	 }
	\label{fig:moltodegeneri}
\end{figure}

The inequality~(\ref{RESULT}) implies that 
for $N>R(H)$, the ratio 
\begin{eqnarray} \label{DEFALPHA}  \alpha:= E(\rho; H) / E_{\beta(\rho)} (H)\;,\end{eqnarray}  between the mean energy of a $N$-passive, $1$-structurally stable  state $\rho$ and the energy of the Gibbs state $\omega_{\beta(\rho)}$ that has the same entropy of $\rho$, can be at most
\begin{equation}
\label{alphamax}
\alpha \leq  \alpha_{\max} := \frac{N}{N-R(H)} \;. 	\end{equation}
In this section we shall exhibit an explicit example of $N$-passive, $1$-structurally stable states whose energy are arbitrarily close to the limit (\ref{alphamax}) -- see also Fig.~\ref{fig:moltodegeneri}.
Although this example is contrived (it requires very small temperatures and very large degeneracies), it works for all $N\geq 2$, for some specific value of $\alpha_{\max}$. In particular our example requires that 
for some integer $1 \leq m < N$ one has
\begin{equation}
\label{CONDsuR}
R(H)/N \gtrsim \frac{1}{m+1}\;, 
\end{equation} 
or equivalently $\alpha_{\max} \gtrsim\frac{m+1}{m}$.
Given $N\geq 2$, consider the
set of 	$N$-passive, $1$-structurally stable state $\rho\in \bar{\mathfrak P}_H^{(N,1)}$ 
associated with a Hamiltonian $H$ characterised by three energy levels: 
\begin{eqnarray}
\epsilon_0 = 0\;, \quad \epsilon_1>0\;, 
\quad \epsilon_2 = r \epsilon_1\;,
\end{eqnarray}
with $r>1$ being the parameter that provides the $R(H)$ of the model,
i.e. 
\begin{eqnarray} R(H) = r \;. \label{RgandeugualeRpiccolo}
\end{eqnarray} 
We  also assume $\epsilon_1$ and $\epsilon_2$ to have degeneracies $g_1$ and $g_2$ respectively,
whose values will be specified later on. 
Under the above premise, in what follows we shall use $\epsilon_1$, $g_1$, and $g_2$ 
as free parameters over which we optimize to enforce the saturation of the bound (\ref{alphamax}).

First of all, given 
the associated
iso-entropic Gibbs counterpart $\omega_{\beta(\rho)}$ of  $\rho\in \bar{\mathfrak P}_H^{(N,1)}$
we exploit $\epsilon_1$ to force it into the
low temperature
regime imposing the constraint 
\begin{eqnarray}
\beta(\rho) \epsilon_1 \gg 1\;. \label{lowt}  
\end{eqnarray} 
On one hand this assumption 
makes sure that  (\ref{RESULT}) 
provides a  bound which is tighter than the exponential one given by Eq.~(\ref{RESULTexp}). 
On the other hand, we can use (\ref{lowt})  to approximate the populations
$\hat{\lambda}_j= Z_{\beta(\rho)}^{-1} e^{ - \beta(\rho) \epsilon_j}$ 
of $\omega_{\beta(\rho)}$ as 
\begin{equation}
\label{lambdaeq}
\hat{\lambda}_0 \simeq 1,  \quad  \hat{\lambda}_1
\simeq e^{-\beta(\rho)\epsilon_1}, \quad \hat{\lambda}_2 \simeq  e^{-\beta(\rho) r \epsilon_1} \;.
\end{equation}
For a reason that will soon become apparent, we impose an additional condition on $\epsilon_1$, namely that
\begin{equation}
\label{lowlowt}
\beta(\rho) \epsilon_1 > \frac{N}{r} \ln (r \alpha_{\max}) \;.
\end{equation}
This condition is clearly compatible with~(\ref{lowt}).

Now we fix $g_1$ and $g_2$ to ensure that despite the fact that 
$\hat{\lambda}_1 \gg \hat{\lambda}_2$, the total population in the $\epsilon_2$ energy level will  still be bigger than the total population in the $\epsilon_1$, i.e.  
we assume 
\begin{equation}
\label{rososkins}
g_2 e^{-\beta(\rho) r \epsilon_1 } \gg g_1 e^{-\beta(\rho) \epsilon_1}  \Longrightarrow
\frac{\ln ({g_2}/{g_1})}{\beta(\rho) \epsilon_1}   > r - 1\;. 
\end{equation}
Condition (\ref{rososkins}) ensures that the energy and the entropy of the thermal equilibrium states $\omega_\beta(\rho)$ are dominated by the contribution from the higher energy level of the system, i.e. 
\begin{eqnarray}
\label{E_0}
E_{\beta(\rho)}(H)  &\simeq& g_2 e^{-\beta(\rho) \epsilon_1 r} \epsilon_1 r\;, \\ 
\label{S_0} 
S_{\beta(\rho)} &\simeq& g_2 e^{-\beta(\rho) \epsilon_1 r} \beta(\rho) \epsilon_1 r\;.
\end{eqnarray}
Take now $m < N$ integer such that  
\begin{equation}
\frac{m}{N} <\frac{1}{r} \leq \frac{m+1}{N} \;. \label{asdf}
\end{equation}
Note  that such $m$ can be identified with the same $m$ of (\ref{CONDsuR}):
accordingly to fully comply with such constraint we require 
$1/r$ to be very close to the upper bound of~(\ref{asdf}), so that 
we can also write 
\begin{equation}
\frac{m}{N} = \frac{m+1}{N} - \frac{1}{N} \gtrsim \frac{1}{r} - \frac{1}{N}\;.
\label{massimadistanza}
\end{equation}
By virtue of \textbf{Proposition \ref{PREP1}}, equation~(\ref{asdf}) implies that for $\rho\in \bar{\mathfrak P}_H^{(N,1)}$ there must hold the inequality 
\begin{eqnarray} \label{constraint} 
\lambda_1 \leq  \lambda_2^{\frac{m}{N}}\lambda_0^{\frac{N-m}{N}}\lesssim   \lambda_2^{\frac{N-r}{rN}}\;,\end{eqnarray}
where in the last passage we used equation~(\ref{massimadistanza}) and the fact that $\lambda_0\lesssim 1$. 
Now we focus on the special subset of the density matrices $\rho\in \bar{\mathfrak P}_H^{(N,1)}$ that have $\lambda_0\simeq 1$  and 
$\lambda_1 \ll 1$, and which saturate the limit posed by 
Eq.~(\ref{constraint}). We parametrize the populations of $\rho$ as 
\begin{equation}
\label{lambdamep}
\lambda_0 \simeq 1, \quad   \lambda_1 = \xi, \quad \lambda_2 \simeq
(\eta \xi)^{\frac{rN}{N-r}} =  (\eta\xi)^{r \alpha_{\max}}\;, 
\end{equation}
with $\eta\leq 1$ and  $\xi\ll 1$ which in particular 
we assume to fulfil the inequality 
\begin{equation}
\label{condxi1}
\xi \ll 
e^{-k_0\frac{(r -1) \beta(\rho) \epsilon_1}{ r \alpha_{\max} -1}} \;,
\end{equation}
with  $k_0\gg 1$ being some large fixed constant (notice  that thanks to 
(\ref{lowt}), Eq.~(\ref{condxi1}) is in perfect agreement with the request of having $\xi\ll 1$, indeed
the larger is $\beta(\rho) \epsilon_1$ the smaller is $\xi$). 
We now impose  the energy and the entropy of the state (\ref{lambdamep}) to be dominated by 
$\lambda_1$. Accordingly
we set a new condition for $g_1$ and $g_2$, requiring that 
\begin{equation}
\label{condxi}
g_1 \xi \gg  g_2 \xi^{r \alpha_{\max}}  \Longrightarrow
\frac{\ln ({g_2}/{g_1})}{\beta(\rho) \epsilon_1}   < \frac{1-r \alpha_{\max}}{ \beta(\rho) \epsilon_1} \ln \xi \;,
\end{equation}
which thanks to our choice (\ref{condxi1}) is perfectly compatible with our previous assumption 
(\ref{rososkins}).

Equations~(\ref{condxi}) and (\ref{lambdamep}) lead to
the following approximated expressions for the mean energy and entropy of~$\rho$:
\begin{equation}
\label{E_P}
E (\rho; H) \simeq g_1 \xi \epsilon_1\;,  \qquad  
S (\rho) \simeq -g_1\xi \ln \xi\;.
\end{equation}
Equation~(\ref{E_P}) gives the energy $E(\rho; H)$ and the entropy $S(\rho)$ of the state $\rho$ (whose populations are defined in~(\ref{lambdamep})), as a function of the parameters $g_1$, $\xi$ and $\epsilon_1$. Equations~(\ref{E_0}) and~(\ref{S_0}) express respectively the energy $E_\beta(H)$ and the entropy $S_\beta$ of Gibbs states, as a function of the parameters $g_2$, $r$ and $\epsilon_1$. We are now ready to solve them with the conditions $S(\rho) = S\beta$ and $E(\rho; H) = \alpha E_\beta$.

On the one hand, together with~(\ref{E_0}),the first expression of ~(\ref{E_P}) allows us to write the ratio~(\ref{DEFALPHA}) in terms of $\xi$ as 
\begin{eqnarray}
\label{ansatzxi}
\alpha &\simeq&\xi  \frac{g_1}{g_2}  \frac{ e^{\beta(\rho) \epsilon_1 r}}{r} \;.
\end{eqnarray}
On the other hand, from the second expression of Eq.~(\ref{E_P}), we obtain
the additional condition 
\begin{equation}
\label{equiS}
g_2 e^{-\beta(\rho) \epsilon_1 r} \beta(\rho) \epsilon_1 r \simeq -g_1 \xi \ln \xi \;,
\end{equation}
that follows from the request that  $\rho$ and $\omega_{\beta{(\rho)}}$ have the same entropy
(once more it is worth noticing that no conflict arises with our previous assumptions, since
the large values of $\beta(\rho) \epsilon_1$ imposed by (\ref{lowt}) are in agreement with small
values of $\xi$). 
Replacing  (\ref{ansatzxi}) for $\xi$ into  (\ref{equiS}) leads to a transcendental equation
for  the ratio~(\ref{DEFALPHA})  of the model: 
\begin{equation} \label{ALPHAtrans} 
\alpha^{-1} \simeq  r - \frac{\ln{( g_2 / g_1 )}}{\beta(\rho) \epsilon_1} - \frac{\ln(r\alpha)}{\beta(\rho) \epsilon_1} \;.
\end{equation}
We now claim that  it is possible to set the parameters of the model (i.e. the quantities
$r$, $\epsilon_1$, $g_1$, $g_2$, and $k_0$) in such a way that
the bound (\ref{RESULT}) saturates, 
by forcing the solution 
$\alpha \simeq \alpha_{\max}$ from  Eq.~(\ref{ALPHAtrans}) while 
fulfilling all  the constraints we invoked in the derivation, i.e. the inequalities (\ref{CONDsuR}), (\ref{lowt}), (\ref{lowlowt}), (\ref{rososkins}), and (\ref{condxi1}).

To see this let first observe that 
from (\ref{rososkins}) it follows that $\alpha^{-1} < 1    - \frac{\ln(r\alpha)}{\beta(\rho) \epsilon_1} $ 
which simply says that $\alpha$ is a quantity greater than 1. On the contrary 
a lower bound for $\alpha^{-1}$ can be obtained 
via the constraint~(\ref{condxi1}) which via (\ref{ansatzxi}) can be written as 
$\frac{\ln(g_2/g_1)}{\beta(\rho)\epsilon_1} \ll r - \frac{r-1}{r  \alpha_{\max} - 1}	-  \frac{\ln(r \alpha)}{\beta(\rho) \epsilon_1}$. Inserting this into  (\ref{ALPHAtrans}) this implies 
$\alpha^{-1} \gg  \frac{r-1}{r  \alpha_{\max} - 1}$,
whose right-hand-side is strictly smaller than $\alpha_{\max}^{-1}$ due to the fact that $\alpha_{\max}\geq 1$ by construction.
Therefore, as far as it concerns to (\ref{rososkins}) and (\ref{condxi1}), 
$\alpha_{\max}$  is   inside of the domain of the allowed
values of $\alpha$ obtainable when solving~(\ref{ALPHAtrans}). 
To check the compatibility of such result with (\ref{CONDsuR}) and (\ref{lowt}) let us solve  (\ref{ALPHAtrans})
for $\frac{\ln(g_2/g_1)}{\beta(\rho)\epsilon_1}$ when 
$\alpha$ is taken to be equal to $\alpha_{\max}$, i.e. 
\begin{eqnarray}
\frac{\ln(g_2/g_1)}{\beta(\rho)\epsilon_1} \simeq r - \alpha_{\max}^{-1}	-  \frac{\ln(r \alpha_{\max})}{\beta(\rho) \epsilon_1}
\end{eqnarray} 
which to be in agreement with (\ref{rososkins})  would require 
\begin{eqnarray}
\alpha_{\max}^{-1}	+ \frac{\ln(r \alpha_{\max})}{\beta(\rho) \epsilon_1} < 1\;,
\end{eqnarray} 
a condition which is equivalent to~(\ref{lowlowt}).

\section{Non structurally stable, $N$-passive states} \label{sec:struct}

The bounds derived in the previous section in general do not apply to states which are just  $N$-passive.
Indeed the conditions of \textbf{Proposition \ref{ipotesiminimali}} may not be fulfilled by a passive state which is not structurally stable: one can always find a couple of eigenstates $\lambda_b>\lambda_a$ of $\rho$ such that $\lambda_a < Z_{\beta(\rho)}^{-1}e^{-\beta(\rho)\epsilon_a}$ and $\lambda_b > Z_{\beta(\rho)}^{-1}e^{-\beta(\rho)\epsilon_b}$, but their energies could be equal ($\epsilon_a = \epsilon_b$), and in this case (\ref{RESULT}) or (\ref{RESULTexp})  needs not to be valid. 
Of course this problem may arise only if the spectrum of $H$ is degenerate since, due to 	Eq.~(\ref{EXPINCLUHnondeg}), for non-degenerate Hamiltonians all $N$-passive states are also $N$-passive and $1$-structurally stable and the bounds we have derived trivially hold true. 
At least for the bound~(\ref{RESULTexp})
a similar conclusion can be drawn in the presence of degeneracies of the spectrum of $H$,
for all  $N$-passive state $\rho \in {\mathfrak P}_H^{(N)}$ 
whose ground state populations are larger than or equal to the ground state population of their
associated isoentropic Gibbs states, i.e. when Eq.~(\ref{COND00}) is true: under such condition, by proposition {\bf Proposition \ref{prepFINexp}} a generic $\rho \in {\mathfrak P}_H^{(N)}$ will still respect the bound~(\ref{RESULTexp}) -- see Table~\ref{TABLE1}.

In summary,  the only cases which are left uncovered
by at least one of our two bounds, is when
$H$ is degenerate, 
$\rho \in {\mathfrak P}_H^{(N)}$ is not $1$-structurally  stable and violate
the condition~(\ref{COND00}).
Aim of the present section is to deal with these special configurations.	 To begin with, it is worth remarking that  this case
includes both the situation where $\rho$ has sufficiently large entropy which allows us to identify 
an isoentropic Gibbs counterpart  $\omega_{\beta(\rho)}$, as well as the more pathological cases
where $S(\rho) <\ln d_0$ for which  $\omega_{\beta(\rho)}$ does not even exists. 
Still, in both scenarios   we can associate to  $\rho$ a $N$-passive, $1$-structurally stable density matrix  
$\bar{\rho} \in \bar{\mathfrak P}_H^{(N,1)}$ obtained by 
replacing the populations  $\lambda_j$s  of $\rho$ with
their  mean values computed by averaging them over all the energy levels with the same energy eigenvalues, i.e. 
\begin{equation} 
\bar{\lambda}_i := \frac{1}{d_{\epsilon_i}} \sum_{\epsilon_j = \epsilon_i} \lambda_j \;,
\label{spianatura}
\end{equation}
where $d_{\epsilon_i}$ is the degeneracy of the energy level $\epsilon_i$. 
One can easily verify that the spectrum of $\bar{\rho}$ is majorized by the one of $\rho$~\cite{MAJ0,MAJ}. Therefore, 
while by construction 
$\bar\rho$ has
the same energy as $\rho$, its entropy is certainly not smaller than  $S(\rho)$, i.e.
\begin{eqnarray}  \label{ENTO} 
E(\bar{\rho};H) = E(\rho;H) \;, \qquad S(\bar{\rho}) \geq S(\rho)\;.  \end{eqnarray} 
Furthermore, since  $\bar{\rho}$ is $N$-passive and $1$-structurally stable it respects the inequalities (\ref{RESULTexp}) and (\ref{RESULT}). This means that, given a Hamiltonian with at least three distinct eigenvalues, for any $N > R(H)$ we can write
\begin{equation}
E(\rho;H) = E(\bar{\rho};H) 
\leq E_{{\beta}(\bar{\rho})}(H) \min\left\{ \left( 1 - \tfrac{R(H)}{N} \right)^{-1},e^{\beta(\bar{\rho}) \epsilon_{\max} \frac{R(H)}{N}} \right\} \;,
\label{goulish_approach1}    
\end{equation}
where as usual ${\beta}(\bar{\rho})$ indicates the inverse
temperature of the Gibbs state $\omega_{\beta(\bar{\rho})}$ that has the same entropy of $\bar{\rho}$.
By expanding Eq.~(\ref{goulish_approach1})  at large $N$ we can finally arrive to the following
compact expression  
\begin{equation}
E(\rho;H) = E(\bar{\rho};H) 
\leq E_{{\beta}(\bar{\rho})}(H) \left[ 1 + \frac{u(\bar{\rho})R(H)}{N} +  \mathcal{O} \left( \frac{1}{N^2} \right) \right]  \; ,
\label{goulish_approach}    
\end{equation}
where now 
$u(\bar{\rho}) := \min\{1, \beta(\bar{\rho})\epsilon_{\max}\}$.
Assume next that $S(\rho)\geq \ln d_0$ so that $\omega_{\beta(\rho)}$ does exist.
Notice that by the monotonocity relation that connects  the Gibbs functionals~(\ref{DEFNENBETA0})
and (\ref{DEFENTROBETA}), from (\ref{ENTO}) we have 
$\beta(\bar{\rho}) \leq \beta(\rho)$ and also that  
$E_{{\beta}(\bar{\rho})}(H)$ cannot be smaller than $E_{{\beta}({\rho})}(H)$, i.e. 
\begin{equation} 
S_{\beta(\bar{\rho})} = S(\bar{\rho}) \geq S(\rho) =S_{\beta(
	{\rho})}
 \Longrightarrow \label{ttt} 
E_{{\beta}(\bar{\rho})}(H)\geq E_{{\beta}({\rho})}(H). 
\end{equation}  
In order to convert~(\ref{goulish_approach1}) or (\ref{goulish_approach}) into a bound
that links the energy of $\rho$ with the energy of its Gibbs counterpart we need to find a way 
to reverse the inequality (\ref{ttt}), constructing an upper bound for  $E_{{\beta}({\rho})}(H)$ in terms
of $E_{{\beta}(\bar{\rho})}(H)$. 
For this purpose in the next paragraphs we  determine  
an upper bound of the quantity 
\begin{eqnarray}\Delta S(\rho) := S(\bar{\rho}) - S(\rho)= 
S_{\beta(\bar{\rho})}-S_{\beta(
	{\rho})}\;, \label{DEFINIZIONEDELTAS} 
\end{eqnarray} 
which
using again the monotonocity connection between~(\ref{DEFNENBETA0})
and (\ref{DEFENTROBETA})
will then be converted into the  inequality  we are looking for. 
Our final result will be that, whenever 
the condition~(\ref{COND00}) is false and $S(\rho)\geq \ln d_0$, we can write
\begin{multline}
E(\rho;H)<  \frac{N}{N-2}\left[ 1 + \frac{u(\rho)R(H)}{N} \right] E_{\beta(\rho)}(H) \\ +  \frac{(d_0 - 1)Z_{\beta(\rho)}^{-1}\epsilon_{\max}}{N-2}  + \frac{\beta^{-1}(\rho)}{N-2} + \mathcal{O} \left( \frac{1}{N^2} \right)\;,
\label{RESULT_inst}    
\end{multline}
with 
\begin{eqnarray} u({\rho}) := \min\{1, \beta({\rho})\epsilon_{\max}\}\;. \label{DEFURHO} \end{eqnarray} 
In the case of a Hamiltonian with a non-degenerate ground state ($d_0 = 1$) the above expression can be further
simplified to 
\begin{multline}
E(\rho;H) <
\frac{N}{N-2} \left[ 1 + \tfrac{u(\rho)R(H)}{N} \right]E_{\beta(\rho)}(H) + \tfrac{\beta^{-1}(\rho)}{N-2} +  \mathcal{O} \left( \frac{1}{N^2} \right) 
\label{RESULT_inst1}  \\
= E_{\beta(\rho)}(H) \left[ 1 + \tfrac{u(\rho)R(H)+2}{N} \right] + \tfrac{1}{N\beta(\rho)} + \mathcal{O} \left( \frac{1}{N^2} \right) \;.
\end{multline}
The above expressions refers to all the cases where $H$ has at least three independent eigenvalues. 
The only non-trivial configuration which is left unsolved is the one where $H$ is a two-level Hamiltonian 
and the system has dimension $d\geq 3$.
In this case, we show that (\ref{RESULT_inst}) is replaced by 
\begin{equation}
E(\rho;H)<  \frac{N-1}{N-2} E_{\beta(\rho)}(H) +  \frac{(d_0 - 1)Z_{\beta(\rho)}^{-1}\epsilon_{\max}}{N-2}  +  \mathcal{O} \left( \frac{1}{N^2} \right)\;.
\label{RESULT_inst_duelivelli}    
\end{equation}
Finally consider the situation where $S(\rho) <\ln {d_0}$ which even prevents us  the possibility of identifying a Gibbs counterpart of $\rho$.
Here -as shown in Sec.~\ref{SECIVC}-
Eq.~(\ref{RESULT_inst}) can be replaced by 
\begin{equation}\label{HOPEFULLYLAST} 
E(\rho; H)  < \epsilon_{\max} (d-d_0) \exp\left[ -N \ln d_0 + (N-1) S(\rho)) \right] \;. 
\end{equation}

\subsection{Derivation of the asymptotic bound~(\ref{RESULT_inst})} \label{sec:deriv} 
In order to calculate how much the entropy of the system increases
when passing from $\rho \in {\mathfrak P}_H^{(N)}$ to its 
$1$-structurally stable counterpart $\bar{\rho} \in \bar{\mathfrak P}_H^{(N,1)}$ defined in (\ref{spianatura}),   we need to know how much the eigenvalues $\lambda_j$ of $\rho$ can be ``spread out'' around their mean value $\bar\lambda_{j}$. 
To tackle this issue, for all eigenvalues $\epsilon$ of $H$,
we find it useful to introduce 
the corresponding  minimal and maximal populations of $\rho$, i.e. 
the quantities 
\begin{eqnarray} 
\lambda^{\min}({\epsilon}) :=\min_{\epsilon_i =\epsilon} \lambda_i \;, \qquad \quad
\lambda^{\max}({\epsilon}) :=\max_{\epsilon_i =\epsilon} \lambda_i \;, \label{DEFLAMBDAMIN} 
\nonumber \\
\end{eqnarray} 
which clearly fulfil the  inequality
\begin{eqnarray} 
\lambda^{\min}({\epsilon})  \leq \bar{\lambda}_i \leq \lambda^{\max}({\epsilon})  \;, \quad \quad \forall \epsilon_i=\epsilon\;. \label{MAGGIORE}
\end{eqnarray} 
In view of the previous discussion we shall then  assume the condition 
\begin{eqnarray} \label{ddfs} 
\lambda^{\min}(0)  < Z_{\beta(\rho)}^{-1}\;,
\end{eqnarray} 
namely the negation of condition~(\ref{COND00}).
\begin{prep}
	\label{DEG_POPS}
	Given $N\geq 2$ and  $\rho \in {\mathfrak P}_H^{(N)}$  a  $N$-passive state with entropy larger than or equal to $\ln d_0$ and 
	satisfying the condition 
	(\ref{ddfs}), the following inequality hold
	\begin{equation}
	\label{DEG_POPSZij}
	\ln \lambda_j - \ln \lambda_i < - \frac{ \ln Z_{\beta(\rho)} +\ln \lambda_j}{N-1}\;,
	\end{equation}
	for all the populations $\lambda_j$ and $\lambda_i$ of $\rho$ associated with a non-zero
	energy level $\epsilon>0$ of $H$ (i.e. $\epsilon_i=\epsilon_j=\epsilon$). 
\end{prep}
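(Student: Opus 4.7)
The plan is to apply the $N$-passivity condition~(\ref{eq1}) to a single carefully chosen pair of occupation multisets, then take logarithms.

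More precisely, I would pick an index $g$ in the ground eigenspace realising the minimum ground-state population, so that $\epsilon_g = 0$ and $\lambda_g = \lambda^{\min}(0)$, and test~(\ref{eq1}) against the multisets $I_N$ given by $n_j = N$ (all other occupations zero) and $J_N$ given by $n_i = N-1$, $n_g = 1$ (all other occupations zero). Because $\epsilon_i = \epsilon_j = \epsilon > 0$, the associated total energies are $N\epsilon$ and $(N-1)\epsilon$ respectively, so the strict inequality $\sum_k n_k\epsilon_k > \sum_k m_k \epsilon_k$ holds, and~(\ref{eq1}) delivers
\begin{equation*}
\lambda_j^N \;\leq\; \lambda_i^{N-1}\,\lambda^{\min}(0).
\end{equation*}
Invoking the hypothesis $\lambda^{\min}(0) < Z_{\beta(\rho)}^{-1}$ then promotes this to the strict bound $\lambda_j^N < \lambda_i^{N-1}\,Z_{\beta(\rho)}^{-1}$; taking logarithms and isolating $\ln \lambda_j - \ln \lambda_i$ reproduces~(\ref{DEG_POPSZij}) verbatim.

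The only subtle step is ensuring that all the logarithms are finite, and this is where the entropy assumption $S(\rho) \geq \ln d_0$ enters the picture: if $\lambda^{\min}(0)$ were zero, the $1$-passivity ordering~(\ref{ORD2}) would force every $\lambda_k$ with $\epsilon_k > 0$ to vanish, so $\rho$ would be supported entirely on $\mathcal{H}_G$ with $S(\rho) \leq \ln d_0$, saturated only by $\rho = \Pi_G/d_0$, which has $\lambda^{\min}(0) = 1/d_0 > 0$, a contradiction. Hence $0 < \lambda^{\min}(0) < Z_{\beta(\rho)}^{-1}$, every logarithm in the argument above is finite, and the strict inequality in the conclusion is preserved. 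I do not foresee any genuine obstacle: the content of the proposition reduces to a one-line rearrangement once the right occupation multisets are identified, with the entropy hypothesis playing only the auxiliary role of excluding the pathological zero-population case.
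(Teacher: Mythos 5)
Your proposal is correct and follows essentially the same route as the paper: the authors apply the $N$-passivity condition~(\ref{eq1}) to exactly the same pair of occupation sets ($n_j=N$ versus $n_i=N-1$ plus one ground-level excitation), obtain $\ln\lambda_j-\ln\lambda_i\leq(\ln\lambda_\ell-\ln\lambda_j)/(N-1)$, and then set $\lambda_\ell=\lambda^{\min}(0)$ and invoke~(\ref{ddfs}) to get the strict bound. Your extra remark ruling out $\lambda^{\min}(0)=0$ via the entropy hypothesis is a harmless refinement the paper leaves implicit.
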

\begin{proof}
	If the energy level $\epsilon$ is not degenerate (i.e. $d_\epsilon=1$)  the inequality (\ref{DEG_POPSZij}) is trivial (in the this case the left-hand-side term is null, while the
	right-hand-side is non-negative due to (\ref{ddfs}). 
	On the contrary, if $\epsilon$ is degenerate, let $\lambda_j$ and $\lambda_i$ two different
	populations of $\rho$ that are associated to it, i.e. $\epsilon_j= \epsilon_i= \epsilon$.
	Apply hence the  $N$-passivity condition~(\ref{eq1}), choosing a population set $I_N$ which contains as only non-zero term $n_j=N$, and a population set $J_N$ which contains as only non-zero terms $n_i=N-1$ and $n_\ell=1$ with $\ell \leq d_0-1$ referring to one of the eigenvalues of the ground state energy level. Simple algebra allows us to recast this result into the inequality 
	\begin{equation}
	\label{eq:deg_pops}
	\ln \lambda_j - \ln \lambda_i \leq  \frac{\ln \lambda_\ell - \ln \lambda_j}{N-1}\;,
	\end{equation}
	which leads to~(\ref{DEG_POPSZij}) when
	taking $\lambda_\ell=\lambda^{\min}({0})$, and 
	enforcing~(\ref{ddfs}). 
\end{proof}
\begin{corol}
	\label{DEG_POPS-corol}
	Given $N\geq 2$ and  $\rho \in {\mathfrak P}_H^{(N)}$  a  $N$-passive state   with entropy larger than or equal to $\ln d_0$ and
	satisfying the condition 
	(\ref{ddfs}), the following inequalities hold
	\begin{equation}
	\label{DEG_POPSZ}
	\ln \lambda^{\max}({\epsilon}) - \ln \lambda^{\min}({\epsilon}) < - \frac{ \ln Z_{\beta(\rho)} +\ln \lambda^{\max}({\epsilon})}{N-1}\;,
	\end{equation}
	for all the energy level $\epsilon>0$ of $H$.		
\end{corol}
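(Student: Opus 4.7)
The plan is to obtain Corollary~\ref{DEG_POPS-corol} essentially as a direct specialization of the preceding Proposition~\ref{DEG_POPS}. For a fixed energy level $\epsilon>0$ I would choose the two populations appearing in~(\ref{DEG_POPSZij}) to be the extremal ones at that level, namely $\lambda_{j}:=\lambda^{\max}(\epsilon)$ and $\lambda_{i}:=\lambda^{\min}(\epsilon)$. When $\epsilon$ is degenerate these correspond to two honest indices with $\epsilon_j=\epsilon_i=\epsilon>0$, so Proposition~\ref{DEG_POPS} applies verbatim and immediately yields~(\ref{DEG_POPSZ}).

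The only case that requires a separate remark is a non-degenerate $\epsilon>0$, where $\lambda^{\max}(\epsilon)=\lambda^{\min}(\epsilon)$ and the left-hand side of~(\ref{DEG_POPSZ}) vanishes. To close this case I would invoke the hierarchy~(\ref{ORDE1}), which implies that $\rho$ is also $1$-passive; the absence of population inversions~(\ref{ORD2}) then yields $\lambda^{\max}(\epsilon)\leq\lambda^{\min}(0)$ for every $\epsilon>0$. Combined with the standing hypothesis~(\ref{ddfs}) this gives $\lambda^{\max}(\epsilon)\leq\lambda^{\min}(0)<Z_{\beta(\rho)}^{-1}$, so the right-hand side of~(\ref{DEG_POPSZ}) is strictly positive and the strict inequality holds with both sides still ordered correctly.

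I do not anticipate any real obstacle here: the corollary is a bookkeeping repackaging of Proposition~\ref{DEG_POPS} into a form that depends only on the extremal populations $\lambda^{\max}(\epsilon)$ and $\lambda^{\min}(\epsilon)$, which is precisely the shape required by the later estimates on $\Delta S(\rho)$ in the derivation of~(\ref{RESULT_inst}). The mildly delicate point -- that the strict inequality survives in the non-degenerate case -- is handled cleanly by the $1$-passivity consequence $\lambda^{\max}(\epsilon)<Z_{\beta(\rho)}^{-1}$, and no further input from the full $N$-passivity condition~(\ref{eq1}) is needed beyond what Proposition~\ref{DEG_POPS} already extracts.
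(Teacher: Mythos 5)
Your proposal is correct and follows the paper's own proof, which is precisely the one-line specialization $\lambda_j=\lambda^{\max}(\epsilon)$, $\lambda_i=\lambda^{\min}(\epsilon)$ in Eq.~(\ref{DEG_POPSZij}). Your extra discussion of the non-degenerate case is sound but redundant, since the proof of Proposition~\ref{DEG_POPS} already covers $d_\epsilon=1$ (and your observation that $1$-passivity gives $\lambda^{\max}(\epsilon)\leq\lambda^{\min}(0)<Z_{\beta(\rho)}^{-1}$, hence a strictly positive right-hand side, is exactly the justification implicit there).
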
 
\begin{proof}
	The result follows from (\ref{DEG_POPSZij}) by taking
	$\lambda_j=\lambda^{\max}({\epsilon})$, $\lambda_i=\lambda^{\min}({\epsilon})$.
\end{proof} 
Inequalities (\ref{DEG_POPSZij}) and (\ref{DEG_POPSZ})
are valid only for energy levels $\epsilon$ which are not the ground state. In the case $\epsilon_i = 0$, we can enforce only a looser upper bound:
\begin{prep}
	\label{DEG_POPS0}
	Given $N\geq 2$ and  $\rho \in {\mathfrak P}_H^{(N)}$  a  $N$-passive state  with entropy larger than or equal to $\ln d_0$ and
	satisfying the condition (\ref{ddfs}), 
	there exists an eigenvalue $\epsilon_a$ of $H$ such that
	\begin{equation}
	\label{max_divario_0}
	\ln \lambda^{\max}(0) - \ln  \lambda^{\min}(0) < \frac{\beta(\rho)\epsilon_a}{N-1} \;.
	\end{equation}
\end{prep}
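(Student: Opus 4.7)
The plan is to apply the $N$-passivity condition~(\ref{eq1}) in a way that couples the two ground-state populations through a single excited-state population. Let $p$ and $q$ be indices realising $\lambda^{\max}(0)$ and $\lambda^{\min}(0)$ respectively (both with $\epsilon_p=\epsilon_q=0$), and pick any index $a$ with $\epsilon_a>0$. Choosing the population sets $I_N$ with $n_p=N-1$, $n_a=1$ and $J_N$ with $m_q=N$, the total energies satisfy $\sum_i n_i\epsilon_i=\epsilon_a>0=\sum_j m_j\epsilon_j$, so~(\ref{eq1}) gives $(\lambda^{\max}(0))^{N-1}\lambda_a\leq(\lambda^{\min}(0))^N$, that is
\[
(N-1)\bigl[\ln\lambda^{\max}(0)-\ln\lambda^{\min}(0)\bigr]\leq\ln\lambda^{\min}(0)-\ln\lambda_a\;.
\]

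It therefore suffices to exhibit some index $a$ with $\epsilon_a>0$ satisfying $\lambda_a>\lambda^{\min}(0)\,e^{-\beta(\rho)\epsilon_a}$, since for such an $a$ one has $\ln\lambda^{\min}(0)-\ln\lambda_a<\beta(\rho)\epsilon_a$, and dividing the previous display by $N-1$ then produces exactly~(\ref{max_divario_0}). Establishing this existence claim is the step I expect to be the most delicate, and my plan is to argue it by contradiction.

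Suppose instead that $\lambda_a\leq\lambda^{\min}(0)\,e^{-\beta(\rho)\epsilon_a}$ for \emph{every} index $a$ with $\epsilon_a>0$. Summing these bounds weighted by $\epsilon_a$ and invoking the strict inequality $\lambda^{\min}(0)<Z_{\beta(\rho)}^{-1}$ from hypothesis~(\ref{ddfs}), I would obtain
\[
E(\rho;H)=\sum_{\epsilon_a>0}\lambda_a\epsilon_a \leq \lambda^{\min}(0)\sum_{\epsilon_a>0}\epsilon_a\,e^{-\beta(\rho)\epsilon_a} < Z_{\beta(\rho)}^{-1}\sum_{\epsilon_a>0}\epsilon_a\,e^{-\beta(\rho)\epsilon_a} = E_{\beta(\rho)}(H)\;.
\]
The resulting strict inequality $E(\rho;H)<E_{\beta(\rho)}(H)$ contradicts~(\ref{EmaggiorediE0}), which is available here because the assumption $S(\rho)\geq\ln d_0$ guarantees that the isoentropic Gibbs counterpart $\omega_{\beta(\rho)}$ is well defined. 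Hence an admissible $a$ must exist, which, combined with the displayed inequality of the first paragraph, closes the proof.
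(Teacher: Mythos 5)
Your proof is correct and follows essentially the same route as the paper's: the same application of the $N$-passivity condition~(\ref{eq1}) with $N-1$ copies of the maximal ground level plus one excited level against $N$ copies of the minimal ground level, and the same existence-by-contradiction argument for a suitable $\epsilon_a>0$ based on $E(\rho;H)\geq E_{\beta(\rho)}(H)$. The only cosmetic difference is that you compare $\lambda_a$ directly with $\lambda^{\min}(0)e^{-\beta(\rho)\epsilon_a}$ and fold the hypothesis~(\ref{ddfs}) into the contradiction step, whereas the paper compares $\lambda_a$ with the Gibbs population $Z_{\beta(\rho)}^{-1}e^{-\beta(\rho)\epsilon_a}$ and invokes~(\ref{ddfs}) in the final estimate.
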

\begin{proof}
	Since $\rho$ and $\omega_{\beta(\rho)}$ have the same mean energy,
	there must exist at least one eigenvalue of $\rho$ (say $\lambda_a$) associated with an 
	energy level $\epsilon_a>0$ 
	for which 
	\begin{eqnarray}\label{lambdaA} 
	\lambda_a \geq \hat{\lambda}_a=Z_{\beta(\rho)}^{-1} e^{-\beta\epsilon_a}\;,\end{eqnarray} 
	(indeed if by contradiction such level would not exist then  $E_{\beta(\rho)}(H)$ will be strictly larger than $E(\rho;H)$). 
	Let then $\lambda_i$ and $\lambda_j$ two populations associated with the ground state energy level of the 
	system (i.e. $\epsilon_i=\epsilon_j=0$). 
	Apply the $N$-passivity equation (\ref{eq1}), when selecting a population set $J_N$ which contains as only non-zero term $n_i=N$, and a population set $I_N$ which contains as only non-zero terms $n_j=N-1$ and $n_a=1$ to obtain the inequality 
	\begin{equation} \label{pass_tra0e0}
	\ln \lambda_j - \ln \lambda_i \leq \frac{\ln \lambda_i - \ln \lambda_a}{N-1}\;. 
	\end{equation}
	Identifying $\lambda_j$ and $\lambda_i$ with $\lambda^{\max}(0)$ and $\lambda^{\min}(0)$ respectively,
	Eq.~(\ref{pass_tra0e0}) leads to
	\begin{equation}
	\ln \lambda^{\max}(0)- \ln \lambda^{\min}(0) \leq \frac{\ln \lambda^{\min}(0) - \ln \lambda_a}{N-1} 
	< \frac{\beta(\rho)\epsilon_a }{N-1} ,\label{ineq1} 
	\end{equation}
	the last passage following from (\ref{ddfs}) and (\ref{lambdaA}). 	
\end{proof}

We are now ready to estimate the entropy gain $\Delta S({\rho})$ for each degenerate energy level of $H$.
We treat separately the three cases of the ground state, of the excited states with a population 
$\lambda_j$ higher than the corresponding population $\hat{\lambda}_j = Z_{\beta(\rho)}^{-1}e^{-\beta\epsilon_j}$ of the 
Gibbs state $\omega_{\beta(\rho)}$, and of 
the excited states with populations $\lambda_j$ smaller than~$\hat{\lambda}_j$.    

\begin{prep}
	\label{deltaS_maggiori}
	Given $N\geq 2$ and  $\rho \in {\mathfrak P}_H^{(N)}$  a  $N$-passive state 
	with entropy larger than or equal to $\ln d_0$ and
	satisfying the condition (\ref{ddfs}), such that there exist
	a strictly positive energy level $\epsilon>0$ of $H$ for which 
	\begin{eqnarray}\label{CONDPREP8} 
	\lambda^{\max}(\epsilon) > Z_{\beta(\rho)}^{-1}e^{-{\beta(\rho)}\epsilon}\;,
	\end{eqnarray} 
	then following inequality holds true,
	\begin{equation} \label{PRIMA1} 
	\sum_{\epsilon_j = \epsilon}\lambda_j (\ln {\lambda_j} -  \ln \bar{\lambda}_j) <  \sum_{\epsilon_j = \epsilon} \lambda_j \frac{\beta \epsilon}{N-1} \;, 
	\end{equation}
	with $\bar{\lambda}_j$ the eigenvalues of $\bar{\rho}$ defined in~(\ref{spianatura}).
\end{prep}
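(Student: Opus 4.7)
The plan is to combine the logarithmic-spread estimate of Corollary~\ref{DEG_POPS-corol} with the hypothesis (\ref{CONDPREP8}) to obtain a uniform bound on $\ln\lambda_j - \ln\bar\lambda_j$ over the degenerate subspace at energy $\epsilon$, and then integrate against the weights $\lambda_j$.

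First I would read the hypothesis (\ref{CONDPREP8}) in logarithmic form as $-\ln Z_{\beta(\rho)} - \ln\lambda^{\max}(\epsilon) < \beta(\rho)\epsilon$; this is precisely the awkward numerator that appears on the right-hand side of (\ref{DEG_POPSZ}). Since $\epsilon>0$, Corollary~\ref{DEG_POPS-corol} is applicable and, after inserting this bound, yields
\[
\ln\lambda^{\max}(\epsilon) - \ln\lambda^{\min}(\epsilon) \;<\; \frac{\beta(\rho)\,\epsilon}{N-1}.
\]
Next I would observe that, for any $j$ with $\epsilon_j = \epsilon$, one trivially has $\lambda_j \leq \lambda^{\max}(\epsilon)$, while $\bar\lambda_j$ is (by its very definition~(\ref{spianatura})) the arithmetic mean over the degenerate subspace, so $\bar\lambda_j \geq \lambda^{\min}(\epsilon)$. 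Taking logarithms and subtracting gives the pointwise estimate
\[
\ln\lambda_j - \ln\bar\lambda_j \;\leq\; \ln\lambda^{\max}(\epsilon) - \ln\lambda^{\min}(\epsilon) \;<\; \frac{\beta(\rho)\,\epsilon}{N-1},
\]
uniformly in $j$. Multiplying by the non-negative weight $\lambda_j$ and summing over $\{j:\epsilon_j=\epsilon\}$ then yields (\ref{PRIMA1}).

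I do not expect any serious obstacle: the only delicate point is sign-handling, since individual differences $\ln\lambda_j - \ln\bar\lambda_j$ can be of either sign. The argument survives because the majorant $\ln\lambda^{\max}(\epsilon)-\ln\lambda^{\min}(\epsilon)$ is non-negative and $j$-independent, so the multiplication by $\lambda_j\geq 0$ preserves the inequality. The key conceptual observation -- which is what the hypothesis (\ref{CONDPREP8}) is really designed to provide -- is that this hypothesis is exactly what converts the residual $-\ln Z_{\beta(\rho)}-\ln\lambda^{\max}(\epsilon)$ term in Corollary~\ref{DEG_POPS-corol} into the clean linear expression $\beta(\rho)\epsilon$ that matches the right-hand side of (\ref{PRIMA1}).
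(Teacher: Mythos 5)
Your proof is correct and follows essentially the same route as the paper's: the pointwise estimate $\ln\lambda_j-\ln\bar\lambda_j\leq\ln\lambda^{\max}(\epsilon)-\ln\lambda^{\min}(\epsilon)$ via (\ref{MAGGIORE}), the spread bound of Corollary~\ref{DEG_POPS-corol}, and the conversion of the residual $-\ln Z_{\beta(\rho)}-\ln\lambda^{\max}(\epsilon)$ into $\beta(\rho)\epsilon$ via (\ref{CONDPREP8}), followed by weighting with $\lambda_j$ and summing. The only difference is the order in which you chain the three ingredients, which is immaterial.
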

\begin{proof} Given $\epsilon_j=\epsilon$ the following chain of inequality can be written
	\begin{eqnarray}
	\ln \lambda_j - \ln \bar{\lambda}_j		&\leq&
	\ln  \lambda^{\max}(\epsilon)- \ln  \lambda^{\min}(\epsilon) \\ &<& - \frac{ \ln Z_{\beta(\rho)} + \ln \lambda^{\max}(\epsilon)}{N-1}  < \frac{\beta\epsilon}{N-1}\;, \nonumber 
	\end{eqnarray}
	where in the first passage we used~(\ref{MAGGIORE}), in the
	second we used  {\bf Corollary \ref{DEG_POPS-corol}}, and  in the last one we used~(\ref{CONDPREP8}).  Equation~(\ref{PRIMA1}) then follows by multiplying the above
	expression by $\lambda_j$ and summing over all possible energy levels of energy equal to $\epsilon$. 
\end{proof}

\begin{prep}
	\label{deltaS_minori}
	Given $N\geq 2$ and  $\rho \in {\mathfrak P}_H^{(N)}$  a  $N$-passive state 
	with entropy larger than or equal to $\ln d_0$ and
	satisfying the condition (\ref{ddfs}), such that there exist
	a strictly positive energy level $\epsilon>0$ of $H$ for which 
	\begin{eqnarray}\label{CONDPREP9} 
	\lambda^{\max}(\epsilon) \leq
	{\hat{\lambda}(\epsilon)} :=  Z_{\beta(\rho)}^{-1}e^{-{\beta(\rho)}\epsilon}\;,
	\end{eqnarray} 
	then following inequality holds true,
	\begin{equation} \label{tuttiminori}
	\sum_{\epsilon_j = \epsilon}\lambda_j (\ln {\lambda_j} -  \ln \bar{\lambda}_j)< 
	d_{\epsilon} \hat{\lambda}(\epsilon) \frac{{\beta(\rho)} \epsilon + 1 }{N-1}\;,
	\end{equation}
	with $\bar{\lambda}_j$ the eigenvalues of $\bar{\rho}$ defined in~(\ref{spianatura}) and
	$d_\epsilon$  the degeneracy of $\epsilon$.
\end{prep}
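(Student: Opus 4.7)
The strategy is to exploit two features of the sum: first, that $\bar\lambda_j$ reduces to a single common value $\bar\lambda_\epsilon := \tfrac{1}{d_\epsilon}\sum_{\epsilon_k=\epsilon}\lambda_k$ for every index $j$ with $\epsilon_j=\epsilon$, turning the left-hand side of (\ref{tuttiminori}) into the relative-entropy-like object $\sum_{\epsilon_j=\epsilon}\lambda_j\ln(\lambda_j/\bar\lambda_\epsilon)$; and second, that every such $\lambda_j$ is sandwiched between $\lambda^{\min}(\epsilon)$ and $\lambda^{\max}(\epsilon)$, whose ratio is controlled by \textbf{Corollary \ref{DEG_POPS-corol}}. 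In the first step I would drop the non-positive terms in the sum and use $\lambda_j/\bar\lambda_\epsilon \leq \lambda^{\max}(\epsilon)/\lambda^{\min}(\epsilon)$ together with $\lambda_j\leq\lambda^{\max}(\epsilon)$ to obtain the coarse bound $\sum_{\epsilon_j=\epsilon}\lambda_j\ln(\lambda_j/\bar\lambda_\epsilon) \leq d_\epsilon\,\lambda^{\max}(\epsilon)\,L$, where $L:=\ln\lambda^{\max}(\epsilon)-\ln\lambda^{\min}(\epsilon)$.

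Applying \textbf{Corollary \ref{DEG_POPS-corol}} at the level $\epsilon>0$ immediately yields the strict inequality $L < [-\ln Z_{\beta(\rho)}-\ln\lambda^{\max}(\epsilon)]/(N-1)$, so the whole task is reduced to proving the one-variable estimate $-\lambda^{\max}(\epsilon)\ln[Z_{\beta(\rho)}\lambda^{\max}(\epsilon)] \leq \hat\lambda(\epsilon)\,(\beta(\rho)\epsilon+1)$ over the range $\lambda^{\max}(\epsilon)\in(0,\hat\lambda(\epsilon)]$ dictated by the hypothesis (\ref{CONDPREP9}). I expect this last inequality to be the only real obstacle, because although the assumption $\lambda^{\max}(\epsilon)\leq\hat\lambda(\epsilon)$ gives an upper bound on $\lambda^{\max}(\epsilon)$ itself, it allows $-\ln\lambda^{\max}(\epsilon)$ to blow up as $\lambda^{\max}(\epsilon)\to 0^{+}$, so a naive replacement $\lambda^{\max}(\epsilon)\mapsto\hat\lambda(\epsilon)$ is illegitimate and one must instead optimize the product.

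To resolve this, I would substitute $u:=Z_{\beta(\rho)}\lambda^{\max}(\epsilon)\in(0,e^{-\beta(\rho)\epsilon}]$, so that the inequality becomes $-u\ln u \leq e^{-\beta(\rho)\epsilon}\,(\beta(\rho)\epsilon+1)$. The function $h(u):=-u\ln u$ is concave with global maximum $1/e$ attained at $u=1/e$, so the analysis splits according to whether $e^{-\beta(\rho)\epsilon}\leq 1/e$ or $e^{-\beta(\rho)\epsilon}> 1/e$. When $\beta(\rho)\epsilon\geq 1$, $h$ is monotonically increasing throughout $(0,e^{-\beta(\rho)\epsilon}]$ and its value at the right endpoint is $\beta(\rho)\epsilon\,e^{-\beta(\rho)\epsilon}$, which is strictly smaller than $(\beta(\rho)\epsilon+1)e^{-\beta(\rho)\epsilon}$. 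When $\beta(\rho)\epsilon<1$, the uniform bound $h(u)\leq 1/e$ suffices: indeed the map $t\mapsto (t+1)e^{-t}$ has derivative $-te^{-t}\leq 0$ on $[0,1]$ and thus minimum $2/e$ at $t=1$, so $(\beta(\rho)\epsilon+1)e^{-\beta(\rho)\epsilon}\geq 2/e>1/e$. Combining the two cases with the strict inequality produced by the Corollary yields (\ref{tuttiminori}); the degenerate possibility $\lambda^{\max}(\epsilon)=0$ forces the entire left-hand side to vanish via the convention $0\ln 0=0$, and is handled trivially since the right-hand side is non-negative.
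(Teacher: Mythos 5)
Your proof is correct, and it reaches (\ref{tuttiminori}) by a route that shares the paper's skeleton but differs in execution. Both arguments reduce, via (\ref{MAGGIORE}) and the $N$-passivity estimate of \textbf{Proposition~\ref{DEG_POPS}} / \textbf{Corollary~\ref{DEG_POPS-corol}}, to showing that $-x\ln(Z_{\beta(\rho)}x)\leq\hat\lambda(\epsilon)\,(\beta(\rho)\epsilon+1)$ for $x\in(0,\hat\lambda(\epsilon)]$. The paper does this term by term: it bounds $\lambda_j(\ln\lambda_j-\ln\bar\lambda_j)<-\lambda_j[\ln Z_{\beta(\rho)}+\ln\lambda_j]/(N-1)$ for each $j$ and then estimates each $-\lambda_j\ln(Z_{\beta(\rho)}\lambda_j)$ through a tangent-line argument for $f(\lambda)=-\lambda\ln(Z_{\beta(\rho)}\lambda)$ followed by a two-case split. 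You instead collapse the whole level into the single quantity $d_\epsilon\,\lambda^{\max}(\epsilon)\,[\ln\lambda^{\max}(\epsilon)-\ln\lambda^{\min}(\epsilon)]$, apply the Corollary once, and then optimize $h(u)=-u\ln u$ explicitly over $(0,e^{-\beta(\rho)\epsilon}]$. Your handling of the one-variable inequality is in fact more robust than the printed one: the paper declares $f$ convex although $f''(\lambda)=-1/\lambda<0$, and its intermediate estimate (\ref{prep9_caso1}) can fail for $\beta(\rho)\epsilon<1$ (e.g.\ $\beta(\rho)\epsilon=1/2$ and $Z_{\beta(\rho)}\lambda_j=0.35$ gives $-\lambda_j\ln(Z_{\beta(\rho)}\lambda_j)\approx 0.37/Z_{\beta(\rho)}$ versus $\hat\lambda_j\beta(\rho)\epsilon\approx 0.30/Z_{\beta(\rho)}$), even though the combined bound with the additional $+1$ term survives; your case split according to whether $\beta(\rho)\epsilon\geq 1$, using $\max_u h(u)=1/e$ together with the monotonicity of $t\mapsto(t+1)e^{-t}$, closes this step correctly. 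Two minor remarks: your coarse count of at most $d_\epsilon$ positive terms matches the stated prefactor; and the implicit requirement $\lambda^{\min}(\epsilon)>0$ whenever $\lambda^{\max}(\epsilon)>0$ is guaranteed by $N$-passivity (a vanishing population at a level $\epsilon>0$ forces all populations at that level to vanish), consistent with your separate treatment of the case $\lambda^{\max}(\epsilon)=0$.
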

\begin{proof}
	Given $\epsilon_j=\epsilon$,  we can write
	
	\begin{equation} 
	\ln \lambda_j - \ln \bar{\lambda}_j 		\leq
	\ln  \lambda_j- \ln  \lambda^{\min}(\epsilon)
	<- \frac{ \ln Z_{\beta(\rho)} + \ln \lambda_j}{N-1} \;,
	\end{equation} 
	where the first inequality follows from~(\ref{MAGGIORE}) and the second from 
	{\bf Proposition~\ref{DEG_POPS}} setting $\lambda_i=\lambda^{\min}(\epsilon)$
	in Eq.~(\ref{DEG_POPSZij}).	Multiplying then by $\lambda_j$ and summing over all possible
	choices of $\epsilon_j =\epsilon$ we have that
	\begin{equation} 
	\sum_{\epsilon_j=\epsilon} \lambda_j( \ln \lambda_j - \ln \bar{\lambda}_j )	<
	-  \sum_{\epsilon_j=\epsilon} \lambda_j \frac{ \ln Z_{\beta(\rho)} + \ln \lambda_j}{N-1}
	\label{prep9_stepiniziale}
	\end{equation}
	The function $f(\lambda) \equiv -\lambda\ln(Z_{\beta(\rho)}\lambda)$ is convex for $\lambda>0$:
	\begin{eqnarray}
	f(\lambda_j) &<& f(\hat{\lambda}_j) - f'(\lambda_j) \left( \hat{\lambda}_j - \lambda_j  \right) \nonumber \\
	&=& f(\hat{\lambda}_j) + \left[ 1 + \ln(\lambda_j) \right] \left( \hat{\lambda}_j - \lambda_j  \right) \nonumber \\
	&=& f(\hat{\lambda}_j) + \left[ 1 - \beta(\rho)\epsilon + \ln\left({\lambda}_j / \hat{\lambda}_j\right) \right] \left( \hat{\lambda}_j - \lambda_j  \right) \nonumber \\
	\label{lastimagiusta}
	\end{eqnarray}
	In the case in which $\ln\left({\lambda}_j / \hat{\lambda}_j\right) < \beta(\rho)\epsilon - 1$, the inequality~(\ref{CONDPREP9}) ensures that the last term in~(\ref{lastimagiusta}) is negative, meaning that $f(\lambda_j) < f(\hat{\lambda}_j)$, and so
	\begin{equation}
	-  \lambda_j \frac{ \ln Z_{\beta(\rho)} + \ln \lambda_j}{N-1} < \hat{\lambda}_j \frac{\beta(\rho) \epsilon}{N-1} \; .
	\label{prep9_caso1}
	\end{equation}
	In the case where instead $\ln\left({\lambda}_j / \hat{\lambda}_j\right) \geq \beta(\rho)\epsilon - 1$, or equivalently $\ln {\lambda}_j < -\ln Z_{\beta(\rho)} - 1$, using again~(\ref{CONDPREP9}) we can write 
	\begin{equation}
	-  \lambda_j \frac{ \ln Z_{\beta(\rho)} + \ln \lambda_j}{N-1} < \hat{\lambda}_j \frac{1}{N-1} \; .
	\label{prep9_caso2}
	\end{equation}
	The inequality~(\ref{tuttiminori}) can then be obtained combining~(\ref{prep9_caso1}) and~(\ref{prep9_caso2}), and summing over all the energy levels with $\epsilon_j = \epsilon$.
\end{proof}

\begin{prep}
	\label{deltaS_0}
	Given $N\geq 2$ and  $\rho \in {\mathfrak P}_H^{(N)}$  a  $N$-passive state 
	with entropy larger than or equal to $\ln d_0$ and
	satisfying the condition (\ref{ddfs}),
	then
	\begin{equation} \label{eq:deltaS_0}
	\sum_{\epsilon_j = 0}\lambda_j (\ln {\lambda_j} -  \ln \bar{\lambda}_j) 
	<  (d_0-1) Z_{\beta(\rho)}^{-1}e^{\frac{\beta(\rho)\epsilon_{\max}}{N-1}} \frac{\beta(\rho)\epsilon_{\max}}{N-1}
	\end{equation}
	with $\bar{\lambda}_j$ the eigenvalues of $\bar{\rho}$ defined in~(\ref{spianatura}),
	$d_0$ the degeneracy of the ground state, and $\epsilon_{\max} $  the greatest eigenvalues of $H$.
\end{prep}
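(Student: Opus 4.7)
The proof plan relies almost entirely on \textbf{Proposition \ref{DEG_POPS0}} combined with a crude sum-splitting estimate; no new passivity inequalities are needed. The first step is to convert \textbf{Proposition \ref{DEG_POPS0}} into a uniform bound. Since the $\epsilon_a$ produced there satisfies $0<\epsilon_a\leq \epsilon_{\max}$, one immediately obtains
\begin{equation}
\ln \lambda^{\max}(0) - \ln \lambda^{\min}(0) < \tfrac{\beta(\rho)\epsilon_{\max}}{N-1}\;,
\end{equation}
and then, invoking the hypothesis (\ref{ddfs}) on the minimum ground-state population,
\begin{equation}
\lambda^{\max}(0)  < Z_{\beta(\rho)}^{-1}\, e^{\beta(\rho)\epsilon_{\max}/(N-1)}\;.
\end{equation}
These two inequalities already contain the three factors appearing on the right-hand side of (\ref{eq:deltaS_0}) except for $(d_0-1)$, so the remaining task is purely combinatorial.

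Next I note that on the ground-state sector the averaged populations coincide, $\bar{\lambda}_j = \bar{\lambda}_0 := \tfrac{1}{d_0}\sum_{\epsilon_\ell=0}\lambda_\ell$, and by construction $\lambda^{\min}(0)\leq \bar{\lambda}_0\leq \lambda^{\max}(0)$. I then split the sum in (\ref{eq:deltaS_0}) according to the sign of $\ln(\lambda_j/\bar\lambda_0)$: the contributions from indices with $\lambda_j\leq \bar\lambda_0$ are non-positive and can be dropped when bounding from above. For each of the remaining terms ($\lambda_j>\bar{\lambda}_0$) I estimate
\begin{equation}
\lambda_j\,\bigl(\ln\lambda_j-\ln\bar{\lambda}_0\bigr) \leq \lambda^{\max}(0)\,\bigl(\ln\lambda^{\max}(0)-\ln\lambda^{\min}(0)\bigr) < Z_{\beta(\rho)}^{-1}\, e^{\beta(\rho)\epsilon_{\max}/(N-1)}\,\tfrac{\beta(\rho)\epsilon_{\max}}{N-1}\;,
\end{equation}
using the two displayed bounds above.

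Finally, I count the positive contributions. Since $\sum_{\epsilon_j=0}\lambda_j = d_0\bar{\lambda}_0$, at least one index must satisfy $\lambda_j\leq \bar{\lambda}_0$, so at most $d_0-1$ terms have $\lambda_j>\bar{\lambda}_0$. Multiplying the per-term bound by $d_0-1$ yields exactly (\ref{eq:deltaS_0}); strictness propagates from the strict inequality in \textbf{Proposition \ref{DEG_POPS0}}. I do not expect any serious obstacle here: the only delicate point is keeping track of the direction of the various bounds (and noting the harmless boundary case $d_0=1$, for which both sides vanish and the strict inequality degenerates to an equality but the overall estimate (\ref{RESULT_inst}) is unaffected because the $(d_0-1)$ prefactor removes this contribution from the final formula).
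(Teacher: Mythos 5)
Your proof is correct and follows essentially the same route as the paper: drop the non-positive terms, bound each remaining log-difference by $\ln\lambda^{\max}(0)-\ln\lambda^{\min}(0)<\beta(\rho)\epsilon_{\max}/(N-1)$ via \textbf{Proposition \ref{DEG_POPS0}}, bound $\lambda^{\max}(0)$ by $Z_{\beta(\rho)}^{-1}e^{\beta(\rho)\epsilon_{\max}/(N-1)}$ using that same proposition together with (\ref{ddfs}), and count at most $d_0-1$ positive terms. The only (cosmetic) difference is the order in which the per-term and counting bounds are applied, and your remark about the degenerate case $d_0=1$ is a fair observation that applies equally to the paper's own argument.
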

\begin{proof}
	Expunging from the sum  the negative terms we can write
	\begin{equation}
	\sum_{\epsilon_j = 0} \lambda_j \left( \ln {\lambda_j} - \ln \bar{\lambda}_j \right)
	\leq  \sum_{\epsilon_j = 0, \lambda_j > \bar{\lambda}_j } \lambda_j \left( \ln {\lambda_j} - \ln \bar{\lambda}_j \right) \;,
	\label{solopositivi}
	\end{equation}
	where  the last sum 
	contains at most $d_{0} - 1$ terms, because there is at least one $\lambda_j$ smaller than the mean. Invoking hence {\bf Proposition \ref{DEG_POPS0}} twice and Eq.~(\ref{ddfs}) we arrive to
	\begin{eqnarray}
	&&\sum_{\epsilon_j = 0, \lambda_j > \bar{\lambda_i} } \lambda_j \left( \ln {\lambda_j} - \ln \bar{\lambda}_i \right) 
	< \sum_{\epsilon_j = 0, \lambda_j > \lambda_i } \lambda_j \frac{\beta(\rho)\epsilon_{\max}}{N-1} \nonumber 
	\\ \nonumber 
	&&\qquad \qquad \leq (d_0-1) 
	\lambda^{\max} (0)  \frac{\beta(\rho)\epsilon_{\max}}{N-1} 
	\\ \nonumber 
	&&\qquad \qquad < (d_0-1) 
	\lambda^{\min} (0)  e^{\frac{\beta\epsilon_{\max}}{N-1}} \frac{\beta(\rho)\epsilon_{\max}}{N-1} \\
	&&\qquad \qquad < (d_0-1) Z_{\beta(\rho)}^{-1}e^{\frac{\beta\epsilon_{\max}}{N-1}} \frac{\beta(\rho)\epsilon_{\max}}{N-1}\;, 
	\end{eqnarray}
	which replaced into (\ref{solopositivi}) yields the thesis. 	
\end{proof}
We have now all the ingredients to estimate the maximum amount of entropy that we can gain converting $\rho \in {\mathfrak P}_H^{(N)}$ into the isoenergetic and $1$-structurally stable state $\bar{\rho} \in \bar{\mathfrak P}_H^{(N,1)}$.

\begin{prep}
	Given $N\geq 2$,  $\rho \in {\mathfrak P}_H^{(N)}$  a  $N$-passive state 
	with entropy larger than or equal to $\ln d_0$ and
	satisfying the condition (\ref{ddfs}), and $\bar{\rho} \in \bar{\mathfrak P}_H^{(N,1)}$ the
	$1$-structurally stable counterpart of $\rho$
	(as defined in \ref{spianatura}),  then the 
	entropy difference (\ref{DEFINIZIONEDELTAS}) is bounded by the inequality
	\begin{equation}
	\Delta S(\rho) 	 <
	\frac{\beta(\rho)}{N-1} \Big[ E(\rho;H) +  E_{\beta(\rho)}(H)
	+(d_0 - 1)Z_{\beta(\rho)}^{-1} \epsilon_{\max} e^{\frac{\beta(\rho)\epsilon_{\max}}{N-1}} \Big] + \frac{1}{N-1}\;, 
	\label{BOUND_DELTAS}
	\end{equation}
	with $d_0$ and $\epsilon_{\max}$ the degeneracy of the ground state and the maximum eigenvalue of $H$ respectively. 
\end{prep}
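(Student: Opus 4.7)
The proof plan is to decompose $\Delta S(\rho)$ level by level and then invoke the three preceding propositions, one for each type of energy level. Concretely, since within each degenerate energy eigenspace $\bar\rho$ has uniform populations equal to the arithmetic mean of the corresponding $\lambda_j$'s, a direct calculation gives the identity
\begin{equation}
\Delta S(\rho) = \sum_\epsilon \sum_{\epsilon_j=\epsilon} \lambda_j\bigl(\ln\lambda_j - \ln\bar\lambda_j\bigr),
\end{equation}
so that each inner sum is precisely the object bounded in \textbf{Propositions \ref{deltaS_maggiori}}, \textbf{\ref{deltaS_minori}}, and \textbf{\ref{deltaS_0}}.

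Next I would partition the energy spectrum into three disjoint classes: the ground level $\epsilon=0$; the set $\mathcal{E}_>$ of strictly positive levels where $\lambda^{\max}(\epsilon)>\hat\lambda(\epsilon):=Z_{\beta(\rho)}^{-1}e^{-\beta(\rho)\epsilon}$; and the set $\mathcal{E}_\le$ of strictly positive levels where $\lambda^{\max}(\epsilon)\leq \hat\lambda(\epsilon)$. \textbf{Proposition \ref{deltaS_0}} handles the ground-level contribution, giving directly the last summand on the right-hand side of \eqref{BOUND_DELTAS}. For $\epsilon\in\mathcal{E}_>$, \textbf{Proposition \ref{deltaS_maggiori}} bounds the level contribution by $\frac{\beta(\rho)}{N-1}\sum_{\epsilon_j=\epsilon}\lambda_j\epsilon_j$, and summing over $\mathcal{E}_>$ yields at most $\frac{\beta(\rho)}{N-1}E(\rho;H)$, since we are summing a subset of the nonnegative terms $\lambda_j\epsilon_j$ that define $E(\rho;H)$.

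For $\epsilon\in\mathcal{E}_\le$, \textbf{Proposition \ref{deltaS_minori}} bounds the level contribution by $\frac{d_\epsilon\hat\lambda(\epsilon)(\beta(\rho)\epsilon+1)}{N-1}$. Summing across $\mathcal{E}_\le$ splits naturally into two pieces: the piece weighted by $\beta(\rho)\epsilon$ is $\frac{\beta(\rho)}{N-1}\sum_{\epsilon\in\mathcal{E}_\le}d_\epsilon\hat\lambda(\epsilon)\epsilon$, which is at most $\frac{\beta(\rho)}{N-1}E_{\beta(\rho)}(H)$ because it is a partial sum of the Gibbs-energy decomposition; and the residual piece $\frac{1}{N-1}\sum_{\epsilon\in\mathcal{E}_\le}d_\epsilon\hat\lambda(\epsilon)\le \frac{1}{N-1}$, since the $\hat\lambda_j$ are populations of a normalized state. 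Adding the three contributions reproduces exactly the inequality \eqref{BOUND_DELTAS}.

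The main conceptual step is just the partition argument and the observation that both $E(\rho;H)$ and $E_{\beta(\rho)}(H)$ appear naturally as upper bounds on the relevant partial sums—no further analytic work is needed beyond what \textbf{Propositions \ref{DEG_POPS}}--\textbf{\ref{deltaS_0}} already supply. The only point that deserves some care is to ensure that the three classes of energy levels genuinely exhaust the spectrum and do not overlap, so that each $\lambda_j$ is counted exactly once; this is guaranteed by the very definition of $\mathcal{E}_>$ and $\mathcal{E}_\le$ via the strict/non-strict inequality dichotomy on $\lambda^{\max}(\epsilon)$ versus $\hat\lambda(\epsilon)$.
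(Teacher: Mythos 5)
Your proposal is correct and follows essentially the same route as the paper: the same level-by-level identity for $\Delta S(\rho)$, the same three-way split of the spectrum handled by \textbf{Propositions \ref{deltaS_maggiori}}, \textbf{\ref{deltaS_minori}}, and \textbf{\ref{deltaS_0}}, and the same over-counting of the partial sums by $E(\rho;H)$, $E_{\beta(\rho)}(H)$, and the normalization of $\omega_{\beta(\rho)}$. The paper merely writes the two excited-level cases as a single combined sum $\sum_{\epsilon_j>0}[(\lambda_j+\hat{\lambda}_j)\beta(\rho)\epsilon_j+\hat{\lambda}_j]/(N-1)$ rather than spelling out the partition, but the argument is identical.
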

\begin{proof} Observe that
	\begin{equation}
	\Delta S(\rho) = S(\bar{\rho}) - S(\rho) 
	=
	\sum_{j} (\lambda_j   \ln \lambda_j-
	\bar{\lambda}_j \ln \bar{\lambda}_j )
	= \sum_{j} \lambda_j (  \ln \lambda_j-\ln \bar{\lambda}_j )\;, \label{DELTAS} 
	\end{equation} 
	where in the second line we used the fact that for $\epsilon_j=\epsilon_i$ one has
	$\bar{\lambda}_j=\bar{\lambda}_i$ and  that 
	$\sum_{\epsilon_j=\epsilon}  \bar{\lambda}_j=\sum_{\epsilon_j=\epsilon} {\lambda}_j$.
	Combining \textbf{Propositions~\ref{deltaS_maggiori}} and \textbf{\ref{deltaS_minori}} we hence notice that the part of the sum in Eq.~(\ref{DELTAS}) that
	involves all the energy levels above the ground state can be bounded as follows
	\begin{equation}
	\label{lemucchefannomuu}
	\sum_{\epsilon_j >0} \lambda_j (  \ln \lambda_j-\ln \bar{\lambda}_j ) < \sum_{\epsilon_j >0} \frac{\left( \lambda_j + \hat{\lambda}_j \right) \beta(\rho) \epsilon_j + \hat{\lambda}_j  }{N-1}
	=  \frac{ \beta(\rho) \left( E(\rho;H) + E_{\beta(\rho)}(H) \right)  + 1 }{N-1} \;,
	\end{equation}
	where in the last line we used the definitions of $E(\rho;H)$ and $E_{\beta(\rho)}(H)$.
	On the contrary the part of the sum in Eq.~(\ref{DELTAS}) that instead 
	involves only degenerate ground states can be instead bounded
	as in Eq.~(\ref{eq:deltaS_0}) of
	\textbf{Proposition~\ref{deltaS_0}}. 	\end{proof}
Equation~(\ref{RESULT_inst}) can be finally derived by using the identity 
(\ref{RELEVANT}) which links the energy and the entropy of Gibbs states.
Accordingly, at first order in $\Delta S(\rho)$ we get
\begin{multline} \label{ENprimordine}
E_{\beta(\bar{\rho})}(H) =  E_{\beta({\rho})}(H) + \frac{\Delta S(\rho)}{\beta({\rho})} +  \mathcal{O} \left( \Delta S^2(\rho) \right) \\
< E_{\beta({\rho})}(H) + \frac{E(\rho;H)}{N-1} + \frac{E_{\beta({\rho})}(H)}{N-1} + \frac{(d_0 - 1)\epsilon_{\max}}{N-1} Z_{\beta(\rho)}^{-1} + \frac{\beta^{-1}(\rho)}{N-1} +  \mathcal{O} \left( \frac{1}{N^2} \right)\;,
\end{multline}
where we used also $e^{\frac{\beta\epsilon_{\max}}{N-1}} = 1+ \mathcal{O} \left( \frac{1}{N} \right)$.
The bound~(\ref{RESULT_inst}) is hence obtained by first using the fact that thanks 
to the property $\beta(\bar{\rho})\leq \beta(\rho)$ we have $u(\bar{\rho}) \leq u(\rho)$, and then
replacing (\ref{ENprimordine}) into the inequality~(\ref{goulish_approach}).

\subsection{Derivation of the asymptotic bound~(\ref{RESULT_inst_duelivelli})} \label{sec:deriv_duelivelli} 
We now consider the case of an Hamiltonian $H$ whose spectrum has only two distinct eigenvalues, $0$ and $\epsilon_{\max}>0$.
Here (\ref{goulish_approach1}) can be replaced by 
\begin{equation} \label{PiallaDirectlyGibbs}
E(\rho;H) = E(\bar{\rho};H)\leq  E_{{\beta}(\bar{\rho})}(H).
\end{equation}
Assume once more that the entropy of $\rho$ is larger than or equal to $\ln d_0$ so that $E_{\beta(\rho)}$ is well defined.
In order to have $E(\rho;H) > E_{\beta(\rho)}$ (which is implied by~(\ref{MINEN}) and $\rho \neq \omega_{\beta(\rho)}$), the populations of $\rho$ must necessarily satisfy the condition~(\ref{ddfs}), and also the additional condition
\begin{eqnarray} \label{lambda1magg}
\lambda^{\max}(\epsilon_{\max}) >  Z_{\beta(\rho)}^{-1} e^{-\beta(\rho)\epsilon_{\max}} \; . 
\end{eqnarray}
The validity of~(\ref{ddfs}) allows us to use {\bf Proposition~\ref{deltaS_0}}, from which it follows equation~(\ref{eq:deltaS_0}).
On the other hand, the condition~(\ref{lambda1magg}) can be identified with the condition~(\ref{CONDPREP8}) in {\bf Proposition~\ref{deltaS_maggiori}}, implying that inequality~(\ref{PRIMA1}) holds for the energy level $\epsilon_{\max}$.

Combining equations~(\ref{eq:deltaS_0}) and~(\ref{PRIMA1}), we deduce that for a two-level Hamiltonian
\begin{multline}\Delta S(\rho) = S(\bar{\rho}) - S(\rho) =\sum_{j} (\lambda_j   \ln \lambda_j-
\bar{\lambda}_j \ln \bar{\lambda}_j ) \\
= \sum_{\epsilon_j = 0} \lambda_j (  \ln \lambda_j-\ln \bar{\lambda}_j ) + \sum_{\epsilon_j = \epsilon_{\max}} \lambda_j (  \ln \lambda_j-\ln \bar{\lambda}_j ) \\
\leq \frac{d_0-1}{N-1}Z_{\beta(\rho)}^{-1} \beta(\rho) \epsilon_{\max} e^{\frac{\beta(\rho)\epsilon_{\max}}{N-1}} +   \sum_{\epsilon_j = \epsilon_{\max}} \lambda_j \frac{\beta(\rho)\epsilon_{\max}}{N-1} \\
= 	\frac{\beta(\rho)}{N-1} \Big[ E(\rho;H) + \epsilon_{\max}(d_0-1)Z_{\beta(\rho)}^{-1} e^{\frac{\beta(\rho)\epsilon_{\max}}{N-1}} \Big] \;.
\label{DELTAS_duelivelli} 
\end{multline} 
The bound~(\ref{DELTAS_duelivelli}) is similar to the bound~(\ref{BOUND_DELTAS}), but it lacks the term proportional to $E_{\beta(\rho)}$. Using~(\ref{RELEVANT}) and  $e^{\frac{\beta\epsilon_{\max}}{N-1}} = 1+ \mathcal{O} \left( \frac{1}{N} \right)$, we can convert the bound~(\ref{DELTAS_duelivelli}) on $\Delta S(\rho)$ in an asymptotic bound on $E_{\beta(\bar{\rho})}$, which is equation~(\ref{ENprimordine}) without the term proportional to $E_{\beta(\rho)}$, i.e.,
\begin{multline} \label{ENprimordine_duelivelli}
E_{\beta(\bar{\rho})}(H) =  E_{\beta({\rho})}(H) + \frac{\Delta S(\rho)}{\beta({\rho})} +  \mathcal{O} \left( \Delta S^2(\rho) \right) \\
< E_{\beta({\rho})}(H) + \frac{E(\rho;H)}{N-1} + \frac{(d_0 - 1)\epsilon_{\max}}{N-1} Z_{\beta(\rho)}^{-1} +  \mathcal{O} \left( \frac{1}{N^2} \right)\;.
\end{multline}
Replacing (\ref{ENprimordine_duelivelli}) into the inequality~(\ref{PiallaDirectlyGibbs}), we therefore obtain the bound~(\ref{RESULT_inst_duelivelli}).

\subsection{Derivation of Eq.~(\ref{HOPEFULLYLAST})} \label{SECIVC} 
Here we focus on the case where we have a too small entropy to even identify a Gibbs isoentropic counterpart, i.e. 
\begin{equation}
S(\rho) < \ln d_0 \;.  \label{DDF1} 
\end{equation}
Let $\Delta S_0(\rho)$ denote the cost of levelling up only the ground state populations of $\rho$, i.e. constructing a density matrix $\bar{\rho}_0$ 
obtained by replacing the ground state populations of $\rho$  with $
\bar{\lambda}_0=\frac{1}{d_0} \sum_{\epsilon_j=0} \lambda_j$ while leaving all the other populations untouched,  
\begin{equation} \label{definizione_deltaS0}
\Delta S_0(\rho) := S(\bar{\rho}_0) -S(\rho) =\sum_{\epsilon_j = 0} \lambda_j (\ln \lambda_j - \ln\bar\lambda_0)\;.
\end{equation}
By majorization it is easy to verify that the entropy of $\bar{\rho}_0$ is not smaller than the one of 
the Gibbs ground state $\omega_{\infty}$, therefore we can write 
\begin{equation}
\Delta S_0(\rho) \geq \ln d_0 - S(\rho)\;.
\end{equation}
Furthermore, we notice that in the present context $\hat{\lambda_a} = 0$ if $\epsilon_a > 0$, and therefore the inequality~(\ref{pass_tra0e0}) is valid for every $\lambda_i$ and $\lambda_a$ such that $\epsilon_i = 0$ and $\epsilon_a > 0$.
Using~(\ref{MAGGIORE}) into~(\ref{definizione_deltaS0}), then applying~(\ref{pass_tra0e0}) and observing that $\lambda^{\min}(0) \leq 1/d_0$, we have that
\begin{multline}
\Delta S_0(\rho) \leq 
\sum_{\epsilon_j = 0} \lambda_j (\ln \lambda_j - \ln\bar\lambda_0) \\
\leq
\sum_{\epsilon_j = 0} \lambda_j (\ln \lambda_j - \ln\lambda^{\min}(0))  
\leq \sum_{\epsilon_j = 0} \lambda_j \frac{\ln\lambda^{\min}(0) - \lambda_a}{N-1}  
\leq - \frac{\ln d_0 - \lambda_a}{N-1} \;,
\end{multline}
which implies 
\begin{equation}
\label{ilmax_chetraborda}
\epsilon_a > 0 , \lambda_a > 0 \implies - \ln \lambda_a \geq N \ln d_0 - (N-1) S(\rho)\;.\end{equation}

There are $d-d_0$ levels above the ground state, and their contribution to the total entropy is bounded by
\begin{equation}
\label{piallamag0}
- \sum_{\epsilon_a > 0, \lambda_a > 0} \lambda_a \ln \lambda_a \leq - \sum_{\epsilon_a > 0, \lambda_a > 0} \lambda_a \ln{\left( \frac{\sum_{\epsilon_a > 0, \lambda_a > 0} \lambda_a  }{d-d_0} \right) }
\end{equation}

On the other hand, exploiting (\ref{ilmax_chetraborda}) we derive 
\begin{equation}
\label{Rindfleischetikettierungsuberwachungsaufgabenubertragungsgesetz}
- \sum_{\epsilon_a > 0, \lambda_a > 0} \lambda_a \ln \lambda_a
\geq ( N \ln d_0 - (N-1) S(\rho)) \sum_{\epsilon_a > 0, \lambda_a > 0} \lambda_a
\end{equation}
From~(\ref{piallamag0}) and~(\ref{Rindfleischetikettierungsuberwachungsaufgabenubertragungsgesetz}) we deduce the inequality
\begin{equation}
( N \ln d_0 - (N-1) S(\rho)) \leq - \ln{\left( \frac{\sum_{\epsilon_a > 0, \lambda_a > 0} \lambda_a  }{d-d_0} \right) }
\end{equation}
or
\begin{equation}
\sum_{\epsilon_a > 0, \lambda_a > 0} \lambda_a \leq (d-d_0) e^{ -N \ln d_0 + (N-1) S(\rho) }
\end{equation}
which in conclusion gives us
\begin{multline}
E(\rho; H) = \sum_{\epsilon_a > 0, \lambda_a > 0} \epsilon_a \lambda_a < \epsilon_{\max}  \sum_{\epsilon_a > 0, \lambda_a > 0} \lambda_a  \\
< \epsilon_{\max} (d-d_0) \exp\left[ -N \ln d_0 + (N-1) S(\rho)) \right] \; .
\end{multline}

\section{Some considerations about commensurable spectra} \label{APPEC} 

In Sec.~\ref{subsection1} we commented about the fact that 
for two-dimensional systems ($d=2$)  the hierarchy~(\ref{ORDE1}) trivialises (the structurally stable passive states being
also $N$ passive for all $N$) due to the fact that 
all density matrices which are diagonal in the energy basis can be cast in the Gibbs form for some proper choice of $\beta$ and $Z$. 
On the contrary, as the dimensionality increases,  Eq.~(\ref{GREATRES}) implies that 
the exponential connection 
\begin{eqnarray} \label{PROP1} \lambda_i = e^{-\beta\epsilon_i}/Z\;,\end{eqnarray}
which according to Eq.~(\ref{GIBBS}) links  the energy levels and the associated populations,
is recovered only with the hypotheses of complete passivity and structural stability. 
This general rule  admits  some notable exceptions  when the spectrum of the system Hamiltonian exhibits 
special properties. 
In particular, it is possible to show that 
if a subset  of the energy levels  of $H$ are commensurable, then the associated populations of a 
state $\rho$ which is  structurally stable and $N$-passive (with $N$ sufficiently large but finite), must be expressed as in Eq.~(\ref{PROP1}) 
for some proper choice of $\beta$ and $Z$. More specifically 
\begin{prep}
	\label{preqeig}
	If $\epsilon_a < \epsilon_b < \epsilon_c$ are three energy levels of $H$ such that 
	\begin{eqnarray} \frac{\epsilon_c - \epsilon_a}{\epsilon_b - \epsilon_a} = \frac{p}{q}\;,\label{PROP2} \end{eqnarray}  for some integers $p$ and $q$, and if $N \geq p$, then in any $N$-passive, $N$-structurally stable state $\rho\in  \bar{\mathfrak P}_H^{(N,N)}$  the corresponding
	eigenvalues $\lambda_a$, $\lambda_b$ and $\lambda_c$ can be written as in Eq.~(\ref{PROP1}) for some 
	given values of $\beta,Z\geq 0$.
\end{prep}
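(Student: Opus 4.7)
\medskip

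The plan is to exploit commensurability to construct two population multisets of size $N$ that assign the same total energy, and then use the $N$-structural stability condition~(\ref{eq11}) to force an exact multiplicative relation between $\lambda_a,\lambda_b,\lambda_c$.

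First I would rewrite the hypothesis~(\ref{PROP2}) in the additive form
\begin{equation}
q\,\epsilon_c + (p-q)\,\epsilon_a = p\,\epsilon_b,
\end{equation}
with $p>q\geq 1$ positive integers. This identity already looks like the energy-balance condition that triggers~(\ref{eq11}). I would then pad both sides with $N-p$ copies of $\epsilon_a$, which is legitimate precisely because of the hypothesis $N\geq p$. Concretely I would take the population sets $I_N$ with $n_c=q$, $n_a=N-q$ and $n_i=0$ otherwise, and $J_N$ with $m_b=p$, $m_a=N-p$ and $m_j=0$ otherwise. Both sum to $N$ and carry total energy $q\epsilon_c+(N-q)\epsilon_a=p\epsilon_b+(N-p)\epsilon_a$.

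Since $\rho\in\bar{\mathfrak P}_H^{(N,N)}$, condition~(\ref{eq11}) applies with these two sets and yields
\begin{equation}
\lambda_c^{q}\,\lambda_a^{N-q}=\lambda_b^{p}\,\lambda_a^{N-p},
\end{equation}
so, cancelling the common factor $\lambda_a^{N-p}$, we obtain $\lambda_c^{q}\,\lambda_a^{p-q}=\lambda_b^{p}$. Taking logarithms I would read this as
\begin{equation}
\ln\lambda_b=\tfrac{q}{p}\ln\lambda_c+\tfrac{p-q}{p}\ln\lambda_a,
\end{equation}
which, combined with $\epsilon_b-\epsilon_a=\tfrac{q}{p}(\epsilon_c-\epsilon_a)$, says that the three points $(\epsilon_a,\ln\lambda_a)$, $(\epsilon_b,\ln\lambda_b)$, $(\epsilon_c,\ln\lambda_c)$ are collinear. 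Defining $\beta$ as the opposite of the slope of this line and $Z$ so that the intercept matches, I would immediately recover~(\ref{PROP1}) for $i\in\{a,b,c\}$. Non-negativity of $\beta$ then follows from the passivity ordering $\lambda_a\geq\lambda_b\geq\lambda_c$ implied by~(\ref{ORDE1})--(\ref{ORD2}).

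The routine but slightly delicate part is dealing with the boundary cases where one of the populations vanishes, where the cancellation of $\lambda_a^{N-p}$ is not automatic. If $\lambda_a=0$, passivity forces $\lambda_b=\lambda_c=0$ and the claim is trivial with, say, any $\beta$ and $Z=\infty$; if $\lambda_c=0$ and $\lambda_a>0$, the identity $\lambda_c^q\lambda_a^{p-q}=\lambda_b^p$ directly gives $\lambda_b=0$, and the collinearity is obtained in the limit $\beta\to\infty$. The only real obstacle is therefore bookkeeping in these degenerate cases; the structural part of the argument is entirely contained in the single application of~(\ref{eq11}) above, for which $N\geq p$ is exactly what is needed so that $I_N$ and $J_N$ have non-negative entries.
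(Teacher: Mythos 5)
Your proof is correct and follows essentially the same route as the paper's: a single application of the structural-stability condition~(\ref{eq11}) to two energy-balanced population sets obtained by padding the identity $q\epsilon_c+(p-q)\epsilon_a=p\epsilon_b$ up to size $N$ (the paper pads with the zero-energy ground level $\epsilon_0$ rather than with $\epsilon_a$, an immaterial difference), yielding $\lambda_b^p=\lambda_c^q\lambda_a^{p-q}$ and hence collinearity in the $(\epsilon,\ln\lambda)$ plane. Your extra care with the vanishing-population cases and the sign of $\beta$ is a welcome addition that the paper leaves implicit.
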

\begin{proof}
	Equation~(\ref{PROP2}) can be equivalently expressed as 
	$p\epsilon_b = q\epsilon_c + (p-q)\epsilon_a$. 
	Then the two eigenstates $\ket{\epsilon_b}^{\otimes p} \otimes \ket{\epsilon_0}^{\otimes (N-p)}$ and $\ket{\epsilon_c}^{\otimes q} \otimes \ket{\epsilon_a}^{\otimes (p-q)} \otimes \ket{\epsilon_0}^{\otimes (N-p)}$ of $\rho^{\otimes N}$ have the same energy (notice that we are using here that since $\rho\in  \bar{\mathfrak P}_H^{(N,N)}$ it is diagonal in the energy eigenbasis). Thence according to structurally stable condition (\ref{eq11}) they must have the same populations, i.e.  
	\begin{equation} \lambda_b^p \lambda_0^{N-p} = \lambda_c^q \lambda_a^{p-q}\lambda_0^{N-p} \implies \lambda_b^p = \lambda_c^q \lambda_a^{p-q}\;, \end{equation}  
	which implies (\ref{PROP1}).
\end{proof}
\begin{corol}
	For a Hamiltonian with equally spaced energy levels ($\epsilon_n = n\epsilon_1$), there are no nontrivial 
	$N$-passive, $N$-structurally stable states for $N \geq 2$.
\end{corol}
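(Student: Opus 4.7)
The plan is to reduce the corollary to a direct application of \textbf{Proposition \ref{preqeig}} on every triple of \emph{consecutive} energy levels, and then observe that the resulting multiplicative constraint pins down the populations to the Gibbs form, so that $\rho$ is forced to coincide with a thermal state --- i.e. a ``trivial'' element of $\bar{\mathfrak P}_H^{(N,N)}$.

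First, I would fix the Hamiltonian $H$ with $\epsilon_n = n\epsilon_1$ and pick any three consecutive levels $\epsilon_{n-1}, \epsilon_n, \epsilon_{n+1}$. For these,
\begin{equation}
\frac{\epsilon_{n+1} - \epsilon_{n-1}}{\epsilon_n - \epsilon_{n-1}} = \frac{2}{1} = \frac{p}{q},
\end{equation}
so the commensurability hypothesis of \textbf{Proposition \ref{preqeig}} is met with $p=2$, $q=1$. Since the assumption is $N\geq 2 = p$, the proposition applies, delivering $\lambda_n^2 = \lambda_{n-1}\,\lambda_{n+1}$. (Possible degeneracies of $H$ do not spoil this, because a $1$-structurally stable $\rho\in\bar{\mathfrak P}_H^{(N,N)}$ already has uniform populations within each degenerate eigenspace, see Eq.~(\ref{EQUAL}).)

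Next I would turn the collection of relations $\lambda_n^2 = \lambda_{n-1}\lambda_{n+1}$, running over all $n\in\{1,\dots,d-2\}$, into an actual Gibbs ansatz. Writing $r_n := \lambda_n/\lambda_{n-1}$, the identities read $r_n = r_{n+1}$ for every $n$, so there is a single common ratio $r$, and hence $\lambda_n = \lambda_0 r^n$. Defining $\beta\geq 0$ and $Z>0$ by $r = e^{-\beta\epsilon_1}$ and $\lambda_0 = Z^{-1}$, this immediately gives $\lambda_n = Z^{-1} e^{-\beta\epsilon_n}$ for all $n$, and normalization identifies $Z = Z_\beta$. Thus $\rho = \omega_\beta\in{\mathfrak G}_H$, i.e. $\rho$ is a (trivial) Gibbs state.

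The main --- and essentially only --- subtlety is making sure the hypothesis $N\geq p$ of \textbf{Proposition \ref{preqeig}} does not break down for some triple: arbitrary triples $(a,b,c)$ would require $N\geq c-a$, which could be arbitrarily large. The key observation bypassing this obstacle is that it suffices to invoke the proposition only on \emph{consecutive} triples, for which $p = 2$; the assumption $N\geq 2$ is then exactly what is needed, and the Gibbs form for the whole spectrum follows by the chain argument above. No further estimates or case analysis are required.
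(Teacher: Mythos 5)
Your proposal is correct and is exactly the argument the paper intends (the corollary is stated without proof as an immediate consequence of \textbf{Proposition~\ref{preqeig}}): applying the proposition to consecutive triples with $p=2$, $q=1$ and chaining the resulting relations $\lambda_n^2=\lambda_{n-1}\lambda_{n+1}$ forces the Gibbs form with a single $\beta$ and $Z$. The only point worth adding is the edge case where some $\lambda_{n+1}=0$ makes the ratio $r_{n+1}$ undefined; there the same relation gives $\lambda_n^2=\lambda_{n-1}\lambda_{n+1}=0$, which propagates downward and collapses $\rho$ onto the ground level, i.e.\ the trivial $\beta\to\infty$ Gibbs limit.
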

\begin{corol}
	For a generic discrete Hamiltonian $H$ whose energy levels are commensurable, there are no nontrivial 
	$N$-passive, $N$-structurally stable states for any $N \geq N*$, where 
	\[ N^* = \lcm \bigg\{ p \bigg| \frac{\epsilon_{i+2} - \epsilon_i}{\epsilon_{i+1} - \epsilon_i}  = \frac{p}{q} \bigg\}. \]
\end{corol}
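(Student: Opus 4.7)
The plan is to combine \textbf{Proposition \ref{preqeig}} with the fact that consecutive spectral triples overlap in two data points, so that the local Gibbs parameters are forced to agree globally. Fix $N \geq N^*$ and let $\epsilon_0 < \epsilon_1 < \cdots < \epsilon_{s-1}$ be the distinct eigenvalues of $H$. For each $i \in \{0, \ldots, s-3\}$ commensurability gives coprime positive integers $p_i, q_i$ with
\[ \frac{\epsilon_{i+2} - \epsilon_i}{\epsilon_{i+1} - \epsilon_i} = \frac{p_i}{q_i}, \]
and by definition $N^* = \lcm\{p_i\}$, so $N \geq N^* \geq p_i$ for every $i$. Hence \textbf{Proposition \ref{preqeig}} applies to each triple $(\epsilon_i, \epsilon_{i+1}, \epsilon_{i+2})$, producing constants $\beta_i, Z_i \geq 0$ such that $\lambda_j = e^{-\beta_i \epsilon_j}/Z_i$ for $j \in \{i, i+1, i+2\}$.

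Next I would argue that these local parameters agree across $i$. The triples indexed by $i$ and $i+1$ share the two points $(\epsilon_{i+1}, \lambda_{i+1})$ and $(\epsilon_{i+2}, \lambda_{i+2})$; since a strictly positive exponential $\epsilon \mapsto e^{-\beta \epsilon}/Z$ is uniquely determined by its values at two distinct abscissae, this forces $\beta_i = \beta_{i+1}$ and $Z_i = Z_{i+1}$. A trivial induction on $i$ then yields a single pair $(\beta, Z)$ such that $\lambda_j = e^{-\beta \epsilon_j}/Z$ for all distinct energy levels $\epsilon_j$. If some $\lambda_j = 0$, the argument still goes through by interpreting the corresponding $\beta$ as $+\infty$, which corresponds to a ground state in ${\mathfrak S}_H^{(G)}$.

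To dispose of possible degeneracies in the spectrum of $H$, I would invoke the $N$-structural stability condition \eqref{eq11} at order $k = 1$ (or any $k \leq N$): populations associated with eigenstates of the same energy must coincide. Combined with the previous paragraph this gives $\rho = e^{-\beta H}/Z$ on the entire Hilbert space, i.e.\ $\rho \in {\mathfrak G}_H \cup {\mathfrak S}_H^{(G)}$, which is the meaning of ``trivial'' in the statement.

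The main obstacle is not any deep estimate but a matter of bookkeeping: one must verify that $N \geq p_i$ really is the right hypothesis for every consecutive triple (not just the ones with the largest denominator), and that commensurability of \emph{all} the level gaps is actually captured by the consecutive ratios $(\epsilon_{i+2}-\epsilon_i)/(\epsilon_{i+1}-\epsilon_i)$. The latter follows because a common unit for $\{\epsilon_{i+1}-\epsilon_i\}_{i}$ makes all differences $\epsilon_j - \epsilon_i$ integer multiples of that unit, so no extra triples need to be checked. Once that is verified, the overlap argument in the second paragraph is essentially algebraic and causes no difficulty.
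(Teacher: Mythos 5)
Your proposal is correct and follows exactly the route the paper intends for this corollary (which it states without an explicit proof): apply \textbf{Proposition \ref{preqeig}} to each consecutive triple of distinct levels, which is legitimate since $N \geq N^* = \lcm\{p_i\} \geq p_i$ for every $i$, and then chain the local pairs $(\beta_i, Z_i)$ together via the two shared points of overlapping triples, with $1$-structural stability handling degenerate levels and $\beta = +\infty$ handling vanishing populations. The bookkeeping points you flag (consecutive triples suffice; $\lcm \geq \max_i p_i$) are indeed the only things to check, and they check out.
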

The last statement  leads us to the following observation which holds for continuous 
variable systems -- the definition of $N$-passivity being easily generalized in this case.  
\begin{corol}
	For a Hamiltonian $H$ with a purely continuous energy spectrum, there are no non-Gibbs $N$-passive, $N$-structurally stable states for $N \geq 2$.
\end{corol}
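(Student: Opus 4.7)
The plan is to reduce the purely continuous case to iterated applications of \textbf{Proposition \ref{preqeig}}. Since the spectrum of $H$ is purely continuous, it contains a non-degenerate interval $I \subseteq \mathbb{R}_{\geq 0}$, which in particular is closed under midpoints: for every $\epsilon_a, \epsilon_c \in I$ their midpoint $\epsilon_b := (\epsilon_a + \epsilon_c)/2$ also lies in $I$. For such triples the ratio in~(\ref{PROP2}) is $p/q = 2/1$, so \textbf{Proposition \ref{preqeig}} applies already for $N \geq 2$ and yields
\[
\lambda(\epsilon_b)^{2} \;=\; \lambda(\epsilon_a)\,\lambda(\epsilon_c)\;,
\]
where $\lambda(\epsilon)$ denotes the spectral density of $\rho$ with respect to $H$.

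Passing to the logarithm, the function $g(\epsilon) := \ln \lambda(\epsilon)$ satisfies Jensen's midpoint functional equation
\[
g\!\left( \tfrac{\epsilon_a + \epsilon_c}{2} \right) \;=\; \tfrac{1}{2}\bigl[ g(\epsilon_a) + g(\epsilon_c) \bigr]
\]
on $I \times I$. It is a classical result that any measurable (equivalently, locally bounded) solution of this equation is affine, so $g(\epsilon) = -\beta\, \epsilon - \ln Z$ for some constants $\beta, Z$; the sign $\beta \geq 0$ is fixed by the passivity ordering~(\ref{ORD2}), while $Z$ is fixed by normalization. This is precisely the Gibbs form~(\ref{PROP1}), contradicting the assumption that $\rho$ is non-Gibbs and therefore proving the corollary.

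The main obstacle is formal rather than conceptual: extending \textbf{Proposition \ref{preqeig}} to continuous spectra requires a clean generalization of the $N$-passivity conditions~(\ref{eq1})--(\ref{eq11}) to states commuting with $H$ and absolutely continuous with respect to its spectral measure. Once this is granted, the combinatorial identity $\lambda_b^p \lambda_0^{N-p} = \lambda_c^q \lambda_a^{p-q} \lambda_0^{N-p}$ of \textbf{Proposition \ref{preqeig}} translates into a pointwise (or almost-everywhere) relation between the values of the density $\lambda(\cdot)$, and the argument above goes through verbatim. The only subtlety is handling possible measure-zero exceptional sets in the midpoint identity; since the density is automatically measurable, this is handled by standard approximation arguments before invoking the rigidity of Jensen's equation.
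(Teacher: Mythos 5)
Your proof is correct and follows essentially the same route as the paper: for $N \geq 2$ the midpoint of any two spectral values $\epsilon_a < \epsilon_c$ gives a triple with $p/q = 2/1$, so \textbf{Proposition \ref{preqeig}} applies to every such pair. You add a step the paper's one-line argument leaves implicit --- namely that the resulting local constraints $\lambda_b^2 = \lambda_a \lambda_c$ patch into a single global pair $(\beta, Z)$, which you justify via the rigidity of Jensen's midpoint equation for measurable densities --- and this is a welcome tightening rather than a different approach.
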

\begin{proof}
	Take any two energies $\epsilon_a < \epsilon_c$. Since the spectrum is continuous, there exist eigenstates with any possible energy between $\epsilon_a$ and $\epsilon_c$; then we can always find a suitable $\epsilon_b$ to satisfy the condition of \textbf{Proposition \ref{preqeig}}.
\end{proof}

\section{Conclusions} \label{sec:con}

We derived upper bounds for the mean energy of $N$-passive, structurally stable configurations $\rho$. We also
give inequalities that apply for $N$-passive states which are not necessarily structurally stable, in the asymptotic limit of large $N$.
Our inequalities depend on the spectral quantity $R(H)$; the latter will typically be larger for larger values of the Hilbert space dimension $d$, resulting in looser upper bounds. On the contrary, we expect that the ratio between the maximal energy of an $N$-passive state and the energy of the isoentropic Gibbs state will, in general, be smaller for larger dimensions $d$, because the eigenvalues of $\rho$ will be constrained by more conditions. In the continuum limit, as we have seen, the set of $N$-passive, $N$-structurally stable collapses on the set of Gibbs states.

Possible future development of the present approach could be the study the connection between higher momenta of the energy distribution
of $\rho$ and those of its Gibbs isoentropic counterpart $\omega_{\beta(\rho)}$. More generally one could also employ the technique we present
here for estimating how the distance between $\rho$ and $\omega_{\beta(\rho)}$ drops when $N$ increases. 

\section*{Acknowledgment}
\addcontentsline{toc}{section}{Acknowledgment}	
We thank G. M. Andolina for  comments. 

\appendix
\section{More on the ergotropy functional} \label{APPA} 
The erogotropy functional~(\ref{ERGO}) can be casted in a more compact formula by explicitly solving the
optimization over $U$. For this purpose let us write $\rho$ as 
\begin{eqnarray} \label{RHO} \rho=\sum_{j=0}^{d-1} \lambda_j |\phi_j\rangle\langle \phi_j|\;,\end{eqnarray}  
with eigenvectors 
$\{ |\phi_j\rangle\}_{j}$
and associated eigenvalues $\{ \lambda_j\}_{j}$,
which, without loss of generality, we shall assume to 
be organized
in non-increasing order, i.e.  
\begin{eqnarray} \label{ORD2ww} 
\lambda_{j+1} \leq \lambda_j  ,\qquad  \forall j\in \{ 0,\cdots, d-2 \} \;.
\end{eqnarray} 
A passive counterpart ${\rho}_p$ of $\rho$ is now identified as an element of $\mathfrak{S}$ 
which 
is diagonal with respect to the energy eigein-basis~$\{|\epsilon_j\rangle\}_j$
and which can be expressed as 
\begin{eqnarray} \label{PASS} 
{\rho}_p := \sum_{j=0}^{d-1} \lambda_j^{(\downarrow)}  |\epsilon_j\rangle\langle \epsilon_j|\;, 
\end{eqnarray} 
where $\{ \lambda_j^{(\downarrow)}\}_j$ is a relabelling of $\{ \lambda_j\}_{j}$ that fulfils the following ordering
\begin{eqnarray} 
\epsilon_i > \epsilon_j \qquad \Longrightarrow \qquad \lambda_{i}^{(\downarrow)} \leq \lambda_{j}^{(\downarrow)} \;. \label{ORD12} 
\end{eqnarray} 
In other words ${\rho}_p$ is an element of ${\mathfrak P}_H^{(1)}$ that is iso-spectral to $\rho$, i.e.
which admits the $\{ \lambda_j\}_{j}$ has eigenvalues. Accordingly there exists always a unitary transformation
${U}_p$ such that connects them, i.e. ${\rho}_p = {U}_p \rho {U}_p^\dag$. 
It should also be noticed that due to the special ordering we fixed in  (\ref{ORD2ww}) and (\ref{ORD1})
an examples of passive state~(\ref{PASS})  is given by the density matrix
\begin{eqnarray} \label{PASSCAN} 
\tilde{\rho}_p= \sum_{j=0}^{d-1} \lambda_j  |\epsilon_j\rangle\langle \epsilon_j|\;, 
\end{eqnarray} 
obtained from $\rho$ by simply replacing $|\phi_j\rangle$ with $|\epsilon_j\rangle$ for all~$j$. 
If the Hamiltonian $H$ is  explicitly not degenerate (i.e. if in Eq.~(\ref{ORD1}) is verified with strict inequalities), $\tilde{\rho}_p$ is  the unique passive counterpart of $\rho$. 
However, if $H$ instead admits some degree of degeneracy then  this is not true
and $\rho$ may admits other passive counterparts others than (\ref{PASSCAN}) which can be obtained from the
latter by means of arbitrary unitary rotations that do not mix up eigenspaces associated with different eigenvalues
(this freedom in the definition of $\rho_p$ is associated with the fact that 
indeed if $\epsilon_i=\epsilon_j$ for some $i\neq j$, then 
Eq.~(\ref{ORD12}) does not fix any relative ordering between the associated populations). 
In any case all passive counterparts of $\rho$ will have the same mean energy, i.e. 
\begin{eqnarray} 
E({\rho}_p; H) = \sum_{j=0}^{d-1}  \lambda_j^{(\downarrow)}  \epsilon_{j}  =  \sum_{j=0}^{d-1}  \lambda_j \epsilon_{j} =
E(\tilde{\rho}_p; H)\;.
\end{eqnarray} 
Most importantly one can verify that the unitaries $U$ which leads to the maximum 
in the right-hand-side of Eq.~(\ref{ERGO}) are exactly the one that maps $\rho$ into one of it passive counterparts,
accordingly we can write 
~\cite{PASSIVE1,PASSIVE2} 
\begin{eqnarray} 
{\cal E}^{(1)}(\rho; H) &=& E(\rho; H) - E({\rho}_p;H)   \\
&=&  \sum_{j,j'=0}^{d-1} \lambda_j \epsilon_{j'} (|\langle \phi_j| \epsilon_{j'}\rangle|^2 -  \delta_{j,j'})\;,
\end{eqnarray} 
with $\delta_{j,j'}$ being the Kronecker delta. 

\section{Alternative proof  of Eqs.~(\ref{GREATRES}) and
	(\ref{IDE1}). } \label{PROFGI} 
The identity~(\ref{GREATRES}) establishes that Gibbs and ground states are the only CP configurations of the system $A$,
while (\ref{IDE1}) specifies that the Gibbs are also the only CPSS density matrices. 
Explicit proofs of these  statements  can be found in Refs.~\cite{PASSIVE1,PASSIVE2,PASSIVE4,SKRZ}.
In what follows however we give a simple, alternative demonstration of this fact   based on some simple geometric considerations. 
\begin{prep}
	\label{pre1}
	A density matrix $\rho$ of $A$ is  a CP state  
	if and only if it is either an element of the  Gibbs set ${\mathfrak{G}}_H$ or an element of the ground set
	${\mathfrak{S}}_H^{(G)}$. 
\end{prep}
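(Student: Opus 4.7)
The plan is to establish the two directions separately, with the nontrivial direction organized around a geometric picture in the $(\epsilon, \ln \lambda)$ plane. The reverse implication (Gibbs and ground states are CP) is a direct check against the defining inequality~(\ref{eq1}). For $\omega_\beta$ with $\lambda_j = Z_\beta^{-1} e^{-\beta \epsilon_j}$, the ratio $\prod_i \lambda_i^{n_i}/\prod_j \lambda_j^{m_j}$ collapses to $\exp[-\beta(\sum_i n_i \epsilon_i - \sum_j m_j \epsilon_j)]$, which is $\leq 1$ exactly when the hypothesis of~(\ref{eq1}) holds, provided $\beta \geq 0$. For $\rho \in \mathfrak{S}_H^{(G)}$, if $\prod_i \lambda_i^{n_i} > 0$ then every $n_i > 0$ selects a ground eigenvector with $\epsilon_i = 0$, forcing $\sum_i n_i \epsilon_i = 0$, so the strict inequality cannot hold unless $\prod_i \lambda_i^{n_i} = 0$; either way~(\ref{eq1}) is satisfied.

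For the forward direction I would fix $\rho \in \mathfrak{P}_H^{(\infty)}$, write it in the energy basis (by $1$-passivity it is diagonal), and study the cloud of points $P_j := (\epsilon_j, \ln \lambda_j)$ for $j$ with $\lambda_j > 0$. The geometric claim is that complete passivity collapses this cloud onto a single straight line of non-positive slope. The collinearity step relies on \textbf{Proposition~\ref{PREP1}} applied at arbitrarily large $N$: given three distinct eigenvalues $\epsilon_a < \epsilon_b < \epsilon_c$ with all three populations strictly positive, choosing integers $m$ so that $m/N$ brackets $t := (\epsilon_b - \epsilon_a)/(\epsilon_c - \epsilon_a)$ from above and below and letting $N \to \infty$ squeezes~(\ref{primaprova1}) and~(\ref{primaprova}) into the equality $\lambda_b = \lambda_a^{1-t} \lambda_c^{t}$, i.e. $P_b$ lies on the segment joining $P_a$ and $P_c$.

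I would then set up a dichotomy. Case~(i): $\mathrm{supp}(\rho) \subseteq \mathcal{H}_G$, so all positive populations sit at energy $0$; then $\rho \in \mathfrak{S}_H^{(G)}$ and we are done. Case~(ii): some $\lambda_k > 0$ with $\epsilon_k > 0$; by no population inversion $\lambda_0 > 0$. To extend positivity to the whole spectrum, suppose $\lambda_\ell = 0$ at some $\epsilon_\ell \neq 0$. If $\epsilon_\ell < \epsilon_k$, the collinearity lemma applied to $(0, \epsilon_\ell, \epsilon_k)$ gives $\lambda_\ell = \lambda_0^{1-t}\lambda_k^{t} > 0$, contradiction. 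If $\epsilon_\ell > \epsilon_k$, apply~(\ref{eq1}) with $I_N = \{n_k = N\}$ versus $J_N = \{m_\ell = 1,\, m_0 = N-1\}$: for $N > \epsilon_\ell/\epsilon_k$ the strict energy inequality holds, forcing $\lambda_k^N \leq \lambda_\ell \lambda_0^{N-1} = 0$, again a contradiction. An analogous large-$N$ trick handles degeneracies: if $\epsilon_j = \epsilon_k$ but $\lambda_j \neq \lambda_k$, applying~(\ref{eq1}) to $I_N = \{n_j = N\}$ versus $J_N = \{n_k = N-1,\, n_0 = 1\}$ (and its swap) yields $\lambda_j/\lambda_k \leq (\lambda_0/\lambda_k)^{1/N}$ in both directions, and the limit $N \to \infty$ forces $\lambda_j = \lambda_k$.

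Combining strict positivity of every population, equality across degenerate levels, and three-level collinearity against a common reference (e.g. $\epsilon_0 = 0$), I would conclude $\ln \lambda_j = -\beta \epsilon_j - \ln Z$ for all $j$, with $\beta \geq 0$ fixed by no population inversion and $Z = Z_\beta$ fixed by $\sum_j \lambda_j = 1$, identifying $\rho$ with $\omega_\beta \in \mathfrak{G}_H$. The main obstacle I anticipate is the bookkeeping for case~(ii), where zero or uneven populations must be ruled out by carefully tailored choices of $(I_N, J_N)$ in~(\ref{eq1}); the key observation that unlocks each of these obstructions is that complete passivity lets one push $N$ to infinity, which is exactly what turns the discrete ordering constraints into the continuous exponential relation defining the Gibbs form.
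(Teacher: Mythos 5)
Your route is genuinely different from the paper's: the paper encodes the full family of $N$-passivity constraints as inequalities $\vec{v}\cdot\vec{b}\geq 0$ for difference vectors $\vec{v}$ that become dense in the sphere orthogonal to $\mathbbm{1}=(1,\dots,1)$ as $N\to\infty$, and reads off the linear relation $\vec b = \beta\vec\epsilon + Z\mathbbm{1}$ in one stroke, with the ground states emerging as the $\beta=\infty$ degeneration. You instead build the same linear relation pointwise from the two-sided squeeze of Proposition~\ref{PREP1}, which is more elementary and makes the $(\epsilon,\ln\lambda)$ picture explicit. The reverse implication, the collinearity squeeze, the positivity bootstrap in case~(ii), and the final assembly are all sound.

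There is, however, one step that fails as written: the degeneracy-equalization recipe. With $I_N=\{n_j=N\}$ against $J_N=\{n_k=N-1,\,n_0=1\}$ the energy gap is $N\epsilon_j-(N-1)\epsilon_k-\epsilon_0=\epsilon_j$, so the hypothesis of~(\ref{eq1}) is met only when the degenerate level has $\epsilon_j=\epsilon_k>0$; for two degenerate \emph{ground} levels the two population sets have equal energy and~(\ref{eq1}) yields nothing. This is not a removable technicality: it is exactly where Gibbs and ground states part ways, since a CP state supported in ${\cal H}_G$ may carry arbitrary unequal ground populations, whereas in your case~(ii) the Gibbs conclusion requires all ground populations to equal $Z^{-1}$. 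The fix stays within your method but needs a different pair: for $\epsilon_i=\epsilon_j=0$ and some $\epsilon_k>0$ with $\lambda_k>0$, take $I_N=\{n_j=N-1,\,n_k=1\}$ versus $J_N=\{m_i=N\}$; the energy hypothesis holds because $\epsilon_k>0$, so $(\lambda_j/\lambda_i)^{N-1}\leq\lambda_i/\lambda_k$, and letting $N\to\infty$ forces $\lambda_j\leq\lambda_i$, hence equality by swapping $i$ and $j$. A second, minor point: in the sub-case $\lambda_\ell=0$ with $\epsilon_\ell<\epsilon_k$ you invoke the collinearity \emph{equality}, whose hypothesis (all three populations strictly positive) is not met; you should cite only the lower-bound half~(\ref{primaprova}), which requires no positivity of $\lambda_\ell$ and already yields $\lambda_\ell>0$, the desired contradiction.
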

\begin{proof}
	Since CP states, as well as the elements of ${\mathfrak{G}}_H$ and ${\mathfrak{S}}_H^{(G)}$ ,
	are  diagonal in the energy eigenbasis, we can restrict the analysis to this
	special case assuming that our $\rho$ has the form~(\ref{RHOpas}), i.e. 
	\begin{eqnarray} \label{RHOpas1} \rho=\sum_{j=0}^{d-1} \lambda_j |\epsilon_j\rangle\langle \epsilon_j|\;.\end{eqnarray}
	Consider then condition {\it ii)} that enforces  $N$-order passivity. 
	Introducing  the positive quantities $b_i := -\ln \lambda_i$  from 
	Eq.~(\ref{eq1}) it follows that $\rho$ is CP if  and only if, for all $N$,  
	and for all allowed choices of the sets $I_N:= \{ n_1, n_2, \cdots, n_d\}$, $J_N:=  \{ m_1,m_2, \cdots, m_d\}$, 
	we have    
	\begin{equation}
	\label{eq2}
	\sum_{i=0}^d  {n_i} \epsilon_i >   \sum_{j=0}^d  {m_j}  \epsilon_j \quad \Longrightarrow \quad  
	\sum_{i=0}^d {n_i} b_i    \geq    \sum_{j=0}^d  {m_j}  b_j \;,  
	\end{equation} 
	where  the regularization (\ref{RULE1}) translates into 
	\begin{eqnarray} \label{RULE2} 
	(n=0) (b=\infty) =0\;,\end{eqnarray} 
	(notice however that we do not need to enforce an analogous regularization for opposite situation for the product 
	$(n=\infty) (\epsilon=0)$ which we leave explicitly  indeterminate).   
	If we interpret $b_i$ and $\epsilon_i$ as component of vectors in $\mathbb{R}^d$, Eq.~(\ref{eq2}) can be reframed as
	\begin{equation}
	\label{eq3}
	\vec{I}_N \cdot \vec{\epsilon} > \vec{J}_N \cdot \vec{\epsilon}   \quad   \Longrightarrow \quad \vec{I}_N \cdot \vec{b} \geq  \vec{J}_N \cdot \vec{b}\;,
	\end{equation}
	with  $\vec{I}_N$, $\vec{J}_N\in \mathbb{R}^d$ obtained by promoting the elements of   $I_N$ and $J_N$  into vectorial components respectively, i.e. 
	$\vec{I}_N:=(n_0,n_1,\cdots, n_{d-1})$ and~$\vec{J}_N := (m_0,m_1, \cdots, m_{d-1})$.
	Calling then $\mathbbm{1}$ the vector  $(1,1,\dots ,1)$ of $\mathbb{R}^d$, by construction we have that $\vec{I}_N \cdot \mathbbm{1} = \vec{J}_N \cdot \mathbbm{1} = N$, 
	implying that the vector $\vec{I}_N - \vec{J}_N$ is orthogonal to $\mathbbm{1}$, i.e.   $(\vec{I}_N - \vec{J}_N) \cdot \mathbbm{1} = 0$. Accordingly
	Eq.~(\ref{eq3}) rewrites 
	\begin{equation}
	\label{eq4}
	\vec{v}_N\cdot \vec{\epsilon} > 0 \quad   \Longrightarrow \quad \vec{v}_N\cdot \vec{b} \geq 0 \;, \hspace{1cm} \forall \vec{v}_N \in \mathcal{V}_N\;,
	\end{equation}
	where $\mathcal{V}_N := \left\{\tfrac{\vec{I}_N - \vec{J}_N}{ \norm{ \vec{I}_N - \vec{J}_N}  }\right\}$ is the subset  of $\mathbb{R}^d$
	of the allowed (normalized) vectors. For $N \to \infty$,  $\mathcal{V}_N$ tends to a limit subset  $\mathcal{V}_\infty := \bigcap_{N \ge 1} \bigcup_{j \geq N} \mathcal{V}_j$,
	and the CP requirement can be expressed as 
	\begin{equation}
	\label{eq5}
	\vec{v} \cdot \vec{\epsilon} > 0   \quad   \Longrightarrow \quad \vec{v} \cdot \vec{b} \geq  0 \;, \hspace{1cm} \forall \vec{v} \in \mathcal{V}_\infty\;.
	\end{equation}
	Since $\mathcal{V}_\infty$ is dense in the subspace of the unitary sphere which is orthogonal to $\mathbbm{1}$, Eq.~(\ref{eq5}) is possible only if, once projected into that subspace, the vectors $\vec{\epsilon}$ and $\vec{b}$ are linearly dependent by a non-negative proportionality constant $\beta\geq 0$. Projecting in the subspace perpendicular to $\mathbbm{1}$ is equivalent to add $Z\mathbbm{1}$ for some real constant $Z$. Therefore we must have  
	\begin{eqnarray}  \label{DDF} 
	\vec{b} = \beta \vec{\epsilon} + Z\mathbbm{1}\;, \end{eqnarray}   which expanded in components leads to
	\begin{eqnarray} 
	\lambda_i =  e^{-\beta\epsilon_i} /Z\;, \label{COND1} 
	\end{eqnarray} 
	which formally coincides with the request to have $\rho$ in the Gibbs form~(\ref{GIBBS}) (the value of 
	$Z$ being forced to coincide with $Z_\beta$ by normalization). 
	The only exception to (\ref{COND1}) occurs in the limiting case where 
	the identity (\ref{DDF}) is fulfilled with 
	an infinite value of $\beta$. Under this circumstance
	for all  $\epsilon_i$ which are strictly larger than zero (i.e. for all $i\geq {d_0}$)
	we have that 
	$b_i$ diverges forcing the associated $\lambda_i$ to be exactly equal to zero. 
	On the other hand when $\epsilon_i=0$ (i.e. for all $i \in \{ 0, 1,\cdots, {d_0}\}$)  the form 
	$(\beta = \infty) (\epsilon =0)$ is indeterminate -- see comment below Eq.~(\ref{RULE2}) --  and  the constraint~(\ref{COND1}) needs not to be applied leaving us the
	freedom to chose the associated values of $\lambda_i$ as we wish. 
	This leads us to identify the ground state elements as the only other possible choices for being CP, concluding the proof of Eq.~(\ref{GREATRES}). 
\end{proof}

\begin{corol}
	\label{precolor}
	Gibbs states are the only CP density matrices of the system which are $1$-structurally stable, i.e.
	${\mathfrak G}_H={\bar{\mathfrak{P}}_H^{(\infty,1)}}$.
\end{corol}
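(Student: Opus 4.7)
The plan is to combine Proposition~\ref{pre1} (which yields the identity~(\ref{GREATRES})) with the explicit inclusion identity~(\ref{EXPINCLU}), and then check which CP states survive the additional $1$-structural stability requirement. Concretely, I would first take the intersection of~(\ref{EXPINCLU}) over all $N\geq 1$, obtaining
\begin{equation}
\bar{\mathfrak P}_H^{(\infty,1)} = \bigcap_{N\geq 1} \bar{\mathfrak P}_H^{(N,1)} = \Big(\bigcap_{N\geq 1} {\mathfrak P}_H^{(N)}\Big) \cap \bar{\mathfrak P}_H^{(1,1)} = {\mathfrak P}_H^{(\infty)} \cap \bar{\mathfrak P}_H^{(1,1)}\;,
\end{equation}
and then invoking Proposition~\ref{pre1} to substitute ${\mathfrak P}_H^{(\infty)} = {\mathfrak G}_H \cup {\mathfrak S}_H^{(G)}$, reducing the corollary to the two checks
\begin{equation}
{\mathfrak G}_H \subseteq \bar{\mathfrak P}_H^{(1,1)}\;, \qquad {\mathfrak S}_H^{(G)} \cap \bar{\mathfrak P}_H^{(1,1)} \subseteq {\mathfrak G}_H\;.
\end{equation}

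The first inclusion is immediate from the explicit form~(\ref{GIBBS}): since the populations $\lambda_i = e^{-\beta\epsilon_i}/Z_\beta$ depend only on the eigenvalue $\epsilon_i$, whenever $\epsilon_i = \epsilon_j$ we automatically have $\lambda_i = \lambda_j$, so that condition~(\ref{EQUAL}) is satisfied and $\omega_\beta \in \bar{\mathfrak P}_H^{(1,1)}$ for every $\beta \in [0,\infty]$.

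For the second inclusion, I would observe that any $\rho \in {\mathfrak S}_H^{(G)}$ has all its population concentrated on the ground eigenspace ${\cal H}_G$, i.e.\ $\lambda_i = 0$ for every $i \geq d_0$, while for $i \in \{0,\dots,d_0-1\}$ the energies coincide ($\epsilon_i = 0$). If $d_0 = 1$ the state is trivially $|\epsilon_0\rangle\langle\epsilon_0| = \omega_\infty \in {\mathfrak G}_H$ and the conclusion is immediate. If $d_0 > 1$, imposing~(\ref{EQUAL}) on the degenerate ground manifold forces $\lambda_0 = \lambda_1 = \dots = \lambda_{d_0-1}$, which together with normalization uniquely identifies $\rho$ with the uniform mixture $\Pi_G/d_0$. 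By~(\ref{LIMinf}) this is precisely $\omega_\infty$, hence an element of ${\mathfrak G}_H$, closing the argument.

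The only step that requires any care is making sure the $\beta=\infty$ Gibbs state is counted as an element of ${\mathfrak G}_H$ (i.e.\ that the family~(\ref{GIBBS}) is understood in its closure sense, as already made explicit via Eq.~(\ref{LIMinf})); otherwise the whole argument is a one-line bookkeeping following Proposition~\ref{pre1}, and no new quantitative estimates are needed. I do not anticipate any genuine obstacle beyond this conventional remark.
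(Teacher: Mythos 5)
Your proposal is correct and follows essentially the same route as the paper's own proof: both reduce the claim to Proposition~\ref{pre1} (the characterisation ${\mathfrak P}_H^{(\infty)}={\mathfrak G}_H\cup{\mathfrak S}_H^{(G)}$) and then observe that the Gibbs form $\lambda_i=e^{-\beta\epsilon_i}/Z$ automatically satisfies~(\ref{EQUAL}) while a ground state with $d_0>1$ satisfies it only if it is the uniform mixture $\Pi_G/d_0=\omega_\infty$. Your version merely makes explicit, via~(\ref{EXPINCLU}), the bookkeeping that the paper leaves implicit.
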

\begin{proof} 
	According to {\bf Proposition~\ref{pre1}} the only CP states are the Gibbs and the ground states elements.
	For ${d_0}>1$ however  ground states need not to fulfil the constraint~(\ref{EQUAL}) 
	required for being $1$-structural stable,
	on the contrary Gibbs density matrices have $\lambda_i =  e^{-\beta\epsilon_i} /Z$ which naturally implement
	such requirement.	\end{proof}
More generally the Gibbs states verify also the stronger
requirement (\ref{eq11}) for any value of $k$, hence they are also $k$-structurally stable at all orders:
\begin{corol}
	\label{precolor1}
	Gibbs states are the only CP density matrices of the system which are structurally stable at all
	orders, i.e. 
	${\mathfrak G}_H={\bar{\mathfrak{P}}_H^{(\infty,\infty)}}$.
\end{corol}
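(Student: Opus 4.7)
The plan is to establish the equality ${\mathfrak G}_H={\bar{\mathfrak{P}}_H^{(\infty,\infty)}}$ by proving both inclusions, leveraging Corollary \ref{precolor} (which already pins down ${\bar{\mathfrak{P}}_H^{(\infty,1)}}$ as the Gibbs set) together with the hierarchy~(\ref{ORDE20}).

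For the inclusion ${\bar{\mathfrak{P}}_H^{(\infty,\infty)}} \subseteq {\mathfrak G}_H$, the work is essentially done for us. The hierarchy~(\ref{ORDE20}) gives $\bar{\mathfrak P}_H^{(N,N)}\subseteq \bar{\mathfrak P}_H^{(N,1)}$ for every $N\geq 1$, so intersecting both sides over $N$ yields ${\bar{\mathfrak{P}}_H^{(\infty,\infty)}} \subseteq {\bar{\mathfrak{P}}_H^{(\infty,1)}}$, and the right-hand side coincides with ${\mathfrak G}_H$ by Corollary \ref{precolor}. No further argument is needed.

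For the reverse inclusion ${\mathfrak G}_H \subseteq {\bar{\mathfrak{P}}_H^{(\infty,\infty)}}$, the strategy is a direct verification on the explicit populations $\lambda_i = e^{-\beta\epsilon_i}/Z_\beta$ of an arbitrary element $\omega_\beta\in{\mathfrak G}_H$. The key observation is that for any non-negative integer set $\{n_i\}$ summing to $k$ one has
\begin{equation}
\prod_{i=0}^{d-1}\lambda_i^{n_i} = Z_\beta^{-k}\, e^{-\beta \sum_i n_i \epsilon_i},
\end{equation}
which depends on the $\{n_i\}$ only through the two scalars $\sum_i n_i=k$ and $\sum_i n_i \epsilon_i$. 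Therefore, for any pair $I_k,J_k$ of non-negative integer populations summing to $k$ and satisfying $\sum_i n_i \epsilon_i = \sum_j m_j \epsilon_j$, the products $\prod_i \lambda_i^{n_i}$ and $\prod_j \lambda_j^{m_j}$ are automatically equal, which is precisely the $k$-structural stability condition~(\ref{eq11}). Since this holds at every order $k$, and since by Corollary \ref{precolor} the Gibbs state is already known to be $N$-passive for all $N$ (so condition~(\ref{eq1}) is in place), one concludes $\omega_\beta \in \bar{\mathfrak P}_H^{(N,N)}$ for every $N$, hence $\omega_\beta \in {\bar{\mathfrak{P}}_H^{(\infty,\infty)}}$.

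I do not anticipate any substantive obstacle: the first inclusion is a one-line consequence of an already-proven hierarchy, while the second is reduced to an algebraic identity in the exponent that automatically respects both the energy and cardinality constraints defining structural stability. The only minor care point is the convention $0^0=1$ in~(\ref{RULE1}), which is harmless here because Gibbs populations are strictly positive whenever $\beta$ is finite, and for the degenerate $\beta\to\infty$ ground-state limit the identity collapses to the uniform ground mixture $\Pi_G/d_0$, for which the relevant products involve only the $d_0$ equal ground populations and the identity~(\ref{eq11}) continues to hold trivially.
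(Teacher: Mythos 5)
Your proof is correct and follows essentially the same route as the paper: the paper likewise obtains the inclusion $\bar{\mathfrak{P}}_H^{(\infty,\infty)}\subseteq\mathfrak{G}_H$ from Corollary~\ref{precolor} via the hierarchy, and disposes of the reverse inclusion with the one-line remark that the Gibbs populations $\lambda_i=e^{-\beta\epsilon_i}/Z$ automatically verify~(\ref{eq11}) at every order $k$, which is exactly your exponent computation. Your extra care with the $0^0$ convention and the $\beta\to\infty$ limit is a harmless (and slightly more thorough) addition.
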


\section{Majorization argument} \label{maj}

Here we present an explicit proof of the majorization argument 
used in the proof of {\bf Proposition \ref{prepFIN}}, i.e. we show that 
if  
\begin{eqnarray} \lambda_0 < \hat{\lambda}_0\;, \label{FF} \end{eqnarray}  then   there must exist  must exist $\epsilon_{c}> \epsilon_b > 0$ such that 
\begin{eqnarray} \lambda_b \geq  \hat{\lambda}_b\;, \qquad \lambda_{c} \leq  \hat{\lambda}_{c}\;, \label{GG} 
\end{eqnarray} 
where
$\hat{\lambda}_j= Z_{\beta(\rho)}^{-1} e^{ - \beta(\rho) \epsilon_j}$ are the eigenvalues of 
the Gibbs state $\omega_{\beta(\rho)}$.

For the sake of completeness we briefly recall that given 
two probability sets $P:=\{ p_j\}_{j=1,\cdots,d}$ and $Q:=\{ q_j\}_{j=1,\cdots,d}$ whose elements 
are labelled in non-decreasing order, i.e.  $p_{j} \geq p_{j+1}$, $q_{j} \geq q_{j+1}$ for all $j\in \{ 1,\cdots, d-1\}$,
one say that $Q$ majorizes $P$ when~\cite{MAJ0,MAJ} 
\begin{eqnarray} \label{major} 
\sum_{j=1}^k q_j \geq \sum_{j=1}^k p_j \;,\qquad  \forall k\leq d-1\;,\end{eqnarray} 
the inequality being always saturated with an identity for $k=d$ due to normalization conditions. 
Furthermore if there exists at least one value  $k \leq d-1$, for which (\ref{major}) is fulfilled with a strict inequality
we say that $Q$ strictly majorizes $P$. It turns out that majorization induces an ordering for the entropy of the two sets,
so that whenever $Q$ majorizes $P$, then the entropy of the former is always smaller than or equal to the entropy of the
latter, the inequality being strict  if the strict  majorization condition applies. 
It is hence clear that if the two probability sets have identitical entropy, then there neither
$Q$ can strictly majorize $P$, nor $Q$ can strictly majorize $P$.

Taking into account the above facts, let us now go back to the proof of the property (\ref{GG}). 
The existence of $\epsilon_b > 0$ fulfilling (\ref{GG}) can be established  from (\ref{FF}) and from the
fact that
$\rho$ and $\omega_{\beta(\rho)}$ have both trace one. 
We can further observe that one can select as  $\epsilon_b$
a level of $H$ which has not the maximum energy value; indeed, if by contradiction for all $\epsilon_j$ smaller than the maximum 
energy value of $H$,  
from (\ref{FF}) it would follow that
$\rho$ would be strictly majorized by $\omega(\rho)$, which is impossible
as the two states have the same entropy. 
Now take as $\epsilon_b$ the one which has the smallest energy.
Accordingly  for all $\epsilon_i < \epsilon_b$ we have $\lambda_i \leq  \hat{\lambda}_i$ and hence,
\begin{eqnarray}  \label{COND111} 
\sum_{j=0}^{b-1} \lambda_i < \sum_{j=0}^{b-1} \hat{\lambda}_i \;, \end{eqnarray} 
the strict inequality being a consequence of (\ref{FF}). 
Therefore  there must  exist
$c'\in\{  b,\cdots, d-1\}$ such that 
\begin{eqnarray}  \label{COND111adfa} 
\sum_{j=0}^{c'} \lambda_i > \sum_{j=0}^{c'} \hat{\lambda}_i \;, \end{eqnarray} 
otherwise  $\{\lambda_i\}_i$ would strictly majorize $\{ \hat{\lambda}_i\}_i$ and the two could not have the same entropy. 
Observe then that the normalization conditions impose that 
\begin{eqnarray}  \label{COND111asdfd} 
\sum_{j=c+1}^{d} \lambda_i \leq \sum_{j=c+1}^{d} \hat{\lambda}_i \;, \end{eqnarray} 
which can only be satisfied if there exist $\epsilon_{c}\geq \epsilon_{c'+1}>\epsilon_b$ such that  $\lambda_{c} \leq  \hat{\lambda}_{c}$, 
hence proving the thesis.

\end{document}